\newtheorem{definition}{Definition}
\newtheorem{theorem}{Theorem}
\newtheorem{lemma}{Lemma}
\newtheorem{remark}{Remark}
\newtheorem{observation}{Observation}
\newtheorem{corollary}{Corollary}
\newtheorem{property}{Property}
\newtheorem{proposition}{Proposition}
\newcommand{\eps}{\varepsilon}
\newcommand{\E}{\mathsf{E}}
\newcommand{\ED}{\mathsf{ED}}
\newcommand{\TV}{\mathsf{TV}}
\newcommand{\C}{\mathcal{C}}
\newcommand{\var}{\mathsf{Var}}
\renewcommand{\Pr}{\mathsf{Pr}}
\newcommand{\CC}{\mathsf{CC}}
\newcommand{\abs}[1]{\left| #1 \right|}
\newcommand{\gm}{{$k$-GAP-MAJ}}
\newcommand{\bit}{{$k$-APPROX-SUM}}
\newcommand{\bitG}{{$k$-APPROX-SUM$_{f,\tau}$}}
\newcommand{\bitD}{{$k$-APPROX-SUM$_{\text{$2$-DISJ},\tau}$}}
\newcommand{\gap}{{$2$-GAP-ORT}}
\newcommand{\DISJ}{{$k$-DISJ}}
\newcommand{\TWODISJ}{{$2$-DISJ}}
\newcommand{\XOR}{{$k$-XOR}}
\newcommand{\thresh}{{$k$-BTX}}
\newcommand{\quan}{{QUAN}}
\newcommand{\entropy}{{ENTROPY}}
\newcommand{\GHD}{{GHD}}
\newcommand{\guess}{{$k$-GUESS}}
\renewcommand{\paragraph}[1]{\medskip\noindent{\bf #1} }
\newcommand{\qinomit}[1]{}
\newcommand{\ifconf}[1]{}
\newcommand{\iffull}[1]{#1}
\newcommand{\remove}[1]{}
\newcommand{\poly}{{\mathrm{poly}}}
\newtheorem{fact}{Fact}
\newcommand{\qedsymb}{\hfill{\rule{2mm}{2mm}}}
\def\FullBox{\hbox{\vrule width 8pt height 8pt depth 0pt}}
\def\qed{\ifmmode\qquad\FullBox\else{\unskip\nobreak\hfil
\penalty50\hskip1em\null\nobreak\hfil\FullBox
\parfillskip=0pt\finalhyphendemerits=0\endgraf}\fi}
\newenvironment{proof}{\begin{trivlist}
\item[\hspace{\labelsep}{\bf\noindent Proof: }]
}{\qedsymb\end{trivlist}}
\begin{document}
\title{Tight Bounds for Distributed Functional Monitoring}

\author{
David P. Woodruff\\IBM Almaden\\dpwoodru@us.ibm.com 
\and Qin Zhang\thanks{Most of this work was done while Qin Zhang was a postdoc in MADALGO (Center for Massive Data Algorithmics - a Center of the Danish National Research Foundation), Aarhus University.}\\IBM  Almaden\\qinzhang@cse.ust.hk}

\date{}

\maketitle
\begin{abstract}
We resolve several fundamental questions in the area of distributed functional monitoring,
initiated by Cormode, Muthukrishnan, and Yi (SODA, 2008), and receiving 
recent attention. In this model there are $k$ sites each
tracking their input streams and communicating with a central coordinator. The coordinator's
task is to continuously maintain an approximate output to a function 
computed over the union of the $k$ streams. 
The goal is to minimize the number of bits communicated.

Let the $p$-th frequency moment be defined as $F_p = \sum_i f_i^p$, where $f_i$ is the frequency of element $i$. 
We show the randomized communication complexity of estimating 
the number of distinct elements (that is, $F_0$) up to a $1+\eps$ factor is $\tilde{\Omega}(k/\eps^2)$, 
improving upon the previous $\Omega(k + 1/\eps^2)$ bound and matching known upper bounds up to a logarithmic factor. 
For $F_p$, $p > 1$, we improve
the previous $\Omega(k + 1/\eps^2)$ bits communication bound to ${\Omega}(k^{p-1}/\eps^2)$. 
We obtain similar improvements for heavy hitters, empirical entropy, and other problems. 
Our lower bounds
are the first of any kind in distributed functional monitoring to depend on the product
of $k$ and $1/\eps^2$. Moreover, the lower bounds are for the static version of the distributed functional 
monitoring model  where the coordinator only needs to compute the function at the time when all $k$ input 
streams end; surprisingly they almost match what is achievable in the (dynamic version of) distributed 
functional monitoring model where the coordinator needs to keep track of the function continuously at any time step.
We also show that we can estimate $F_p$, for any $p > 1$, using $\tilde{O}(k^{p-1}\poly(\eps^{-1}))$ bits of
communication. This drastically improves upon the previous $\tilde{O}(k^{2p+1}N^{1-2/p} \poly(\eps^{-1}))$ bits
bound of Cormode, Muthukrishnan, and Yi for general $p$, and their
$\tilde{O}(k^2/\eps + k^{1.5}/\eps^3)$ bits bound for $p = 2$. For $p = 2$, our bound resolves
their main open question. 

Our lower bounds are based on new direct sum theorems for approximate majority, and 
yield improvements to classical problems in the standard data stream
model. First, we improve the known lower bound for estimating $F_p, p > 2,$ in $t$ passes from
$\tilde{\Omega}(n^{1-2/p}/(\eps^{2/p} t))$ to ${\Omega}(n^{1-2/p}/(\eps^{4/p} t))$, giving the first bound
that matches what we expect when $p = 2$ for any constant number of passes. 
Second, we give the first lower bound for estimating
$F_0$ in $t$ passes with $\Omega(1/(\eps^2 t))$ bits of space that does not use the hardness
of the gap-hamming problem. 
\end{abstract}


\section{Introduction}
Recent applications in sensor networks and distributed systems have motivated the {\it distributed functional monitoring} model, initiated by Cormode, Muthukrishnan, and Yi \cite{CMY11}. In this model there are $k$ {\it sites} and a single central {\it coordinator}. 
Each site $S_i \ (i \in [k])$ receives a stream of data $A_i(t)$ for timesteps $t = 1, 2, \ldots$, and the coordinator wants to keep track of a function $f$ that is defined over the multiset union of the $k$ data streams at each time $t$. For example, the function $f$ could be the number of distinct elements in the union of the $k$ streams. We assume that there is a two-way communication channel between each site and the coordinator so that the sites can communicate with the coordinator. The goal is to minimize the total amount of communication between the sites and the coordinator so that the coordinator can approximately maintain $f(A_1(t), \ldots, A_k(t))$ at any time $t$. Minimizing the total communication is motivated by power constraints in sensor networks, since communication typically uses a power-hungry radio
\cite{eghk99}; and also by network bandwidth constraints in distributed systems.
There is a large body of work on monitoring problems in this model, including maintaining a random sample \cite{CMYZ10,TW11}, estimating frequency moments \cite{CG05,CMY11}, finding the heavy hitters \cite{BO03,KCR06,MSDO05,YZ09},
approximating the quantiles \cite{CGMR05,YZ09,HYZ11}, and estimating the entropy \cite{ABC09}. 

We can think of the distributed functional monitoring model as follows. Each of the $k$ sites holds an $N$-dimentional vector where $N$ is the size of the universe. An update to 
a coordinate $j$ on site $S_i$ causes $v^i_j$ to increase by $1$. The goal is to estimate a statistic of
$v = \sum_{i=1}^k v^i$, such as the $p$-th frequency moment $F_p = \|v\|_p^p$, the number of distinct elements
$F_0 = |\textrm{support}(v)|$, and the empirical entropy $H = \sum_i \frac{v_i}{\|v\|_1} \log \frac{\|v\|_1}{v_i}$. 
This is the standard {\it insertion-only} model. For many of these problems, with the 
exception of the empirical entropy, 
there are strong lower bounds (e.g., $\Omega(N)$) if
allowing updates to coordinates that cause $v^i_j$ to decrease \cite{ABC09}. The latter
is called the {\it update} model. 
Thus, except for entropy, we follow previous work and consider the insertion-only model.

To prove lower bounds, we consider the static version of the distributed functional monitoring model, where the coordinator only needs to compute the function at the time when all $k$ input streams end. It is clear that a lower bound for the static case is also a lower bound for the dynamic case in which the coordinator has to keep track of the function at any point in time. The static version of the distributed functional monitoring model is closely related to the multiparty {\it number-in-hand} communication model, where we again have $k$ sites each holding an $N$-dimensional vector $v^i $, and they want to jointly compute a function defined on the $k$ input vectors. It is easy to see that these two models are essentially the same since in the former, if site $S_i$ would like to send a message to $S_j$, it can always send the message first to the coordinator and then the coordinator can forward the message to $S_j$. Doing this will only increase the total amount of communication by a factor of two. Therefore, we do not distinguish between these two models in this paper.

There are two variants of the multiparty number-in-hand communication model we will consider: the {\em blackboard model}, 
in which each message a site sends is received by all other sites, i.e., it is broadcast, 
and the {\em message-passing} model, in which each message is between the coordinator and a 
specific site. 

%

Despite the large body of work in the distributed functional monitoring model, the complexity of basic problems 
is not well understood. For example, for estimating $F_0$ up to 
a $(1+\eps)$-factor, 
the best upper bound is $\tilde{O}(k/\eps^2)$~\footnote{
We use $\tilde{O}(f)$ to denote a function of the form $f \cdot \log^{O(1)}(Nk/\eps)$.} \cite{CMY11} (all communication and information bounds in this paper, if not otherwise stated, are in terms of bits), while the only
known lower bound is $\Omega(k + 1/\eps^2)$. The dependence on $\eps$ in the lower bound is
not very insightful, as the $\Omega(1/\eps^2)$ bound follows just by considering two sites 
\cite{ABC09,CR:11}. 
The real question is whether the $k$ and $1/\eps^2$ factors should multiply. Even more 
embarrassingly, for the frequency moments $F_p$, $p > 2$, the 
known algorithms use communication $\tilde{O}(k^{2p+1} N^{1-2/p} \poly(1/\eps))$, 
while the only known lower bound is $\Omega(k + 1/\eps^2)$ \cite{ABC09,CR:11}.
Even for $p = 2$, the best known upper bound is $\tilde{O}(k^2/\eps + k^{1.5}/\eps^3)$ \cite{CMY11},
and the authors' main open question in their paper is ``It remains to close the gap
in the $F_2$ case: can a better lower bound than $\Omega(k)$ be shown, or do there exist
$\tilde{O}(k \cdot \poly(1/\eps))$ solutions?''

{\bf Our Results:}
We significantly improve the previous communication bounds for approximating the frequency moments,
entropy, heavy hitters, and quantiles in the distributed functional monitoring model. 
In many cases our bounds are optimal. 
Our results are summarized in Table \ref{tab:results}, where they are compared with previous bounds. 
\begin{table*}[t!]
\centering
\scalebox{0.93}{
\begin{tabular}{|c|cc|cc|}
\hline
& Previous work& This paper & Previous work& This paper\\
\hline
Problem & LB  &  LB  (all static) & UB & UB \\
\hline
$F_0$ & $\tilde{\Omega}(k)$ \cite{CMY11} & $\mathbf{\tilde{\Omega}(k/\eps^2)}$ & $\tilde{O}(k/\eps^2)$ \cite{CMY11} & -- \\ 
$F_2$ & $\Omega(k)$ \cite{CMY11} & $\mathbf{{\Omega}(k/\eps^2)}$ (BB) & $\tilde{O}(k^2/\eps + k^{1.5}/\eps^3)$ \cite{CMY11} & $\mathbf{\tilde{O}(\frac{k}{\mathrm{poly}(\eps)})}$ \\
$F_p\ (p > 1)$ & $\Omega(k+1/\eps^2)$ \cite{ABC09,CR:11} & $\mathbf{{\Omega}(k^{p-1}/\eps^2)}$ (BB) & $\tilde{O}(\frac{p}{\eps^{1+2/p}}k^{2p+1}N^{1-2/p})$ \cite{CMY11} & $\mathbf{\tilde{O}(\frac{k^{p-1}} {\mathrm{poly}(\eps)})}$  \\
All-quantile & $\tilde{\Omega}(\min\{\frac{\sqrt{k}}{\eps}, \frac{1}{\eps^2}\})$ \cite{HYZ11} & $\mathbf{{\Omega}(\min\{\frac{\sqrt{k}}{\eps}, \frac{1}{\eps^2}\})}$ (BB) & $\tilde{O}(\min\{\frac{\sqrt{k}}{\eps}, \frac{1}{\eps^2}\})$ \cite{HYZ11} & -- \\
Heavy Hitters & $\tilde{\Omega}(\min\{\frac{\sqrt{k}}{\eps}, \frac{1}{\eps^2}\})$ \cite{HYZ11} & $\mathbf{{\Omega}(\min\{\frac{\sqrt{k}}{\eps}, \frac{1}{\eps^2}\})}$ (BB) & $\tilde{O}(\min\{\frac{\sqrt{k}}{\eps}, \frac{1}{\eps^2}\})$ \cite{HYZ11} & -- \\
Entropy & $\tilde{\Omega}(1/\sqrt{\eps})$ \cite{ABC09} & $\mathbf{{\Omega}(k/\eps^2)}$ (BB) & $\tilde{O}(\frac{k}{\eps^3})$ \cite{ABC09}, $\tilde{O}(\frac{k}{\eps^2})$  (static) \cite{HNO08} & -- \\
$\ell_p\ (p \in (0,2])$ & -- & $\mathbf{{\Omega}(k/\eps^2)}$ (BB) & $\tilde{O}(k/\eps^2)$ (static) \cite{knpw11} & -- \\
\hline
\end{tabular}}
\caption{UB denotes upper bound; LB denotes lower bound; BB denotes blackboard model. $N$ denotes the universe size. All bounds are for randomized algorithms. We assume all bounds hold in the dynamic setting by default, and will state explicitly if they hold in the static setting. For lower bounds we assume the message-passing model by default, and state explicitly if they also hold in the blackboard model.} 
\label{tab:results}
\end{table*}
We have three main results, each introducing a new technique:
\begin{enumerate}
\item We show that for estimating $F_0$ in the message-passing model, 
$\tilde{\Omega}(k/\eps^2)$ communication is required, matching
an upper bound of \cite{CMY11} up to a polylogarithmic factor. 
Our lower bound holds in the static model in which the $k$ sites just need to approximate $F_0$ once on their inputs. 
\item We show that we can estimate $F_p$, for any $p > 1$, using $\tilde{O}(k^{p-1}\poly(\eps^{-1}))$
communication in the message-passing model\footnote{We assume the total number of updates is $\poly(N)$.}. This drastically improves upon the previous bound $\tilde{O}(k^{2p+1}N^{1-2/p} \poly(\eps^{-1}))$
 of \cite{CMY11}. In particular, setting $p = 2$, we resolve the main open question of \cite{CMY11}.
\item We show ${\Omega}(k^{p-1}/\eps^2)$ communication is necessary for approximating $F_p\ (p > 1)$
in the blackboard model, significantly
improving the prior $\Omega(k + 1/\eps^2)$ bound. As with our lower bound for $F_0$, these are 
the first lower bounds
which depend on the product of $k$ and $1/\eps$. As with $F_0$, our lower bound holds in the
static model in which the sites just approximate $F_p$ once. 
\end{enumerate}
Our other results in Table \ref{tab:results} are explained in the body of the paper, and use similar
techniques. 

We would like to mention that after the conference version of our paper, our results found applications in proving a space lower bound at each site for tracking heavy hitters in the functional monitoring model \cite{HYZ12}, and a communication complexity lower bound of computing $\eps$-approximations of range spaces in $\mathbb{R}^2$ in the message-passing model~\cite{huang-communication}.

{\bf Our Techniques:}
{\it Lower Bound for $F_0$:} 
For illustration, suppose $k = 1/\eps^2$. 
There are $1/\eps^2$ sites each holding a random independent bit.
Their task is to approximate the sum of the $k$ bits up to an additive error $1/\eps$. Call this problem \bit.\footnote{In the conference version of this paper we introduced a problem called \gm, in which sites need to
decide if at least $1/(2\eps^2) + 1/\eps$ of the bits are $1$, or at most $1/(2\eps^2) - 1/\eps$ of the bits are $1$. We instead use \bit\ here since we feel it is easier to work with: This problem is stronger than \gm\ thus is easier to lower bound, and it suffices for our purpose. \gm\ will be introduced and used in Section~\ref{sec:quantile} for heavy-hitters and quantiles.}
We show any correct protocol
must reveal $\Omega(1/\eps^2)$ bits of information about the sites' inputs. We ``compose'' this
with $2$-party disjointness {(\TWODISJ)} \cite{Raz90}, in which each party has a bitstring of length $1/\eps^2$ 
and either the strings have disjoint support (the solution is $0$) or there is a single coordinate 
which is $1$ in both strings (the solution is $1$). 
Let $\tau$ be the hard distribution for {\TWODISJ}, shown to require $\Omega(1/\eps^2)$ bits of communication to
solve \cite{Raz90}. Suppose the coordinator and each site share an instance of {\TWODISJ}
in which the solution
to {\TWODISJ} is a random bit, which is the site's effective input to \bit. 
The coordinator has the same input for each of the $1/\eps^2$
instances, while the sites have an independent input drawn 
from $\tau$ conditioned on the coordinator's input and 
output bit determined by \bit. 
The inputs are chosen so that if the output of {\TWODISJ} is $1$, 
then $F_0$ increases by $1$, 
otherwise it remains the same. This is not entirely accurate, but it illustrates the main
idea. Now, the key is that by the rectangle property of $k$-party communication protocols, the $1/\eps^2$ different output bits are independent conditioned on the transcript. Thus if a protocol does not reveal $\Omega(1/\eps^2)$ bits of information about these output bits, by an anti-concentration theorem we can show that the protocol cannot succeed with large probability. Finally, since a
$(1+\eps)$-approximation to $F_0$ can decide \bit, and since any correct protocol for \bit\ 
must reveal $\Omega(1/\eps^2)$ bits of information, the protocol must solve $\Omega(1/\eps^2)$ instances of {\TWODISJ}, 
each requiring $\Omega(1/\eps^2)$ bits of
communication (otherwise the coordinator could simulate $k-1$ of the sites and obtain an $o(1/\eps^2)$-
communication protocol for \TWODISJ \ with the remaining site, contradicting the communication
lower bound for {\TWODISJ} on this distribution). We obtain an $\tilde{\Omega}(k/\eps^2)$ bound for $k \geq 1/\eps^2$ 
by using similar arguments. 
One cannot show this in the blackboard model since there
is an $\tilde{O}(k + 1/\eps^2)$ bound for $F_0$ \footnote{The idea is to first obtain a $2$-approximation.
Then, sub-sample so that there are $\Theta(1/\eps^2)$ distinct elements. Then the first party
broadcasts his distinct elements, the second party broadcasts the distinct elements 
he has that the first party does not, etc.}.

{\it Lower Bound for $F_p$:} 
Our ${\Omega}(k^{p-1}/\eps^2)$ bound for $F_p$ cannot use the above reduction since we do not know how to
turn a protocol for approximating $F_p$ into a protocol for solving the composition of
\bit\ and {\TWODISJ}. Instead, our starting point is a recent $\Omega(1/\eps^2)$ lower
bound for the $2$-party gap-hamming distance problem {\GHD} \cite{CR:11}. The parties have a
length-$1/\eps^2$ bitstring, $x$ and $y$, respectively, and they must decide if the Hamming distance 
$\Delta(x,y) > 1/(2\eps^2) + 1/\eps$ or $\Delta(x,y) < 1/(2\eps^2)-1/\eps$. A simplification by 
Sherstov \cite{Sherstov11} 
shows a related problem called \gap \ also has communication complexity of $\Omega(1/\eps^2)$ bits. Here there
are two parties, each with $1/\eps^2$-length bitstrings $x$ and $y$, and they must decide if 
$|\Delta(x,y) - 1/(2\eps^2)| > 2/\eps$ or $|\Delta(x,y) - 1/(2\eps^2)| < 1/\eps$. 
Chakrabarti et al. \cite{CKW12} showed that 
any correct protocol for \gap \ must reveal 
${\Omega}(1/\eps^2)$ bits of
information about $(x,y)$. By independence
and the chain rule, this means for ${\Omega}(1/\eps^2)$ indices $i$,
${\Omega}(1)$ bits of information is revealed about $(x_i, y_i)$ conditioned on values $(x_j, y_j)$ for $j < i$.
We now ``embed'' an independent copy of a variant of $k$-party-disjointness, the \XOR\ problem,  
on each of the $1/\eps^2$ coordinates of \gap. In this variant, there are $k$ parties each 
holding a bitstring
of length $k^p$. On all but one ``special'' randomly chosen coordinate, there is a single site
assigned to the coordinate and that site uses private randomness to choose whether the
value on the coordinate is $0$ or $1$ (with equal probability), and the remaining $k-1$ sites
have $0$ on this coordinate. On the special coordinate, with probability $1/4$ all sites
have a $0$ on this coordinate (a ``00'' instance), 
with probability $1/4$ the first $k/2$ parties have a $1$ on this 
coordinate and the remaining $k/2$ parties have a $0$ (a ``10'' instance), 
with probability $1/4$
the second $k/2$ parties have a $1$ on this coordinate and the remaining $k/2$ parties have a $0$
(a ``01'' instance), 
and with the remaining probability $1/4$ all $k$ parties have a $1$ on this coordinate
(a ``11'' instance). We show, 
via a direct sum for distributional communication complexity, that any deterministic protocol that decides which
case the special coordinate is in with probability $1/4 + {\Omega}(1)$ 
has conditional information cost ${\Omega}(k^{p-1})$. 
This implies that any protocol that can decide whether the output is in
the set $\{10, 01\}$ (the ``XOR'' of the output bits) 
with probability $1/2 + {\Omega}(1)$ has conditional information cost
${\Omega}(k^{p-1})$. We do the direct sum argument by conditioning the mutual information
on low-entropy random variables which allow us to fill in inputs on remaining coordinates
without any communication between the parties 
and without asymptotically affecting our ${\Omega}(k^{p-1})$ lower bound. 
We design a reduction so that on the $i$-th coordinate of \gap , 
the input of the first $k/2$-players of \XOR\ is determined
by the public coin (which we condition on) and the first party's input bit to \gap, and the input
of the second $k/2$-players of \XOR\ is determined by the public coin and the second party's
input bit to \gap \ . 
We show that any protocol that solves the composition of \gap \ with $1/\eps^2$ copies of
\XOR\ , a problem that we call \thresh\ , 
must reveal ${\Omega}(1)$ bits of information about the two output bits of
an ${\Omega}(1)$ fraction of the $1/\eps^2$ copies, and from our ${\Omega}(k^{p-1})$
information cost lower bound for a single copy, we can obtain an overall
${\Omega}(k^{p-1}/\eps^2)$ bound. Finally, one can show that a
$(1+\eps)$-approximation algorithm for $F_p$ can be used to solve \thresh\ .
%
%

{\it Upper Bound for $F_p$:} We illustrate the algorithm for $p = 2$ and constant $\eps$. 
Unlike \cite{CMY11}, we do not
use AMS sketches \cite{alon96:_space}. A nice property of our protocol is that it is the first
$1$-way protocol (the protocol of \cite{CMY11} is not), 
in the sense that only the sites send messages to the coordinator (the coordinator does
not send any messages). Moreover, all messages are simple: if a site receives an update
to the $j$-th coordinate, provided the frequency of coordinate $j$ in its stream exceeds a threshold, it
decides with a certain probability to send $j$ to the coordinator. Unfortunately, one can show that
this probability cannot be the same for all coordinates $j$, as otherwise the communication would be
too large. 

To determine the threshold
and probability to send an update to a coordinate $j$, 
the sites use the public coin to randomly group all coordinates $j$ into buckets
$S_{\ell}$, where $S_{\ell}$ contains a $1/2^{\ell}$ fraction of the input coordinates. For $j \in S_{\ell}$,
the threshold and probability are only a function of $\ell$. 
Inspired by work on sub-sampling \cite{indyk05:_optim_}, 
we try to estimate the number of coordinates $j$ of magnitude in the range $[2^h, 2^{h+1})$, for each $h$. Call
this class of coordinates $C_h$. 
If the contribution to $F_2$ from $C_h$ is significant, then $|C_h| \approx 2^{-2h} \cdot F_2$, and to 
estimate $|C_h|$ we only consider those $j \in C_h$ that are in $S_{\ell}$ for 
a value $\ell$ which satisfies 
$|C_h| \cdot 2^{-\ell} \approx 2^{-2h} \cdot F_2 \cdot 2^{-\ell} \approx 1$.
We do not know $F_2$ and so we also do not know $\ell$, but we can make a logarithmic number
of guesses. We note that the work \cite{indyk05:_optim_} was available to the authors of
\cite{CMY11} for several years, but adapting it to the distributed framework here is tricky in the sense
that the ``heavy hitters'' algorithm used in \cite{indyk05:_optim_} for finding elements in different 
$C_h$ needs to be implemented in a 
$k$-party communication-efficient way. 

When choosing the threshold and probability we have two competing
constraints; on the one hand these values must be chosen so that we can accurately estimate
the values $|C_h|$ from the samples. On the other hand, these values need to be chosen so that
the communication is not excessive. Balancing these
two constraints forces us to use a threshold instead of just the same probability for all coordinates in $S_{\ell}$. 
By choosing the thresholds and probabilities to be 
appropriate functions of $\ell$, we can satisfy both constraints. Other minor issues in the analysis
arise from 
the fact that different classes contribute at different times, and that the coordinator must be
correct at all times. These issues can be resolved by conditioning on a quantity related
to the protocol's correctness being accurate at a small number of selected times in the stream, and then
arguing that the quantity is non-decreasing and that this implies that it is correct at all times. 
%
\\\\
{\bf Implications for the Data Stream Model:}
In 2003, Indyk and Woodruff introduced the {\GHD} problem 
\cite{iw03}, where a $1$-round lower bound shortly followed 
\cite{woodruff04:_optim_}. Ever since, it seemed the space complexity of estimating
$F_0$ in a data stream with $t > 1$ passes hinged on whether {\GHD} 
required $\Omega(1/\eps^2)$ communication for $t$ rounds, see, e.g., Question 10 in \cite{iitk}. 
A flurry \cite{bc09,bcrvw10,CR:11,v11,Sherstov11} of recent work finally resolved the 
complexity of {\GHD}. What our lower
bound shows for $F_0$ is that this is not the only way to prove the $\Omega(1/\eps^2)$ space
bound for multiple passes for $F_0$. Indeed, we just needed to look at $\Theta(1/\eps^2)$ parties instead
of $2$ parties. Since we have an $\Omega(1/\eps^4)$ communication lower bound for $F_0$
with $\Theta(1/\eps^2)$ parties, this implies an
$\Omega((1/\eps^4)/(t/\eps^2)) = \Omega(1/(t \eps^2))$ bound for $t$-pass algorithms for approximating $F_0$.
Arguably our proof is simpler than the recent {\GHD} lower bounds. 

Our ${\Omega}(k^{p-1}/\eps^2)$ bound for $F_p$ also
improves a long line of work on the space complexity of estimating $F_p$ for $p > 2$ in a data stream.
The current best upper bound is $\tilde{O}(N^{1-2/p}\eps^{-2})$ bits of space \cite{Ganguly11}.
See Figure 1 of \cite{Ganguly11} for a list of papers which make progress on the $\eps$ and logarithmic
factors. The previous best lower bound is $\tilde{\Omega}(N^{1-2/p}\eps^{-2/p}/t)$ for $t$ passes 
\cite{BYJKS02}. By setting $k^p = \eps^2 N$,
we obtain that the total communication is at least ${\Omega}(\eps^{2-2/p}N^{1-1/p}/\eps^2)$, 
and so the implied space lower bound for $t$-pass algorithms for $F_p$
in a data stream is ${\Omega}(\eps^{-2/p}N^{1-1/p}/(tk)) = {\Omega}(N^{1-2/p}/(\eps^{4/p} t))$. 
This gives the first bound
that agrees with the tight $\tilde{\Theta}(1/\eps^2)$ bound when $p = 2$ for any constant $t$. 
After our work, 
Ganguly~\cite{Ganguly12} improved this for the special case $t = 1$. That is, for $1$-pass algorithms for estimating
$F_p$, $p > 2$, he shows a 
space lower bound of $\Omega(N^{1-2/p}/(\eps^2 \log n))$. 

{\bf Other Related Work:}
There are quite a few papers on multiparty number-in-hand communication complexity, though they are not directly relevant for the problems studied in this paper. Alon et al.~\cite{alon96:_space} and Bar-Yossef et al.~\cite{BYJKS02} studied lower bounds for multiparty set-disjointness, which has applications to $p$-th frequency moment estimation for $p > 2$ in the streaming model. Their results were further improved in \cite{Chakrabarti03,Gronemeier09,Jayram09}. Chakrabarti et al.~\cite{CCM08} studied random-partition communication lower bounds for multiparty set-disjointness and pointer jumping, which have a number of applications in the random-order data stream model. Other work includes Chakrabarti et al.~\cite{CJP08} for median selection, Magniez et al.~\cite{MMN10} and Chakrabarti et al.~\cite{CCKM10} for streaming language recognition. Very few studies have been conducted in the message-passing model. Duris and Rolim~\cite{DR98} proved several lower bounds in the message-passing model, but only for some simple boolean functions. Three related but more restrictive private-message models were studied by Gal and Gopalan~\cite{GG07}, Erg{\"u}n and Jowhari~\cite{EJ08}, and Guha and Huang~\cite{GH09}. The first two only investigated deterministic protocols and the third was tailored for the random-order data stream model. 

Recently Phillips et al. \cite{PVZ12} introduced a technique called symmetrization 
for the number-in-hand communication model. The idea is to try to find a symmetric hard distribution for the $k$ players. Then one reduces the $k$-player problem to a $2$-player problem by assigning Alice the input of a random player and Bob the inputs of the remaining $k-1$ players. The answer to the $k$-player problem gives the answer to the $2$-player problem. By symmetrization one can argue that if the communication lower bound for the resulting $2$-player problem is $L$, then the lower bound for the $k$-player problem is $\Omega(k L)$.
While symmetrization developed in \cite{PVZ12} can be used to solve some problems for which other techniques are not known, 
such as bitwise AND/OR and graph connectivity, it has several limitations. 
First, symmetrization requires a symmetric hard distribution, and for many problems (e.g., $F_p\ (p > 1)$ in this paper) this is 
not known or unlikely to exist.
Second, for many problems (e.g., $F_0$ in this paper), we need a direct-sum type of argument with certain combining functions (e.g., the majority (MAJ)), while in \cite{PVZ12}, only outputting all copies or with the combining function OR is considered. 
Third, the symmetrization technique in \cite{PVZ12} does not give information cost bounds, and so 
it is difficult to use when composing problems as is done in this paper. In this paper, we have further developed symmetrization to make it work with the combining function MAJ and the information cost. 
\\\\
{\bf Paper Outline:}
In Section~\ref{sec:F0} and Section~\ref{sec:F2-LB} we prove our lower bounds for $F_0$ and $F_p$, $p > 1$. 
The lower bounds apply to functional monitoring, but hold even in the static model.
In Section~\ref{sec:F2-UB} we show improved upper bounds for $F_p, p > 1,$ for functional monitoring. Finally, \iffull{in Section~\ref{sec:app}} we prove lower bounds for all-quantile, heavy hitters, entropy and $\ell_p$ for any $p \geq 1$ in the blackboard model. 
\ifconf{Due to space constraints, we defer this section to the full version of this paper.}

\section{Preliminaries}\label{sec:prelim}
%
\ifconf{In this section we review some basic concepts and definitions. Due to space constraints, we defer an introduction of communication complexity and information theory to the full version of this paper. We also refer readers to the text \cite{Cover:Thomas:91} for a comprehensive introduction to information theory, 
the text \cite{eyal97:_commun_} for communication complexity, and Bar-Yossef's Thesis~\cite{Bar-Yossef:02}, Chapter 6 for information complexity.}

\iffull{
In this section we review some basics on communication complexity and information theory. 

\paragraph{Information Theory}
We refer the reader to \cite{Cover:Thomas:91} for a comprehensive introduction to information theory. Here we review a few concepts and notations. 

Let $H(X)$ denote the Shannon entropy of the random variable $X$, and let $H_b(p)$ denote the binary entropy function when $p \in [0,1]$. Let $H(X\ |\ Y)$ denote conditional entropy of $X$ given $Y$. Let $I(X; Y)$ denote the mutual information between two random variables $X, Y$. Let $I(X; Y\ |\ Z)$ denote the mutual information between two random variables $X, Y$ conditioned on $Z$. The following is a summarization of the basic properties of entropy and mutual information that we need.
\begin{proposition}\label{prop:mut}
Let $X, Y, Z, W$ be random variables.
\begin{enumerate}
\item If $X$ takes value in $\{1,2, \ldots, m\}$, then $H(X) \in [0, \log m]$.

\item $H(X) \ge H(X\ |\ Y)$ and $I(X; Y)  = H(X) - H(X\ |\ Y) \ge 0$.

\item  If $X$ and $Z$ are independent, then we have $I(X; Y\ |\ Z) \ge I(X; Y)$. Similarly, if $X, Z$ are independent given $W$, then $I(X; Y\ |\ Z, W) \ge I(X; Y\ |\ W)$.

\item (Chain rule of mutual information)

$I(X, Y; Z) = I(X; Z) + I(Y; Z\ |\ X).$ 

And in general, for any random variables $X_1, X_2, \ldots, X_n, Y$,

$\textstyle I(X_1, \ldots, X_n; Y) = \sum_{i = 1}^n I(X_i; Y\ |\ X_1, \ldots, X_{i-1})$.

Thus, $I(X, Y; Z\ |\ W) \ge I(X; Z\ |\ W)$.

\item (Data processing inequality) 
If $X$ and $Z$ are conditionally independent given $Y$, then $I(X; Y\ |\ Z, W) \le I(X; Y\ |\ W)$.

\item (Fano's inequality) Let $X$ be a random variable chosen from domain $\mathcal{X}$ according to distribution $\mu_X$, and $Y$ be a random variable chosen from domain $\mathcal{Y}$ according to distribution $\mu_Y$. For any reconstruction function $g : \mathcal{Y} \to \mathcal{X}$ with error $\delta_g$,
$$H_b(\delta_g) + \delta_g \log(\abs{\mathcal{X}} - 1) \ge H(X\ |\ Y).$$

\item (The Maximum Likelihood Estimation principle) With the notations as in Fano's inequality, if the (deterministic) reconstruction function is $g(y) = x$ for the $x$ that maximizes the conditional probability $\mu_X(x\ |\ Y = y)$, then $$\delta_g \le 1 - \frac{1}{2^{H(X\ |\ Y)}}.$$ Call this $g$ the maximum likelihood function.
\end{enumerate}
\end{proposition}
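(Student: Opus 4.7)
The plan is to verify the seven items in turn, relying on the definition $H(X) = -\sum_x \Pr[X=x]\log\Pr[X=x]$, the corresponding definition of conditional entropy, and the identity $I(X;Y) = H(X) - H(X\mid Y)$. Items 1 and 2 are essentially immediate: non-negativity of $H$ follows since each summand $-p\log p \geq 0$ for $p \in [0,1]$, the upper bound $\log m$ is Jensen's inequality applied to the concave function $\log$, and $I(X;Y)\geq 0$ is the non-negativity of KL divergence (again Jensen). This already gives $H(X) \geq H(X\mid Y)$.

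For item 3, I would start from the chain rule $I(X,Z;Y) = I(Z;Y) + I(X;Y\mid Z) = I(X;Y) + I(Z;Y\mid X)$; when $X,Z$ are independent, $I(Z;Y\mid X) \geq I(Z;Y) - I(X;Z) = I(Z;Y)$ would not be tight enough, so instead I would write $I(X;Y\mid Z) = I(X,Z;Y) - I(Z;Y)$ and observe that independence of $X$ and $Z$ gives $I(X;Z)=0$, hence $I(X,Z;Y) \geq I(X;Y) + 0$ while $I(Z;Y) \leq I(X,Z;Y) - I(X;Y\mid Z)$; the cleanest route is to note $H(X\mid Z)=H(X)$ and $H(X\mid Y,Z)\leq H(X\mid Y)$, which together yield $I(X;Y\mid Z)\geq I(X;Y)$. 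The conditional version is identical with an extra $W$ throughout. Item 4 (chain rule) follows from the entropy chain rule $H(X,Y)=H(X)+H(Y\mid X)$ by subtracting conditional versions; induction on $n$ handles the general case. Item 5 (data processing) comes from applying the chain rule two ways to $I(Y,Z;X\mid W)$ and using that $I(X;Z\mid Y,W)=0$ under the stated conditional independence.

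For Fano's inequality (item 6), the classical route is to introduce the error indicator $E = \mathbf{1}[g(Y)\neq X]$, expand $H(X,E\mid Y)$ using the chain rule in two different orders, and bound each term: $H(E\mid Y)\leq H_b(\delta_g)$, $H(X\mid Y,E=0)=0$, and $H(X\mid Y,E=1)\leq \log(|\mathcal{X}|-1)$; rearranging gives the stated bound. This is the step I expect to be the most notation-heavy, but it is standard.

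The only item that is slightly off the textbook-standard path is item 7 (the MLE bound), which I would handle as follows. For any distribution $p$ on $\mathcal{X}$ we have $\max_x p(x) \geq 2^{-H(p)}$, since $H(p) \geq -\log\max_x p(x)$. Applying this pointwise to the conditional distribution $\mu_X(\cdot\mid Y=y)$ and then averaging over $Y$,
\[
1 - \delta_g \;=\; \mathbb{E}_Y\bigl[\max_x \mu_X(x\mid Y=y)\bigr] \;\geq\; \mathbb{E}_Y\!\left[2^{-H(X\mid Y=y)}\right] \;\geq\; 2^{-\mathbb{E}_Y[H(X\mid Y=y)]} \;=\; 2^{-H(X\mid Y)},
\]
where the last inequality is Jensen applied to the convex function $t\mapsto 2^{-t}$. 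Rearranging yields $\delta_g \leq 1 - 2^{-H(X\mid Y)}$. The main subtlety here is justifying the pointwise bound $\max p \geq 2^{-H(p)}$, which follows since every term in the entropy sum is at most $-\log\max_x p(x)$ times the corresponding probability, summing to $-\log\max_x p(x)$.
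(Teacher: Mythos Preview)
The paper does not prove Proposition~\ref{prop:mut}; it is stated as a summary of standard facts from information theory with an implicit pointer to \cite{Cover:Thomas:91}. Your proofs of all seven items are correct and follow the standard textbook arguments, so there is nothing to compare against.

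One small caveat on item~5: your chain-rule approach needs $I(X;Z\mid Y,W)=0$, which requires $X$ and $Z$ to be conditionally independent given $(Y,W)$, not merely given $Y$ as the proposition literally states. This is really a looseness in the proposition's wording rather than in your argument; in every place the paper invokes item~5, the stronger hypothesis holds. Your proof of item~7 via the pointwise bound $\max_x p(x)\ge 2^{-H(p)}$ followed by Jensen on $t\mapsto 2^{-t}$ is clean and correct.
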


\paragraph{Communication complexity}
In the two-party randomized communication complexity model (see e.g., \cite{eyal97:_commun_}), we have two players Alice and Bob. Alice is given $x \in \mathcal{X}$ and Bob is given $y \in \mathcal{Y}$, and they want to jointly compute a function $f(x,y)$ by exchanging messages according to a protocol $\Pi$. Let $\Pi(x,y)$ denote the message transcript when Alice and Bob run protocol $\Pi$ on input pair $(x,y)$. We sometimes abuse notation by identifying the protocol and the corresponding random transcript, as long as there is no confusion.

The {\em communication complexity} of a protocol is defined as the maximum number of bits exchanged among all pairs of inputs. 
We say a protocol $\Pi$ computes $f$ with error probability $\delta\ (0 \le \delta \le 1)$ if there exists a function $g$ such that for all input pairs $(x,y)$, $\Pr[g(\Pi(x,y)) \neq f(x,y)] \le \delta$. The $\delta$-error randomized communication complexity, denoted by $R^\delta(f)$, is the cost of the minimum-communication randomized protocol that computes $f$ with error probability $\delta$. The $(\mu, \delta)$-distributional communication complexity of $f$, denoted by $D_\mu^\delta(f)$, is the cost of the minimum-communication deterministic protocol that gives the correct answer for $f$ on at least a $1 - \delta$ fraction of all input pairs, weighted by distribution $\mu$. Yao~\cite{Yao77} showed that 
\begin{lemma}[Yao's Lemma]
\label{lem:yao}
$R^\delta(f) \ge \max_\mu D_\mu^\delta(f)$.
\end{lemma}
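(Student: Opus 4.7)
The plan is to prove the easy direction of Yao's minimax principle by a straightforward averaging argument: given a randomized protocol, one of the deterministic protocols in its support must work well against any fixed input distribution. I will not need the full minimax theorem, just an exchange of expectations.

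First, fix any input distribution $\mu$ on $\mathcal{X} \times \mathcal{Y}$, and let $\Pi$ be a randomized protocol achieving the $\delta$-error randomized communication complexity $R^\delta(f)$. I will model $\Pi$ as a distribution over deterministic protocols: $\Pi$ uses a random string $r$ drawn from some distribution $\rho$, and for each fixed $r$ the resulting protocol $\Pi_r$ is deterministic with communication at most $R^\delta(f)$ (worst-case communication being the standard definition). Correctness of $\Pi$ states that for every $(x,y)$, $\Pr_r[\Pi_r(x,y) \text{ errs on } f(x,y)] \le \delta$.

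Next, I will swap the two sources of randomness. Define $\mathrm{err}(\Pi_r, \mu) = \Pr_{(x,y)\sim\mu}[\Pi_r(x,y) \neq f(x,y)]$. Then
\[
\mathbb{E}_{r \sim \rho}\bigl[\mathrm{err}(\Pi_r, \mu)\bigr] \;=\; \mathbb{E}_{(x,y)\sim\mu} \Pr_{r}[\Pi_r(x,y) \neq f(x,y)] \;\le\; \delta,
\]
using Fubini and the worst-case error guarantee on $\Pi$. By an averaging argument, there exists at least one random string $r^*$ such that $\mathrm{err}(\Pi_{r^*}, \mu) \le \delta$. The deterministic protocol $\Pi_{r^*}$ therefore computes $f$ correctly on at least a $1-\delta$ fraction of inputs weighted by $\mu$, and its communication is at most that of $\Pi$, namely $R^\delta(f)$.

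Hence $D_\mu^\delta(f) \le R^\delta(f)$ for every distribution $\mu$, and taking the maximum over $\mu$ yields the claim. The only conceptual point worth flagging is the identification of a randomized protocol with a distribution over deterministic protocols, which is standard in the public-coin model and is the step that makes the exchange of expectations legitimate; there is no real obstacle beyond being careful that the communication bound $R^\delta(f)$ is a worst-case bound that holds for every fixed $r$.
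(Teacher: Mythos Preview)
Your proof is correct and is the standard averaging argument for the easy direction of Yao's principle. The paper itself does not give a proof of this lemma; it simply states it with attribution to Yao~\cite{Yao77}, so there is nothing further to compare.
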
 
Thus, one way to prove a lower bound for randomized protocols is to find a hard distribution $\mu$ and lower bound
$D_\mu^\delta(f)$. This is called Yao's Minimax Principle.

We will use the notion {\em expected distributional communication complexity} $\ED_\mu^\delta(f)$, which was introduced in \cite{PVZ12} (where it was written as $\E[D_\mu^\delta(f)]$, with a bit abuse of notation)  and is defined to be the expected cost (rather than the worst case cost) of the deterministic protocol that gives the correct answer for $f$ on at least $1 - \delta$ fraction of all inputs, where the expectation is taken over distribution $\mu$.

The definitions for two-party protocols can be easily extended to the multiparty setting, where we have $k$ players and the $i$-th player is given an input $x_i \in \mathcal{X}_i$. Again the $k$ players want to jointly compute a function $f(x_1, x_2, \ldots, x_k)$ by exchanging messages according to a protocol $\Pi$.

\paragraph{Information complexity}
Information complexity was introduced in a series of papers including \cite{Chakrabarti01,BYJKS02}. We refer the reader to Bar-Yossef's Thesis~\cite{Bar-Yossef:02}; see Chapter 6 for a detailed introduction. Here we briefly review the concepts of information cost and conditional information cost for $k$-player communication problems. All of them are defined in the blackboard number-in-hand model.

Let $\mu$ be an input distribution on $\mathcal{X}_1 \times \mathcal{X}_2 \times \ldots \times \mathcal{X}_k$ and let $X$ be a random input chosen from $\mu$. Let $\Pi$ be a randomized protocol running on inputs in $\mathcal{X}_1 \times \mathcal{X}_2 \times \ldots \times \mathcal{X}_k$. The {\em information cost} of $\Pi$ with respect to $\mu$ is $I(X; \Pi)$.
The {\em information complexity} of a problem $f$ with respect to a distribution $\mu$ and error parameter $\delta\ (0 \le \delta \le 1)$, denoted $\mathrm{IC}_\mu^\delta(f)$, is the minimum information cost of a $\delta$-error protocol for $f$ with respect to $\mu$.  We will work in the public coin model, in which all parties also share a common source of randomness. 

We say a distribution $\lambda$ partitions $\mu$ if conditioned on $\lambda$, $\mu$ is a product distribution. Let $X$ be a random input chosen from $\mu$ and $D$ be a random variable chosen from $\lambda$. For a randomized protocol $\Pi$ on $\mathcal{X}_1 \times \mathcal{X}_2 \times \ldots \times \mathcal{X}_k$, the {\em conditional information cost} of $\Pi$ with respect to the distribution $\mu$ on $\mathcal{X}_1 \times \mathcal{X}_2 \times \ldots \times \mathcal{X}_k$ and a distribution $\lambda$ partitioning $\mu$ is defined as $I(X; \Pi\ |\ D)$. The {\em conditional information complexity} of a problem $f$ with respect to a distribution $\mu$, a distribution $\lambda$ partitioning $\mu$, and error parameter $\delta\ (0 \le \delta \le 1)$, denoted $\mathrm{IC}_\mu^\delta(f |\ \lambda)$, is the minimum information cost of a $\delta$-error protocol for $f$ with respect to $\mu$ and $\lambda$. 
The following proposition can be found in \cite{BYJKS02}.
\begin{proposition}
\label{prop:icost}
For any distribution $\mu$, distribution $\lambda$ partitioning $\mu$, and error parameter $\delta\ (0 \le \delta \le 1)$,
$$R^\delta(f) \ge \mathrm{IC}_\mu^\delta(f) \ge \mathrm{IC}_\mu^\delta(f\ |\ \lambda).$$
\end{proposition}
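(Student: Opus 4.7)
The plan is to prove the two inequalities separately, both by taking an arbitrary $\delta$-error protocol and bounding its information cost by the relevant quantity on the right.

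For the first inequality $R^\delta(f) \ge \mathrm{IC}_\mu^\delta(f)$, I would take any $\delta$-error randomized public-coin protocol $\Pi$ whose communication cost is $C = R^\delta(f)$. The transcript $\Pi(X)$ is supported on binary strings of length at most $C$, so $H(\Pi) \le C$, and therefore $I(X; \Pi) \le H(\Pi) \le C$. Taking the minimum over all $\delta$-error protocols yields $\mathrm{IC}_\mu^\delta(f) \le R^\delta(f)$. (If one prefers to measure communication more carefully, the same argument can be applied to $H(\Pi \mid R)$ where $R$ denotes the public coins, using that the transcript is a function of the input $X$ and $R$.)

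For the second inequality $\mathrm{IC}_\mu^\delta(f) \ge \mathrm{IC}_\mu^\delta(f \mid \lambda)$, I would fix any $\delta$-error protocol $\Pi$ and show that $I(X; \Pi \mid D) \le I(X; \Pi)$, where $D \sim \lambda$ is the partitioning random variable. Apply the chain rule twice:
\begin{align*}
I(X, D; \Pi) &= I(D; \Pi) + I(X; \Pi \mid D)\\
&= I(X; \Pi) + I(D; \Pi \mid X).
\end{align*}
This gives $I(X; \Pi \mid D) = I(X; \Pi) + I(D; \Pi \mid X) - I(D; \Pi)$. The key observation is that the transcript $\Pi$ is a deterministic function of the input $X$ and the public randomness $R$, where $R$ is drawn independently of $(X, D)$. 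Consequently, $\Pi$ and $D$ are conditionally independent given $X$, so $I(D; \Pi \mid X) = 0$. Combined with the nonnegativity of mutual information $I(D; \Pi) \ge 0$ (Proposition~\ref{prop:mut}, item~2), we obtain $I(X; \Pi \mid D) \le I(X; \Pi)$. Taking the minimum over all $\delta$-error protocols gives $\mathrm{IC}_\mu^\delta(f \mid \lambda) \le \mathrm{IC}_\mu^\delta(f)$, as desired.

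The main subtlety — really the only one — is in the second inequality: making sure that the public randomness $R$ is independent of $(X, D)$, which is what lets us conclude $I(D; \Pi \mid X) = 0$. This is not really an obstacle so much as a bookkeeping point, because $R$ is by definition generated independently of the input distribution. Once that is set, both bounds reduce to one or two applications of the chain rule and nonnegativity of mutual information, so no deeper machinery is needed.
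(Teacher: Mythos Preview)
The paper does not supply its own proof of this proposition; it simply cites \cite{BYJKS02} for it. Your argument is correct and is essentially the standard one found there: entropy of the transcript bounds mutual information for the first inequality, and the chain-rule identity $I(X;\Pi\mid D)=I(X;\Pi)+I(D;\Pi\mid X)-I(D;\Pi)$ together with $I(D;\Pi\mid X)=0$ (since $\Pi$ is a function of $X$ and public randomness independent of $(X,D)$) gives the second. Note that your argument for the second inequality nowhere uses that $\lambda$ partitions $\mu$; this is fine, as that hypothesis is needed only for downstream direct-sum arguments, not for the inequality itself.
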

}

\paragraph{Statistical distance measures}
Given two probability distributions $\mu$ and $\nu$ over the same space $\mathcal{X}$, the following statistical distance measures will be used in this paper:
\begin{enumerate}
\item {Total variation distance:} $\TV(\mu, \nu) \stackrel{def}{=} \max_{A \subseteq \mathcal{X}} \abs{\mu(A) - \nu(A)}$.

\item {Hellinger distance:} $h(\mu, \nu) \stackrel{def}{=} \sqrt{\frac{1}{2}\sum_{x\in \mathcal{X}}\left(\sqrt{\mu(x)} - \sqrt{\nu(x)}\right)^2}$
\end{enumerate}
We have the following relation between total variation distance and Hellinger distance (cf. \cite{Bar-Yossef:02}, Chapter $2$).
\begin{proposition}\label{sec:varHell}
\label{prop:relation}
$h^2(\mu, \nu) \le \TV(\mu, \nu) \le h(\mu, \nu) \sqrt{2 - h^2(\mu, \nu)}.$
\end{proposition}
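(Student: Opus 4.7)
The plan is to reduce both inequalities to a direct calculation with $a_x := \sqrt{\mu(x)}$ and $b_x := \sqrt{\nu(x)}$, using the two key identities
\[
h^2(\mu,\nu) = \tfrac12\sum_x (a_x-b_x)^2 = 1 - \sum_x a_x b_x, \qquad
\TV(\mu,\nu) = \tfrac12\sum_x |a_x^2-b_x^2| = \tfrac12\sum_x |a_x-b_x|(a_x+b_x).
\]
The first identity uses $\sum a_x^2 = \sum b_x^2 = 1$; the second uses the well-known alternative form $\TV(\mu,\nu)=\tfrac12\sum_x|\mu(x)-\nu(x)|$, which I would state as a short preliminary lemma by splitting $\mathcal{X}$ into the set where $\mu\ge\nu$ and its complement.

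For the lower bound $h^2\le\TV$, I would argue pointwise: since $a_x,b_x\ge 0$, we have $|a_x-b_x|\le a_x+b_x$, so $(a_x-b_x)^2 \le |a_x-b_x|(a_x+b_x)$. Summing over $x$ and multiplying by $1/2$ gives exactly $h^2(\mu,\nu)\le \TV(\mu,\nu)$.

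For the upper bound, I would apply Cauchy--Schwarz to the sum expressing $\TV$:
\[
\TV(\mu,\nu) \;=\; \tfrac12 \sum_x |a_x-b_x|\,(a_x+b_x) \;\le\; \tfrac12 \Bigl(\sum_x (a_x-b_x)^2\Bigr)^{1/2} \Bigl(\sum_x (a_x+b_x)^2\Bigr)^{1/2}.
\]
The first factor equals $\sqrt{2}\,h(\mu,\nu)$ by the first identity. Expanding the second factor using $\sum a_x^2=\sum b_x^2=1$ yields $\sum_x(a_x+b_x)^2 = 2 + 2\sum_x a_x b_x = 2(2-h^2(\mu,\nu))$ by the second form of the first identity. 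Combining the two factors, the constants $\tfrac12\cdot\sqrt{2}\cdot\sqrt{2}$ collapse to $1$, giving $\TV(\mu,\nu)\le h(\mu,\nu)\sqrt{2-h^2(\mu,\nu)}$, as required.

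There is no real obstacle here; the only small care needed is to justify the identity $\TV(\mu,\nu)=\tfrac12\sum_x|\mu(x)-\nu(x)|$ (the maximizing event in the definition is $A^*=\{x:\mu(x)\ge\nu(x)\}$), and to note that $a_x,b_x\ge 0$ is what lets us drop the absolute value around $a_x+b_x$. Both inequalities are tight (the lower bound is tight when $\mu=\nu$ or when the distributions have disjoint supports; the upper is tight for disjoint supports as well), which is a useful sanity check but not needed for the proof.
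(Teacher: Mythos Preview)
Your proof is correct and is the standard argument for this well-known inequality. Note that the paper does not actually supply a proof of this proposition; it simply cites Bar-Yossef's thesis (Chapter~2) as a reference, so there is no ``paper's own proof'' to compare against beyond observing that your argument is the classical one found in that reference.
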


The total variation distance of transcripts on a pair of inputs is closely related to the error of a randomized protocol. The following proposition can be found in \cite{Bar-Yossef:02}, Proposition 6.22 (the original proposition is for the 2-party case, and generalizing it to the multiparty case is straightforward).
\begin{proposition} 
\label{prop:diameter}
Let $0 < \delta < 1/2$, and $\Pi$ be a $\delta$-error randomized protocol for a function $f : \mathcal{X}_1 \times \ldots \times \mathcal{X}_k \to \mathcal{Z}$. Then, for every two inputs $(x_1, \ldots, x_k), (x'_1, \ldots, x'_k) \in \mathcal{X}_1 \times \ldots \times \mathcal{X}_k$ for which $f(x_1, \ldots, x_k) \neq f(x'_1, \ldots, x'_k)$, it holds that
$$\TV(\Pi_{x_1, \ldots, x_k}, \Pi_{x'_1, \ldots, x'_k}) > 1 - 2\delta.$$
\end{proposition}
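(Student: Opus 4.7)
The plan is to exhibit a distinguishing event in the transcript space and use it to lower bound the total variation distance. Since $\Pi$ is a $\delta$-error randomized protocol for $f$, by definition there exists a reconstruction function $g$ mapping transcripts (together with any public randomness, which we fold into the transcript) to outputs in $\mathcal{Z}$ such that for every input $\vec{x} \in \mathcal{X}_1 \times \cdots \times \mathcal{X}_k$,
\[
\Pr\bigl[\,g(\Pi_{\vec{x}}) \neq f(\vec{x})\,\bigr] \;\le\; \delta.
\]

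Fix $\vec{x} = (x_1,\ldots,x_k)$ and $\vec{x}\,' = (x'_1,\ldots,x'_k)$ with $z := f(\vec{x}) \neq f(\vec{x}\,') =: z'$. I would define the event (a subset of the transcript space)
\[
A \;:=\; \{t : g(t) = z\}.
\]
Correctness on input $\vec{x}$ gives $\Pr[\Pi_{\vec{x}} \in A] \ge 1 - \delta$, since being outside $A$ means $g$ outputs something $\neq z = f(\vec{x})$, i.e.\ an error. On the other hand, correctness on input $\vec{x}\,'$ gives $\Pr[\Pi_{\vec{x}\,'} \in A] \le \delta$, because every $t \in A$ satisfies $g(t) = z \neq z' = f(\vec{x}\,')$, which is an error on input $\vec{x}\,'$.

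Plugging into the definition $\TV(\mu,\nu) = \max_{B} |\mu(B) - \nu(B)|$ with $B = A$,
\[
\TV(\Pi_{\vec{x}},\Pi_{\vec{x}\,'}) \;\ge\; \Pr[\Pi_{\vec{x}} \in A] - \Pr[\Pi_{\vec{x}\,'} \in A] \;\ge\; (1-\delta) - \delta \;=\; 1 - 2\delta.
\]
For the strict inequality claimed in the statement, the slack comes from the fact that if $\Pr[\Pi_{\vec{x}} \in A] = 1-\delta$ and $\Pr[\Pi_{\vec{x}\,'} \in A] = \delta$ exactly, one can enlarge $A$ very slightly using a transcript on which $g$ outputs something other than $z$ or $z'$ (or, failing that, the two distributions are singular enough that the bound is trivially strict); since $0 < \delta < 1/2$, this gives the strict inequality without affecting anything in the subsequent use of the proposition. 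The main (and essentially only) conceptual step is choosing the correct distinguishing event $A$; no obstacle is expected, as the proof reduces to chaining the correctness guarantee of $\Pi$ with the variational definition of $\TV$.
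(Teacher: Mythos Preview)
The paper does not give its own proof of this proposition; it simply cites Bar-Yossef's thesis (Proposition 6.22) and remarks that the multiparty generalization is straightforward. Your argument is exactly the standard one behind that result: pick the event $A=\{t:g(t)=z\}$, apply the correctness guarantee on each of the two inputs, and read off the bound from the variational definition of $\TV$. So on the substantive point you are in line with the intended proof.

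One remark on the strict inequality. Your justification for upgrading $\geq 1-2\delta$ to $> 1-2\delta$ is hand-wavy, and in fact under the paper's own definition of a $\delta$-error protocol (namely $\Pr[g(\Pi(\vec{x}))\neq f(\vec{x})]\le\delta$) the strict inequality need not hold: take $k=1$, $f$ the identity on $\{0,1\}$, and let the transcript consist solely of a bit equal to $x$ with probability $1-\delta$ and $1-x$ with probability $\delta$; then $\TV(\Pi_0,\Pi_1)=1-2\delta$ exactly. The ``$>$'' in the statement is thus a harmless slip (or reflects a strict-inequality convention in the cited source); your non-strict bound is what is actually used downstream and is all that is needed.
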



\paragraph{Conventions.}
In the rest of the paper we call a player a {\em site}, as to be consistent with the distributed functional monitoring model. We denote $[n] = \{1, \ldots, n\}$. Let $\oplus$ be the XOR function. All logarithms are base-$2$ unless noted otherwise. We say $\tilde{W}$ is a $(1+\eps)$-approximation of $W$, $0 < \eps < 1$, if $W \le \tilde{W} \le (1+\eps)W$. 

\section{A Lower Bound for $F_0$}
\label{sec:F0}
We introduce a problem called \bit, and then compose it with \TWODISJ\ (studied, e.g., in \cite{Raz90}) to prove a lower bound for $F_0$. In this section we work in the message-passing model.
\subsection{The \bit\ Problem}
\label{sec:bit}
In the \bitG\ problem, we have $k$ sites $S_1, S_2, \ldots, S_k$ and the coordinator. Let $f : \mathcal{X} \times \mathcal{Y} \to \{0,1\}$ be an arbitrary function, and let $\tau$ be an arbitrary distribution on $\mathcal{X} \times \mathcal{Y}$ such that for $(X, Y) \sim \tau$, $f(X, Y) = 1$ with probability $\beta$, and $0$ with probability $1 - \beta$, where $\beta\ (c_\beta/k \le \beta \le 1/c_\beta$
for a sufficiently large constant $c_\beta$) is a parameter. We define the input distribution $\mu$ for \bitG\ on $\{X_1, \ldots, X_k, Y\} \in \mathcal{X}^k \times \mathcal{Y}$ as follows: We first sample $(X_1, Y) \sim \tau$, and then independently sample $X_2, \ldots, X_k \sim \tau\ |\ Y$. Note that each pair $(X_i, Y)$ is distributed according to $\tau$. Let $Z_i = f(X_i, Y)$. Thus $Z_i$'s are i.i.d.\ Bernoulli$(\beta)$. Let $Z = \{Z_1, Z_2, \ldots, Z_k\}$. We assign $X_i$ to site $S_i$ for each $i \in [k]$, and assign $Y$ to the coordinator. 

In the \bitG\ problem, the $k$ sites want to approximate $\sum_{i \in [k]} Z_i$ up to an additive factor of $\sqrt{\beta k}$. In the rest of this section, for convenience, we omit subscripts $f, \tau$ in \bitG, since our results will hold for all $f, \tau$ having the properties mentioned above. 

For a fixed transcript $\Pi = \pi$, let $q_i^\pi = \Pr[Z_i = 1\ |\ \Pi = \pi]$. Thus $\sum_{i \in [k]} q_i^\pi = \E[\sum_{i \in [k]} Z_i\ |\ \Pi = \pi]$. Let $c_0$ be a sufficiently large constant. 

\begin{definition}
Given an input $(x_1, \ldots, x_k, y)$ and a transcript $\Pi = \pi$, let $z_i = f(x_i, y)$ and $z = \{z_1, \ldots, z_k\}$. For convenience, we define $\Pi(z) \triangleq \Pi(x_1, \ldots, x_k, y)$. We say
\begin{enumerate}
\item $\pi$ is {\em bad$_1$} for $z$ (denoted by $z \perp_1 \pi$) if $\Pi(z) = \pi$, and for at least $0.1$ fraction of $\{i \in [k] \ |\ z_i = 1\}$, it holds that $q_i^\pi \le \beta/c_0$, and 

\item $\pi$ is {\em bad$_0$} for $z$ (denoted by $z \perp_0 \pi$) if $\Pi(z) = \pi$, and for at least $0.1$ fraction of $\{i \in [k] \ |\ z_i = 0\}$, it holds that $q_i^\pi \ge \beta/c_0$. 
\end{enumerate}
And $\pi$ is {\em good} for $z$ otherwise.
\end{definition}

In this section, we will prove the following theorem. Except stated explicitly, all probabilities, expectations and variances are taken with respect to the input distribution $\mu$.

\begin{theorem}
\label{thm:bit}
Let $\Pi$ be the transcript of any deterministic protocol for \bit\ on input distribution $\mu$ with error probability $\delta$ for some sufficiently small constant $\delta$, then $\Pr[\Pi \textrm{ is good}] \ge 0.96$.
\end{theorem}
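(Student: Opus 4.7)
The plan is to prove $\Pr[\Pi \text{ is bad}_1] \le 0.02$ and $\Pr[\Pi \text{ is bad}_0] \le 0.02$ separately; a union bound then gives $\Pr[\Pi \text{ is good}] \ge 0.96$. Both bounds rest on two ingredients: the Bayesian identity $q_i^\pi = \Pr[Z_i = 1 \mid \Pi = \pi]$, and concentration of $B = \sum_i Z_i$ around $\beta k$. Because the $Z_i$ are i.i.d.\ Bernoulli$(\beta)$ with $\beta k \ge c_\beta$ a sufficiently large constant, a Chernoff bound yields $\Pr[B \le \beta k / 2] \le e^{-\beta k / 8} \le 0.01$, and symmetrically for $k - B$ around $k(1-\beta)$.

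For bad$_1$, the crucial calibration step is
\[
\Pr[Z_i = 1,\ q_i^\Pi \le \beta/c_0] \;=\; \E_\pi\!\left[q_i^\pi \cdot \mathbf{1}[q_i^\pi \le \beta/c_0]\right] \;\le\; \beta/c_0
\]
for each $i \in [k]$. Summing, the random variable $N_1 = |\{i : Z_i = 1,\ q_i^\Pi \le \beta/c_0\}|$ satisfies $\E[N_1] \le \beta k/c_0$. On the event $B \ge \beta k/2$ (whose complement has probability at most $0.01$ by Chernoff), the definition of bad$_1$ forces $N_1 \ge 0.1\,B \ge \beta k/20$; by Markov's inequality this has probability at most $20/c_0 \le 0.01$ for $c_0$ a sufficiently large constant. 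Combining, $\Pr[\text{bad}_1] \le 0.02$.

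The bad$_0$ case is where I expect the main obstacle. Read symmetrically, by replacing $Z_i$ with $W_i = 1 - Z_i$ and $q_i^\pi$ with $r_i^\pi = 1 - q_i^\pi$, the same Bayes/Chernoff/Markov chain goes through verbatim and yields $\Pr[\text{bad}_0] \le 0.02$. If one instead reads the threshold literally as $q_i^\pi \ge \beta/c_0$, then the Bayes identity alone is useless (the uninformative protocol $q_i^\pi \equiv \beta$ already violates bad$_0$ always), and the protocol's error probability $\delta$ must be invoked essentially. The route I would take is to partition the input space by the value $\tilde W(\pi)$ of the protocol's output: within each cell of width $O(\sqrt{\beta k})$, correctness pins $\sum z_i$ to an interval of length $2\sqrt{\beta k}$, so the posterior of $Z$ given $\Pi = \pi$ concentrates around a product Bernoulli with parameter $\approx \tilde W(\pi)/k$; inside this conditional distribution one re-runs the bad$_1$-style argument, now applied to the $0$-indices. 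The delicate part is quantifying the posterior concentration well enough that the overshoot $\ge 0.1(k-B)\cdot\beta/c_0$ contributed to $\hat W(\pi) = \sum_i q_i^\pi$ by the bad$_0$ indices genuinely exceeds the $O(\sqrt{\beta k})$ slack granted by correctness, which forces $c_\beta$ to be chosen much larger than $c_0^2$.
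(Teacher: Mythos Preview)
Your bad$_1$ argument is correct and is considerably simpler than the paper's. The paper conditions on the event that $\Pi$ is \emph{normal} (meaning $\sum_i q_i^\pi \in [\beta k/4, 4\beta k]$, which holds with probability $\ge 0.99$ by a Bayes/Chernoff computation that, like yours, uses no correctness) and then runs a fairly elaborate combinatorial argument involving an ``equalization'' of the $q_i^\pi$ and a bound on $\sum_{T \perp_1 \pi} \prod_{i\in T} q_i^\pi$. Your calibration identity $\Pr[Z_i=1,\,q_i^\Pi\le \beta/c_0]\le \beta/c_0$ followed by Markov on $N_1$ is a genuinely cleaner route to the same conclusion.

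For bad$_0$, however, there is a real gap. You are right that the symmetric reading is unavailable and that correctness must enter, but your sketch does not close the argument. The ``overshoot'' you describe is only $\approx 0.1k\cdot \beta/c_0 = \beta k/(10c_0)$, which is far \emph{below} the typical value $\beta k$ of $\sum_i q_i^\pi$, so it cannot be detected by comparing $\sum_i q_i^\pi$ to the output. One can try to rescue the idea via the identity $\E\bigl[\sum_{i:Z_i=0} q_i^\Pi\bigr] = \E\bigl[\sum_i q_i^\Pi(1-q_i^\Pi)\bigr] = \E[\var(\sum_i Z_i\mid \Pi)]$ and then bound the right side by $\E[(\sum_i Z_i - \text{output}(\Pi))^2]$, but the $\delta$-fraction error event contributes $\delta\cdot \Theta(\beta k)^2$ to this second moment, which swamps the target $\beta k/c_0$; so the second-moment route fails. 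The problematic indices are precisely those with $q_i^\pi$ in the middle range $[\beta/c_0,\,1-\beta/c_0]$, where neither calibration inequality bites.

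The paper's missing key lemma handles exactly this middle range. Call $\pi$ \emph{strong} if $\sum_i q_i^\pi(1-q_i^\pi) \le \beta k/c_\ell$ with $c_\ell = 40c_0$. If $\pi$ is normal but weak, then by Feller's anti-concentration the posterior of $\sum_i Z_i$ given $\Pi=\pi$ has standard deviation $\ge \sqrt{\beta k/c_\ell}$, and one shows it lands both above $\sum_i q_i^\pi + 4\sqrt{\beta k}$ and below $\sum_i q_i^\pi + 2\sqrt{\beta k}$ each with constant probability $\delta_\ell$; hence any fixed output errs with probability $\ge \delta_\ell$, so $\Pr[\text{normal and weak}]\le \delta/\delta_\ell$. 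Once $\pi$ is normal and strong, a counting argument finishes bad$_0$: at most $8\beta k$ indices have $q_i^\pi > 1-\beta/c_0$ (by normality), at most $0.05k$ have $q_i^\pi\in[\beta/c_0,1-\beta/c_0]$ (by strength), so at least $0.9k$ have $q_i^\pi<\beta/c_0$, and if $z$ is not a joker this exceeds $0.9|H_0(z)|$. This anti-concentration step is the idea your proposal is missing.
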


The following observation, which easily follows from the rectangle property of communication protocols, is crucial to our proof. We have included a proof in Appendix~\ref{sec:independent}.
\begin{observation}\label{obs:first}
Conditioned on $\Pi$, $Z_1, Z_2, \ldots, Z_k$ are independent.
\end{observation}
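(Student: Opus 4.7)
The plan is to combine the rectangle property of deterministic multi-party communication protocols with the product structure of $\mu$ conditional on $Y$. Since we work with a deterministic protocol in the coordinator (message-passing) model, a standard cut-and-paste argument yields the rectangle property: for every fixed transcript $\pi$, the event $\{\Pi = \pi\}$ coincides with $\bigcap_{i=1}^{k}\{X_i \in A_i^\pi\} \cap \{Y \in B^\pi\}$ for some sets $A_i^\pi \subseteq \mathcal{X}$ and $B^\pi \subseteq \mathcal{Y}$ depending only on $\pi$.

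Next, by the construction of $\mu$, the joint mass factors as $\mu(x_1,\dots,x_k,y) = \tau_Y(y)\prod_{i=1}^{k}\tau_{X\mid Y}(x_i\mid y)$, i.e., the $X_i$'s are conditionally i.i.d.\ copies of $\tau(\cdot\mid Y)$ given $Y$. Restricting $\mu$ to the rectangle and renormalizing,
\[
\Pr[(X_1,\dots,X_k,Y) = (x_1,\dots,x_k,y) \mid \Pi = \pi] \;\propto\; \tau_Y(y)\prod_{i=1}^{k}\tau_{X\mid Y}(x_i\mid y)
\]
for $(x_1,\dots,x_k,y)$ inside the rectangle (and zero outside). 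Fixing $Y=y$, the conditional density of $(X_1,\dots,X_k)$ factors across sites (each $X_i$ is $\tau(\cdot\mid y)$ restricted to $A_i^\pi$), so $X_1,\dots,X_k$ are mutually independent conditional on $(\Pi,Y)$, and therefore so are $Z_i = f(X_i,Y)$.

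The remaining task is to strip off the extra conditioning on $Y$. Here I will exploit the setup's hypothesis that under $\tau$ one has $\Pr[f(X,Y)=1 \mid Y] = \beta$ (the symmetry that makes the $Z_i$'s honestly i.i.d.\ Bernoulli$(\beta)$ under $\mu$). This symmetry forces the per-site conditional distributions $\Pr[Z_i = \cdot \mid \Pi=\pi,Y=y]$ to interact with $y$ in a sufficiently uniform way that averaging against the posterior of $Y\mid\Pi$ still produces a product $\prod_{i} \Pr[Z_i = z_i \mid \Pi = \pi]$. I expect this last averaging step to be the main technical obstacle of the proof; the first two steps are routine invocations of the rectangle property and of the product structure of $\mu$ given $Y$.
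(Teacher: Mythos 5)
Your first two steps are correct and are the easy part: for a deterministic protocol in the coordinator model the event $\{\Pi=\pi\}$ is a combinatorial rectangle $A_1^\pi\times\cdots\times A_k^\pi\times B^\pi$, and since $\mu$ is a product distribution conditioned on $Y$, the $X_i$ (hence the $Z_i$) are mutually independent given $(\Pi,Y)$. The genuine gap is the third step, which you explicitly leave open and which is in fact the entire content of the observation. Conditional independence given $(\Pi,Y)$ does not descend to conditional independence given $\Pi$ alone, and the hypothesis $\Pr[f(X,Y)=1\mid Y]=\beta$ cannot rescue the averaging: after restricting to the rectangle, the per-site law is
$$\Pr[Z_i=1\mid \Pi=\pi,\ Y=y]\;=\;\Pr_{X\sim\tau(\cdot\mid y)}\left[f(X,y)=1\ \middle|\ X\in A_i^\pi\right],$$
and the restriction to $A_i^\pi$ is exactly what destroys the uniformity in $y$; once these quantities vary with $y$, averaging over the posterior of $Y\mid\Pi$ couples all coordinates through the common $Y$. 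Concretely, take $\mathcal{X}=\mathcal{Y}=\{0,1\}$, $\tau$ uniform, $f(x,y)=x\oplus y$ (so $\Pr[f=1\mid Y]$ is identically $\beta$), and the protocol in which every site sends $X_i$ to the coordinator: given the transcript, every $Z_i=x_i\oplus Y$ is a deterministic function of the single unknown bit $Y$, so the $Z_i$ are maximally correlated given $\Pi$. This shows that no argument of the shape you propose—relying only on the stated hypotheses on $(f,\tau)$ plus the rectangle property over $(X,Y)$—can close the gap.

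The paper's proof takes a different route and never brings $Y$ into the conditioning. It first establishes (Property~\ref{prop:rectangle} in the appendix) a rectangle property phrased directly in terms of the conditioned quantities and the \emph{per-player views} of the transcript: $\Pr[\Pi=\pi\mid Z=z]=\prod_i\Pr[\Pi_i=\pi_i\mid Z_i=z_i]$ and $\Pr[\Pi=\pi]=\prod_i\Pr[\Pi_i=\pi_i]$, where $\Pi_i$ is the portion of the transcript that site $i$ sees; Observation~\ref{obs:first} is then a one-line application of Bayes' rule combined with the \emph{unconditional} independence of $Z_1,\dots,Z_k$ guaranteed by the \bitG\ construction. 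In other words, the independence driving the paper's argument is that of the $Z_i$ themselves, with $Z_i$ treated as site $i$'s effective input, not the conditional independence of the $X_i$ given $Y$. If you want to reconstruct the intended argument, work at the level of the $Z_i$ and the views $\Pi_i$ from the start rather than introducing $Y$ and attempting to integrate it out; as the example above shows, the latter strategy cannot be completed from the hypotheses you invoke.
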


\begin{definition}
\label{def:normal}
We say a transcript $\pi$ is {\em rare$^+$} if $\sum_{i \in [k]} q_i^\pi \ge 4\beta k$ and {\em rare$^-$} if $\sum_{i \in [k]} q_i^\pi  \le \beta k/4$. In both cases we say $\pi$ is {\em rare}. Otherwise we say it is {\em normal}.
\end{definition}

\begin{definition}
\label{def:joker}
We say $Z = \{Z_1, Z_2, \ldots, Z_k\}$ is a {\em joker$^+$} if $\sum_{i \in [k]}Z_i\ge 2\beta k$, and a {\em joker$^-$} if $\sum_{i \in [k]}Z_i\le \beta k / 2$. In both cases we say $Z$ is a {\em joker}. 
\end{definition}

\begin{lemma}
\label{lem:pi-normal}
Under the assumption of Theorem~\ref{thm:bit}, $\Pr[\Pi \textrm{ is normal}] \ge 0.99$.
\end{lemma}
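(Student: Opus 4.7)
The plan is to prove Lemma~\ref{lem:pi-normal} by a pure second-moment argument on the random variable $W := \sum_{i \in [k]} q_i^\Pi = \E[\sum_i Z_i \mid \Pi]$, using the law of total variance. First I would note that by the tower property $\E[W] = \E[\sum_i Z_i] = \beta k$, so transcripts being rare is exactly the event that $W$ deviates from its mean by a constant multiplicative factor.

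Next I would decompose $\var(\sum_i Z_i) = \E[\var(\sum_i Z_i \mid \Pi)] + \var(W)$ and discard the non-negative first term to get $\var(W) \le \var(\sum_i Z_i) = \beta(1-\beta)k \le \beta k$, where I use that the $Z_i$'s are i.i.d.\ Bernoulli$(\beta)$. Chebyshev's inequality then yields $\Pr[|W - \beta k| \ge t] \le \beta k / t^2$. Now $\Pi$ is rare$^+$ iff $W \ge 4\beta k$, i.e., $|W - \beta k| \ge 3\beta k$, giving $\Pr[\Pi \text{ rare}^+] \le 1/(9\beta k)$; similarly $\Pi$ rare$^-$ means $|W - \beta k| \ge 3\beta k/4$, giving $\Pr[\Pi \text{ rare}^-] \le 16/(9\beta k)$. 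Altogether $\Pr[\Pi \text{ rare}] \le 17/(9\beta k)$.

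Finally, since the problem setup guarantees $\beta \ge c_\beta/k$ for a sufficiently large constant $c_\beta$, we have $\beta k \ge c_\beta$, so choosing $c_\beta$ large enough makes this bound at most $0.01$, yielding $\Pr[\Pi \text{ normal}] \ge 0.99$ as desired.

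A notable feature of this plan is that it never invokes the correctness assumption of Theorem~\ref{thm:bit}: Lemma~\ref{lem:pi-normal} is a purely structural fact about the conditional-expectation random variable $W$ arising from summing independent Bernoullis, and the law of total variance plus Chebyshev suffice. Consequently I do not expect a serious obstacle; the only point to double-check carefully is that the constant $c_\beta$ in the definition of \bitG\ is large enough to absorb the $17/9$ factor, which is precisely why the problem statement requires $c_\beta$ to be ``sufficiently large.''
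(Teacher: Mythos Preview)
Your proof is correct and takes a genuinely different route from the paper's. The paper argues via ``jokers'' and Bayes' theorem: it bounds $\Pr[Z \text{ is joker}^+]$ by a Chernoff bound on the unconditional $Z_i$'s, then lower-bounds $\Pr[Z \text{ is joker}^+ \mid \Pi \text{ rare}^+]$ by a Chernoff bound on the $Z_i$'s conditioned on $\Pi$ (which requires Observation~\ref{obs:first}, conditional independence of the $Z_i$ given $\Pi$), and combines via Bayes to get $\Pr[\Pi \text{ rare}^+] \le 2e^{-\beta k/3}$. Your argument instead treats $W = \E[\sum_i Z_i \mid \Pi]$ directly, bounds $\var(W) \le \var(\sum_i Z_i) = \beta(1-\beta)k$ by the law of total variance, and applies Chebyshev. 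This is more elementary: it avoids Observation~\ref{obs:first} entirely and needs no auxiliary ``joker'' events. The trade-off is that you obtain only a polynomial tail $O(1/(\beta k))$ rather than the paper's exponential $e^{-\Omega(\beta k)}$, but since the lemma only asks for a fixed constant $0.99$ and $\beta k \ge c_\beta$ for a sufficiently large constant, your bound is perfectly adequate. Your observation that the correctness hypothesis of Theorem~\ref{thm:bit} is not actually used here is also accurate---the paper's proof does not use it either; the phrase ``under the assumption of Theorem~\ref{thm:bit}'' is just carrying the standing hypotheses forward.
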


\begin{proof}
First, we can apply a Chernoff bound on random variables $Z_1, \ldots, Z_k$, and get
$$ \Pr[Z \textrm{ is a joker}^+] = \Pr\left[\sum_{i \in [k]}Z_i\ge 2 \beta k \right] \le e^{- \beta k /3}.$$ 

Second, by Observation \ref{obs:first}, we can 
apply a Chernoff bound on random variables $Z_1, \ldots, Z_k$ conditioned on $\Pi$ being rare$^+$,
\begin{eqnarray*} 
&& \Pr[Z \textrm{ is a joker}^+\ |\ \Pi \textrm{ is rare}^+]  \\
&\ge& \sum_{\pi}\Pr\left[\Pi = \pi\ |\ \Pi \textrm{ is rare}^+\right] \Pr\left[Z \textrm{ is a joker}^+\ |\ \Pi = \pi, \Pi \textrm{ is rare}^+\right]\\
&=&   \sum_{\pi}\Pr\left[\Pi = \pi\ |\ \Pi \textrm{ is rare}^+\right]  \Pr\left[ \left. \sum_{i \in [k]}Z_i\ge 2\beta k \ \right| \sum_{i \in [k]} q_i^\pi \ge 4\beta k, \Pi = \pi \right] \\
&\ge&  \sum_{\pi}\Pr\left[\Pi = \pi\ |\ \Pi \textrm{ is rare}^+\right] \left(1 - e^{-\beta k/2}\right) \\
&=&  \left(1 - e^{-\beta k/2}\right).
\end{eqnarray*}

Finally by Bayes' theorem, we have that 
\begin{eqnarray*} 
 \Pr[\Pi \textrm{ is rare}^+]
& = & \frac{\Pr[Z \textrm{ is a joker}^+] \cdot \Pr[\Pi \textrm{ is rare}^+\ |\ Z \textrm{ is a joker}^+]}{\Pr[Z \textrm{ is a joker}^+\ |\ \Pi \textrm{ is rare}^+]} \\
&\le& \frac{e^{-\beta k/3}}{1 - e^{-\beta k/2}} \le 2 e^{-\beta k/3}.
\end{eqnarray*} 
Similarly, we can also show that $\Pr[\Pi \textrm{ is rare}^-] \le 2e^{-\beta k / 8}$. Therefore $\Pr[\Pi \textrm{ is rare}] \le 4e^{-\beta k / 8} \le 0.01$ (recall that by our assumption $\beta k \ge c_\beta$ for a sufficiently large constant $c_\beta$).
\end{proof}

\begin{definition}
\label{def:pi-strong}
Let $c_\ell = 40 c_0$. We say a transcript $\pi$ is {\em weak} if $\sum_{i \in [k]}q_i^\pi(1 - q_i^\pi) \ge  \beta k / c_\ell$, and {\em strong} otherwise.
\end{definition}

\begin{lemma}
\label{lem:pi-strong}
Under the assumption of Theorem~\ref{thm:bit}, $\Pr[\Pi \textrm{ is normal and strong}] \ge 0.98$.
\end{lemma}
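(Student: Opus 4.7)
The plan is to mirror the Bayesian template used in Lemma~\ref{lem:pi-normal}, with anti-concentration of sums of independent Bernoullis playing the role that Chernoff played there. Given $\Pr[\Pi \text{ is normal}] \ge 0.99$ from Lemma~\ref{lem:pi-normal}, a union bound reduces the claim to showing $\Pr[\Pi \text{ is weak}] \le 0.01$. A key input is Observation~\ref{obs:first}: conditional on $\Pi=\pi$, the bits $Z_1,\dots,Z_k$ are independent, so $\sum_i q_i^\pi(1-q_i^\pi)$ is exactly $\mathrm{Var}(S \mid \Pi=\pi)$, where $S=\sum_i Z_i$. Thus ``$\Pi$ is weak'' says the posterior variance of $S$ given the transcript is at least $\beta k/c_\ell$.

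I would identify a ``bad decoder'' event whose unconditional probability is at most $\delta$ but whose conditional probability given weak is bounded below by a positive constant. Concretely, let $E=\{|S-\hat S(\Pi)| > \sqrt{\beta k}\}$; by the $\delta$-error hypothesis, $\Pr[E] \le \delta$. If I can establish $\Pr[E\mid \Pi=\pi]\ge c$ for some absolute constant $c>0$ whenever $\pi$ is weak, then Bayes' rule gives $\Pr[\Pi \text{ weak}] \le \Pr[E]/c \le \delta/c$, which is below $0.01$ once the ``sufficiently small constant'' $\delta$ is chosen appropriately relative to $c$ and $c_\ell$. Combining with Lemma~\ref{lem:pi-normal} via $\Pr[\text{normal and strong}] \ge 1 - \Pr[\text{not normal}] - \Pr[\text{weak}] \ge 1 - 0.01 - 0.01 = 0.98$ finishes the proof.

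The main obstacle is establishing the conditional anti-concentration $\Pr[E\mid \Pi=\pi]\ge c$ for weak $\pi$. The approach I would first try is Paley--Zygmund applied to the nonnegative random variable $(S-\mu_\pi)^2$: since $S$ is a sum of independent Bernoullis given $\pi$, its fourth central moment is $O(\sigma_\pi^4)$, giving $\Pr[\,(S-\mu_\pi)^2 \ge \tfrac{1}{2}\sigma_\pi^2 \mid \Pi=\pi\,]\ge c_0'$ for an absolute constant $c_0'$. The delicate point is that $\sqrt{\beta k}$ can be much larger than $\sigma_\pi \ge \sqrt{\beta k/c_\ell}$, so this alone is not enough: $|S-\hat S(\pi)|\ge \sqrt{\beta k}$ is a stronger event than $|S-\mu_\pi|\ge \sigma_\pi/\sqrt 2$. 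I would address this by (i) noting that on normal transcripts $\mu_\pi$ is already pinned down to within a constant factor of $\beta k$, and (ii) invoking Proposition~\ref{prop:diameter} to compare transcripts arising from inputs with different $S$-values: two inputs $z,z'$ with $|S(z)-S(z')|>2\sqrt{\beta k}$ must induce transcript distributions at total-variation distance $>1-2\delta$. Pairing up ``typical'' jokers and non-jokers and pushing this through the conditional independence of the $Z_i$'s should yield the desired constant lower bound on $\Pr[E\mid \Pi=\pi]$ for weak $\pi$, completing the argument.
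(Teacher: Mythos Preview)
Your overall skeleton is the same as the paper's: identify the error event $E=\{|S-\hat S(\Pi)|>\sqrt{\beta k}\}$, show $\Pr[E\mid\Pi=\pi]\ge c$ for every weak $\pi$, and then conclude $\Pr[\Pi\text{ weak}]\le\delta/c\le 0.01$ before combining with Lemma~\ref{lem:pi-normal}. The gap is exactly where you flag it, and your proposed patches do not close it.

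Paley--Zygmund on $(S-\mu_\pi)^2$ only yields $|S-\mu_\pi|\ge\Omega(\sigma_\pi)$ with constant probability, and for a weak $\pi$ one has $\sigma_\pi\ge\sqrt{\beta k/c_\ell}$, which can be a factor $\sqrt{c_\ell}$ smaller than the required deviation $\sqrt{\beta k}$. Patch (i) does not help: normality pins down $\mu_\pi$, but the difficulty is the \emph{scale} of the fluctuation, not the location of the mean. Patch (ii) is a category error: Proposition~\ref{prop:diameter} bounds $\TV(\Pi_z,\Pi_{z'})$ for two \emph{different inputs}, whereas what you need is a statement about the posterior law of $S$ given one \emph{fixed} transcript $\pi$; there is no mechanism to convert the former into the latter here.

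What the paper does instead is use a quantitative one-sided anti-concentration bound for sums of independent $[0,1]$-valued variables (Fact~\ref{lem:feller}, after Feller): $\Pr[S\ge\mu_\pi+t\mid\Pi=\pi]\ge c\cdot e^{-t^2/(3\sigma_\pi^2)}$ for $t\le\sigma_\pi^2/100$. With $t=4\sqrt{\beta k}$ and $\sigma_\pi^2\ge\beta k/c_\ell$, the exponent is at most $16c_\ell/3$, so this probability is at least a constant $\delta_\ell=\delta_\ell(c_\ell)$. Separately, a Chernoff upper-tail bound (using normality of $\pi$, so $\mu_\pi\le 4\beta k$) gives $\Pr[S\le\mu_\pi+2\sqrt{\beta k}\mid\Pi=\pi]\ge 1-e^{-2}$. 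The two events put $S$ in regions more than $2\sqrt{\beta k}$ apart, so no single estimate $\hat S(\pi)$ can be within $\sqrt{\beta k}$ of both; hence $\Pr[E\mid\Pi=\pi]\ge\delta_\ell$ for every normal weak $\pi$, and the argument goes through. The point you were missing is that one needs an anti-concentration inequality whose strength degrades gracefully with the ratio $t/\sigma$ (so that a constant ratio still gives a constant probability), not merely a second-moment method that lives at scale $\sigma$.
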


\begin{proof}
We first show that for a normal and weak transcript $\pi$, there exists a constant $\delta_\ell = \delta_\ell(c_\ell)$ such that
\begin{eqnarray}
 \Pr\left[ \left. \sum_{i \in [k]}Z_i\le \sum_{i \in [k]}q_i^\pi + 2\sqrt{\beta k} \ \right|\ \Pi = \pi \right]  
&\ge& \delta_\ell, \label{eq:anti-1}\\ 
  \text{and \quad}   \Pr\left[\left. \sum_{i \in [k]}Z_i\ge \sum_{i \in [k]}q_i^\pi  + 4\sqrt{\beta k} \ \right|\ \Pi = \pi \right] 
&\ge& \delta_\ell. \label{eq:anti-2}
\end{eqnarray}
The first inequality is a simple application of Chernoff-Hoeffding bound. Recall that for a normal $\pi$, $\sum_{i \in [k]} q_i^\pi \le 4\beta k$. We have
\begin{eqnarray*}
&&\Pr\left[\left. \sum_{i \in [k]} Z_i \le \sum_{i \in [k]} q_i^\pi +  2\sqrt{\beta k}\ \right|\ \Pi = \pi, \Pi \text{ is normal}\right]\\
& \ge &  1 - \Pr\left[\left. \sum_{i \in [k]} Z_i \ge \sum_{i \in [k]} q_i^\pi  +  2\sqrt{\beta k}\ \right|\ \Pi = \pi, \Pi \text{ is normal} \right] \\
&\ge& 1 - e^{- \frac{8\sqrt{\beta k}^2}{\sum_{i \in [k]} q_i^\pi}} \ge 1 - e^{-2} \ge \delta_\ell. \quad (\text{for a sufficiently small constant $\delta_\ell$})
\end{eqnarray*}

Now we prove for the second inequality. We will need the following anti-concentration result which is an easy consequence of Feller~\cite{feller:43} (cf. \cite{Matousek:08}).

\begin{fact}(\cite{Matousek:08})
\label{lem:feller}
Let $Y$ be a sum of independent random variables, each attaining values in $[0,1]$, and let $\sigma = \sqrt{\var[Y]} \ge 200$. Then for all $t \in [0, \sigma^2/100]$, we have
$$\Pr[Y \ge \E[Y] + t] \ge c\cdot e^{-t^2/(3\sigma^2)}$$
for a universal constant $c > 0$. 
\end{fact}

For a normal and weak $\Pi = \pi$, it holds that
\begin{eqnarray*}
 \var\left[\sum_{i \in [k]} Z_i\ |\ \Pi = \pi \right] &=&  \sum_{i \in [k]} \var\left[Z_i\ |\ \Pi = \pi\right] \quad (\text{by observation	\ref{obs:first}}) \\
&=&  \sum_{i \in [k]}q_i^\pi(1 - q_i^\pi) \\
&\ge& \beta k / c_\ell.  \quad (\text{by definition of a weak $\pi$})
\end{eqnarray*}
Recall that by our assumption, $\beta k \ge c_\beta$ for a sufficiently large constant $c_\beta$,  thus $\sqrt{\beta k} \le \beta k / (100 c_\ell)$ and $\beta k/c_\ell \ge 200^2$. Using Lemma~\ref{lem:feller}, we have for a universal constant $c$,
\begin{eqnarray*}
\label{eq:bit-2}
&&  \Pr\left[\left. \sum_{i \in [k]} Z_i \ge \sum_{i \in [k]} q_i^\pi +  4\sqrt{\beta k}\ \right|\ \Pi = \pi, \Pi \text{ is weak} \right] \\
&\ge& c \cdot e^{-\frac{(4\sqrt{\beta k})^2}{3 \beta k / c_\ell}}\ge c \cdot e^{- 16c_\ell/3} \ge \delta_\ell. \quad (\text{for a sufficiently small constant $\delta_\ell$})
\end{eqnarray*}

By (\ref{eq:anti-1}) and (\ref{eq:anti-2}), it is easy to see that given that $\Pi$ is normal, it cannot be weak with probability more than $0.01$, since otherwise by Lemma~\ref{lem:pi-normal} and the analysis above, the error probability of the protocol will be at least 
$0.99 \cdot 0.01 \cdot \delta_\ell > \delta$,
for an arbitrarily small constant error $\delta$, violating the success guarantee of the lemma. Therefore, 
$$\Pr[\Pi \text{ is normal and strong}] \ge \Pr[\Pi \text{ is normal}]\ \Pr[\Pi \text{ is strong}\ |\ \Pi \text{ is normal}] \ge 0.99 \cdot 0.99 \ge 0.98.$$
\end{proof}

Now we analyze the probability of $\Pi$ being good. 
For a $Z= z$, let $H_0(z) = \{i\ |\ z_i = 0\}$ and $H_1(z) = \{i\ |\ z_i = 1\}$. We have the following two lemmas.

\begin{lemma}
\label{lem:pi-bad-0}  
Under the assumption of Theorem~\ref{thm:bit}, $\Pr[\Pi \text{ is bad$_0$}\ |\ \Pi \text{ is normal and strong}] \le 0.01$.
\end{lemma}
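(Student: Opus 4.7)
My plan is to fix an arbitrary normal and strong transcript $\pi$ and prove $\Pr[Z \perp_0 \pi \mid \Pi = \pi] \le 0.01$ directly; the lemma's conditional statement then follows by averaging over $\pi$. By Observation~\ref{obs:first} the $Z_i$'s are independent Bernoulli$(q_i^\pi)$ variables under this conditioning, which is what lets me invoke Chernoff.

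To analyze the ``heavy'' set $G = \{i : q_i^\pi \ge \beta/c_0\}$ that bad$_0$ cares about, I would split $G = G_1 \cup G_2$ with $G_1 = \{i \in G : q_i^\pi \le 1/2\}$ and $G_2 = \{i \in G : q_i^\pi > 1/2\}$. The key bookkeeping uses strongness $\sum_i q_i^\pi(1-q_i^\pi) \le \beta k/c_\ell$: on $G_1$ each term is at least $\beta/(2c_0)$, giving $|G_1| \le 2c_0 k/c_\ell = k/20$ (which is precisely why $c_\ell = 40 c_0$); on $G_2$ each term is at least $(1-q_i^\pi)/2$, giving $\sum_{i \in G_2}(1-q_i^\pi) \le 2\beta k/c_\ell$. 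Combining,
$$\E\!\left[\,|\{i \in G : Z_i = 0\}|\ \big|\ \Pi=\pi\,\right] \;=\; \sum_{i \in G}(1-q_i^\pi) \;\le\; \frac{k}{20} + \frac{2\beta k}{c_\ell}.$$
Normality $\sum_i q_i^\pi \le 4\beta k$ also gives $\E[|H_0(Z)| \mid \Pi = \pi] \ge (1-4\beta)k$, and for $\beta \le 1/c_\beta$ this is at least $(1 - 4/c_\beta) k$.

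From here I would apply Chernoff (legitimately, by conditional independence) to conclude that both $|\{i \in G : Z_i = 0\}| \le 0.06 k$ and $|H_0(Z)| \ge 0.9 k$ hold except with probability $e^{-\Omega(k)}$; the parameter assumption $c_\beta/k \le \beta \le 1/c_\beta$ forces $k \ge c_\beta^2$, so this failure probability is safely below $0.01$. On the intersection of these two good events, $|\{i \in G : Z_i = 0\}| \le 0.06 k < 0.09 k \le 0.1|H_0(Z)|$, which directly rules out $Z \perp_0 \pi$.

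The only genuinely delicate step is seeing the right split of $G$ into low- and high-$q_i^\pi$ parts. Without it one would overcount $|G|$ by admitting indices with $q_i^\pi$ near $1$, which can be numerous (up to $8\beta k$ under normality) yet contribute almost nothing to $\sum_{i \in G}(1 - q_i^\pi)$; failing to account for this asymmetry destroys the Chernoff slack. Once the split and the choice $c_\ell = 40 c_0$ are in place, the rest is a routine combination of Chernoff and constant-chasing.
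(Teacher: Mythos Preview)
Your proof is correct and rests on the same core observation as the paper: for a normal and strong transcript $\pi$, the ``heavy'' set $G=\{i:q_i^\pi\ge\beta/c_0\}$ must be split into a low-$q$ part (controlled by strongness, yielding size $\le k/20$ via the choice $c_\ell=40c_0$) and a high-$q$ part (controlled by normality). The paper carries out this split at threshold $1-\beta/c_0$ rather than $1/2$, and from it extracts the \emph{deterministic} bound $|G|\le 0.05k+8\beta k$; then the only randomness needed is the unconditional event ``$Z$ is not a joker'' (so $|H_0(z)|\ge k-2\beta k$), whose failure probability is $\le 2e^{-\beta k/8}$. Your variant instead bounds $\E[|G\cap H_0(Z)|\mid\Pi=\pi]$ and $\E[|H_0(Z)|\mid\Pi=\pi]$ and invokes conditional Chernoff/Hoeffding for an $e^{-\Omega(k)}$ tail. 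Both routes are valid; the paper's is slightly leaner since it avoids the conditional concentration step entirely, while yours has the pedagogical advantage of working uniformly inside the conditioned measure $Z\mid\Pi=\pi$ without ever appealing to the unconditional joker probability.
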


\begin{proof}
Consider any $Z = z$. First, by the definition of a normal $\pi$, we have
$\sum_{i : z_i = 0} q_i^\pi \le \sum_{i \in [k]} q_i^\pi \le 4 \beta k$. Therefore the number of $i$'s such that $z_i = 0$ and $q_i^\pi > (1 - \beta/c_0)$ is at most $4\beta k / (1 - \beta/c_0) \le 8 \beta k$.
Second, by the definition of a strong $\pi$, we have 
$\sum_{i : z_i = 0} q_i^\pi (1 - q_i^\pi) \le \sum_{i \in [k]} q_i^\pi (1 - q_i^\pi) \le \beta k / c_\ell$.
Therefore the number of $i$'s such that $z_i = 0$ and $\beta/c_0 \le q_i^\pi \le (1 - \beta/c_0)$ is at most $\frac{\beta k/c_\ell}{\beta/c_0 \cdot (1 - \beta/c_0)} \le 0.05 k\ (c_\ell = 40c_0)$. Also note that if $z$ is not joker, then $\abs{H_0(z)} \ge (k - 2\beta k)$. Thus conditioned on a normal and strong $\pi$, as well as $z$ is not a joker, the number of $i$'s such that $z_i = 0$ and $q_i^\pi <\beta/c_0$ is at least
\begin{eqnarray*}
(k - 2\beta k) - 8 \beta k - 0.05 k > 0.9 k \ge 0.9 \abs{H_0(z)},
\end{eqnarray*}
where we have used our assumption that $\beta \le 1/c_\beta$ for a sufficiently large constant $c_\beta$. We conclude that
$$\Pr[\Pi \text{ is bad$_0$}\ |\ \Pi \text{ is normal and strong}] 
\le \Pr[Z \text{ is a joker}] \le 2e^{-\beta k/8} \le 0.01.$$
\end{proof}

\begin{lemma}
\label{lem:pi-bad-1}
Under the assumption of Theorem~\ref{thm:bit}, $\Pr[\Pi \text{ is bad$_1$}\ |\ \Pi \text{ is normal}] \le 0.01$.
\end{lemma}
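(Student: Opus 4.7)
The plan is to mimic the structure of Lemma~\ref{lem:pi-bad-0}, but noting that for bad$_1$ we do not need to invoke ``strong'' $\pi$: the small-$q_i^\pi$ indices in $B(\pi) \triangleq \{i \in [k] : q_i^\pi \le \beta/c_0\}$ already give a tight bound on the expected number of $Z_i = 1$ inside $B(\pi)$ directly, because each term is bounded. First I would fix a normal transcript $\pi$ and write
$$|H_1(Z) \cap B(\pi)| \;=\; \sum_{i \in B(\pi)} Z_i, \qquad |H_1(Z)| \;=\; \sum_{i \in [k]} Z_i,$$
so the event ``$\pi$ is bad$_1$ for $Z$'' is exactly $\{\sum_{i\in B(\pi)} Z_i \ge 0.1 \sum_{i\in [k]} Z_i\}$.

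Next, I would upper-bound the numerator in expectation. Conditioned on $\Pi = \pi$, Observation~\ref{obs:first} gives independence of the $Z_i$ and $\E[Z_i \mid \Pi = \pi] = q_i^\pi$, so
$$\E\Bigl[\, |H_1(Z) \cap B(\pi)| \,\Bigm|\, \Pi = \pi\Bigr] \;=\; \sum_{i \in B(\pi)} q_i^\pi \;\le\; |B(\pi)|\cdot \frac{\beta}{c_0} \;\le\; \frac{k\beta}{c_0}.$$
Applying Markov's inequality then yields, for any threshold $T$, the bound $\Pr[|H_1(Z) \cap B(\pi)| \ge T \mid \Pi = \pi] \le k\beta/(c_0 T)$. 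I would pick $T$ of the form $\Theta(\beta k)$ so that this probability is $O(1/c_0)$ and therefore arbitrarily small for $c_0$ taken sufficiently large.

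To complete the ratio bound I need a matching lower bound on the denominator $|H_1(Z)|$. Since $\pi$ is normal, $\sum_i q_i^\pi \ge \beta k/4$, and by Observation~\ref{obs:first} the $Z_i$ are conditionally independent Bernoullis, so a standard Chernoff bound gives $|H_1(Z)| \ge \beta k/8$ with probability $\ge 1 - e^{-\Theta(\beta k)} \ge 1 - 0.001$, using that $\beta k \ge c_\beta$ is large. Combining this with the Markov bound (choosing $T = \beta k /160$, say), on the joint good event we have $|H_1(Z) \cap B(\pi)|/|H_1(Z)| \le (\beta k/160)/(\beta k/8) = 1/20 < 0.1$, and the failure probability of the joint event is at most $160/c_0 + 0.001 \le 0.01$ for $c_0$ chosen large enough.

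Since this bound holds pointwise for every normal $\pi$, averaging over normal $\pi$ yields $\Pr[\Pi \text{ is bad}_1 \mid \Pi \text{ is normal}] \le 0.01$, as desired. The only minor subtlety is to make sure the various constants ($c_0$, $c_\beta$, the Markov threshold $T$) are compatible with those fixed earlier in Definition of bad$_1$ and in Lemma~\ref{lem:pi-strong}; I would verify that taking $c_0$ large enough (as is already assumed in the preceding lemmas) and $c_\beta$ large enough makes all these inequalities hold simultaneously. I do not expect a serious obstacle here---the main (mild) care is that Markov, not Chernoff, must be used on the numerator because $\E[\sum_{i \in B(\pi)} Z_i \mid \Pi=\pi]$ can be as small as $O(1)$, so multiplicative concentration is unavailable, but the additive bound from Markov is already enough once $|H_1(Z)| = \Omega(\beta k)$ is established on the denominator side.
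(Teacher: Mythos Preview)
Your argument is correct, and it is genuinely different from (and considerably simpler than) the paper's own proof. The paper fixes a normal $\pi$, expands $\Pr[\text{bad}_1 \mid \Pi=\pi]$ as a sum over possible sets $T=H_1(z)$ of sizes $\ell\in[\beta k/2,2\beta k]$, and then bounds the combinatorial sum $\sum_{|T|=\ell,\,T\perp_1\pi}\prod_{i\in T}q_i^\pi$ via an ``equalization'' (AM--GM) trick that symmetrizes the $q_i^\pi$ subject to the constraint $|\{i:q_i^\pi\le\beta/c_0\}|=s$; after equalization it obtains a closed-form bound of the shape $(O(1)/c_0^{0.1})^{\beta k/2}$, which is small for large $c_0$. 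Your approach sidesteps all of this: you observe that, conditioned on $\Pi=\pi$, the numerator $|H_1(Z)\cap B(\pi)|=\sum_{i\in B(\pi)}Z_i$ has conditional mean $\sum_{i\in B(\pi)}q_i^\pi\le k\beta/c_0$, so a single Markov inequality already forces it below $\beta k/160$ except with probability $160/c_0$; normality of $\pi$ then gives $|H_1(Z)|\ge\beta k/8$ via Chernoff (using Observation~\ref{obs:first}), and the ratio bound follows.

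What each approach buys: the paper's combinatorial argument yields an exponentially small bound $(c_0)^{-\Omega(\beta k)}$ on $\Pr[\text{bad}_1\mid\Pi=\pi]$, whereas your Markov bound only gives $O(1/c_0)$. But since the lemma merely asks for $\le 0.01$ and $c_0$ is explicitly allowed to be a sufficiently large constant, your weaker quantitative bound is entirely sufficient and dramatically shortens the proof. Your remark that Chernoff cannot replace Markov on the numerator side (because the conditional mean may be $o(1)$) is apt; Markov is indeed the right tool there. The constants line up: $c_0$ large enough for $160/c_0\le 0.009$ and $c_\beta$ large enough for $e^{-\beta k/32}\le 0.001$ are compatible with the standing assumptions in the paper.
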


\begin{proof}
Call a $\pi$ is bad$_1$ for a set $T \subseteq [k]$ (denoted by $T \perp_1 \pi$), if for more than $0.1$ fraction of $i \in T$, we have $q_i^\pi \le \beta/c_0$. Let $\chi(\mathcal{E}) = 1$ if $\mathcal{E}$ holds and $\chi(\mathcal{E}) = 0$ otherwise. We have
\begin{eqnarray}
&&\Pr[\Pi \text{ is bad$_1$}\ |\ \Pi \text{ is normal}] \nonumber \\
&=& \sum_{\pi} \Pr[\Pi = \pi\ |\ \Pi \text{ is normal}]  \sum_z \Pr[Z = z\ |\ \Pi = \pi, \Pi \text{ is normal}] \ \chi(z \perp_1 \pi) \nonumber\\
&\le& \Pr[Z \text{ is a joker}] + \sum_{\pi} \Pr[\Pi = \pi\ |\ \Pi \text{ is normal}] \nonumber\\
&&\quad \quad  \sum_{\ell \in [{\beta k}/{2}, 2\beta k]} \sum_{T \subseteq [k]: \abs{T} = \ell} \sum_z \Pr[Z = z\ |\ \Pi = \pi, \Pi \text{ is normal}] \ \chi(H_1(z) = T) \ \chi(T \perp_1 \pi) \label{eq:pi-bad-0} \\
&\le& \Pr[Z \text{ is a joker}] + \sum_{\pi} \Pr[\Pi = \pi\ |\ \Pi \text{ is normal}]  \nonumber \\
&&  \quad \quad \sum_{\ell \in [\beta k /2, 2\beta k]} \left(\left. \sum_{\substack{T \subseteq [k]: \abs{T} = \ell \\ T \perp_1 \pi}} \prod_{i \in T} q_i^\pi\ \right|\ \Pi = \pi, \Pi \text{ is normal} \right) \label{eq:pi-bad-1}
\end{eqnarray}
The last inequality holds since in (\ref{eq:pi-bad-1}), in the last term, we count the probability of each possible set $T$ of size $\ell$ and is $\perp_1$ to $\pi$ that its elements are all $1$, which upper bounds the corresponding summation in (\ref{eq:pi-bad-0}). 
Now for a fixed $\ell$, conditioned on a normal $\pi$, we consider the term 
\begin{eqnarray}
\label{eq:sum-2}
\sum_{\substack{T \subseteq [k] : \abs{T} = \ell \\ T \perp_1 \pi}} \prod_{i \in T} q_i^\pi.
\end{eqnarray}
W.l.o.g., we can assume that $q_1^\pi \ge \ldots \ge q_s^\pi > \beta/c_0 \ge q_{s+1}^\pi \ge \ldots \ge q_k^\pi$ for an $s = \kappa_s k\ (0 < \kappa_s \le 1)$. We consider a pair $(q_u^\pi, q_v^\pi)\ (u, v \in [k])$. Terms in the summation (\ref{eq:sum-2}) that includes either $q_u^\pi$ or $q_v^\pi$ can be written as
\begin{eqnarray*}
q_u^\pi \sum_{\substack{T \subseteq [k] : \abs{T} = \ell \\ T \perp_1 \pi\\ u \in T, v \not\in T}} \prod_{i \in T \backslash u} q_i^\pi + q_v^\pi \sum_{\substack{T \subseteq [k] : \abs{T} = \ell \\ T \perp_1 \pi \\ v \in T, u \not\in T}} \prod_{i \in T \backslash v} q_i^\pi + q_u^\pi q_v^\pi \sum_{\substack{T \subseteq [k] : \abs{T} = \ell \\ T \perp_1 \pi \\ v \in T, u \in T}} \prod_{i \in T \backslash v, u} q_i^\pi.
\end{eqnarray*}
By the symmetry of $q_u^\pi, q_v^\pi$, the sets $\{T \backslash u\ |\ T \subseteq [k], \abs{T} = \ell, T \perp_1 \pi, u \in T, v \not\in T\}$ and $\{T \backslash v\ |\ T \subseteq [k], \abs{T} = \ell, T \perp_1 \pi, v \in T, u \not\in T\}$ are the same. Using this fact and the AM-GM inequality, it is easy to see that the sum will not decrease if we set $(q_u^\pi)' = (q_v^\pi)' = (q_u^\pi + q_v^\pi)/{2}$. Call such an operation an equalization. We repeat applying such equalizations to any pair $(q_u^\pi, q_v^\pi)$, with the constraint that if $u \in [1, s]$ and $v \in [s+1, k]$, then we only ``average" them to the extent that $(q_u^\pi)' = \beta/c_0, (q_v^\pi)' = q_u^\pi + q_v^\pi - \beta/c_0$ if $q_u^\pi + q_v^\pi \le 2\beta/c_0$, and $(q_v^\pi)' = \beta/c_0, (q_u^\pi)' = q_u^\pi + q_v^\pi - \beta/c_0$ otherwise. We introduce this constraint because we do not want to change $\abs{\{i\ |\ (q_i^\pi)' \le \beta/c_0\}}$, since otherwise a set $T$ which was originally $\perp_1 \Pi$ can be $\not\perp_1 \Pi$ after these equalizations.  We cannot further apply equalizations when one of the followings happen.
\begin{eqnarray}
\label{eq:average}
(q_1^\pi)' = \ldots = (q_s^\pi)' > \beta/c_0 = (q_{s+1}^\pi)' = \ldots = (q_k^\pi)'.
\end{eqnarray}
\begin{eqnarray}
\label{eq:average-2}
(q_1^\pi)' = \ldots = (q_s^\pi)' = \beta/c_0 \ge (q_{s+1}^\pi)' = \ldots = (q_k^\pi)'.
\end{eqnarray}
We note that actually (\ref{eq:average-2}) cannot happen since $\sum_{i \in [k]}(q_i^\pi)' = \sum_{i \in [k]}q_i^\pi$ is preserved during equalizations, and conditioned on a normal $\pi$, we have $\sum_{i \in [k]} q_i^\pi \ge \beta k / 4 > \beta k/c_0$. 

Let $q = (q_1^\pi)' = \ldots = (q_s^\pi)'$. For a normal $\pi$, it holds that $\sum_{i \in [k]}(q_i^\pi)' =  s \cdot q + (k - s) \cdot \beta/c_0 = r \in [\beta k/4, 4\beta k]$. Let $\alpha \in (0.1, 1]$. Recall that $\ell \in [\beta k/2, 2\beta k]$, and we have set $s = \kappa_s k$.
We try to upper bound (\ref{eq:sum-2}) using (\ref{eq:average}).
\begin{eqnarray}
\sum_{\substack{T \subseteq [k] : \abs{T} = \ell \\ T \perp_1 \pi}} \prod_{i \in T} q_i^\pi.
&\le& \left( {k - s \choose \alpha \ell} \cdot {s \choose (1 - \alpha) \ell} \right) \cdot \left( \left(\frac{\beta}{c_0}\right)^{\alpha \ell} \cdot \left(\frac{r}{s} - \frac{(k - s)\beta}{c_0 s} \right)^{(1 - \alpha) \ell} \right) \label{eq:alpha} \\
&\le& \left( \left(\frac{e (1 - \kappa_s)k}{\alpha \ell}\right)^{\alpha \ell} \cdot \left(\frac{e \kappa_s k}{(1 - \alpha) \ell}\right)^{(1 - \alpha) \ell} \right) \cdot \left( \left(\frac{\beta}{c_0}\right)^{\alpha \ell} \cdot \left(\frac{r}{\kappa_s k}  \right)^{(1 - \alpha) \ell} \right) \nonumber \\
&\le& \left( \frac{e}{\alpha c_0} \cdot \frac{\beta k}{\ell} \right)^{\alpha \ell} \cdot \left(\frac{e r}{(1 - \alpha) \ell}\right)^{(1 - \alpha) \ell} \nonumber \\
&\le& \left(\frac{8e}{(c_0)^\alpha \cdot \alpha^\alpha (1 - \alpha)^{1 - \alpha}}\right)^\ell \nonumber \\
&\le& \left(\frac{8e}{(c_0)^{0.1} \cdot (1/e)^{2/e}}\right)^{\beta k / 2}  \label{eq:alpha-2}
\end{eqnarray}
In (\ref{eq:alpha}), the first term is the number of possible choices of the set $T\ (T = \ell)$ with $\alpha$ fraction of items in $[s+1, \infty]$, and the rest in $[1, s]$. And the second term upper bounds $\prod_{i \in T} q_i^\pi$ according to the discussion above. Here we have assumed $\alpha < 1$, otherwise if $\alpha = 1$, then $(\ref{eq:alpha}) \le {k \choose \ell} \cdot (\beta/c_0)^\ell \le (2e/c_0)^{\beta k/2}$, which is smaller than (\ref{eq:alpha-2}).
Now, (\ref{eq:pi-bad-1}) can be upper bounded by
\begin{eqnarray*}
&&2e^{-\beta k/8} +  \sum_{\pi} \Pr[\Pi = \pi\ |\ \Pi \text{ is normal}] \cdot 2\beta k \cdot \left(\frac{8e}{(c_0)^{0.1} \cdot (1/e)^{2/e}}\right)^{\beta k / 2}  \\
&=&2e^{-\beta k/8} +  2\beta k \cdot \left(\frac{8e}{(c_0)^{0.1} \cdot (1/e)^{2/e}}\right)^{\beta k / 2}  \\
&\le& 0.01. \quad \text{(for a sufficiently large constant $c_0$)}
\end{eqnarray*}
\end{proof}

Finally, combining Lemma~\ref{lem:pi-strong}, Lemma~\ref{lem:pi-bad-0} and Lemma~\ref{lem:pi-bad-1}, we get
\begin{eqnarray*}
&&\Pr[\Pi \text{ is good}] \ge \Pr[\Pi \text{ is good, normal and strong}] \\
&=& \Pr[\Pi \text{ is normal and strong}]  (1 - \Pr[\Pi \text{ is bad$_0$}\ |\ \Pi \text{ is normal and strong}] \\
&& \quad \quad \quad \quad - \Pr[\Pi \text{ is bad$_1$}\ |\ \Pi \text{ is normal and strong}]) \\
&\ge& \Pr[\Pi \text{ is normal and strong}]  (1 - \Pr[\Pi \text{ is bad$_0$}\ |\ \Pi \text{ is normal and strong}]) \\
&& \quad \quad \quad \quad - \Pr[\Pi \text{ is normal}]\ \Pr[\Pi \text{ is bad$_1$}\ |\ \Pi \text{ is normal}] \\
&\ge& 0.98 \cdot (1 - 0.01) - 0.01 \ge 0.96. 
\end{eqnarray*}

\iffull{\subsection{The \TWODISJ\ Problem}}
\ifconf{{\bf The \TWODISJ\ Problem:}}
In \TWODISJ\ problem, Alice has a set $x \subseteq [n]$ and Bob has a set $y \subseteq [n]$.
Their goal is to output $1$ if $x \cap y \neq \emptyset$, and $0$ otherwise.

We define the input distribution $\tau_\beta$ as follows. Let $\ell = (n+1)/4$. With probability $\beta$, $x$ and $y$ are random subsets of $[n]$ such that $\abs{x} = \abs{y} = \ell$ and $\abs{x \cap y} = 1$. And with probability $1 - \beta$, $x$ and $y$ are random subsets of $[n]$ such that $\abs{x} = \abs{y} = \ell$ and $x \cap y = \emptyset$.  Razborov~\cite{Raz90} proved that for $\beta = 1/4$, $D^{1/(400)}_{\tau_{1/4}}(\mbox{\TWODISJ}) = \Omega(n)$. It is easy to extend this result to general $\beta$ and the average-case complexity.
\begin{theorem}[\cite{PVZ12}, Lemma 2.2]
\label{thm:DISJ}
For any $\beta \le 1/4$,
it holds that $\ED^{\beta/100}_{\tau_\beta}(\mbox{\TWODISJ}) = \Omega(n)$, where the expectation is taken over the input distribution $\tau_\beta$.
\end{theorem}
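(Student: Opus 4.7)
The plan is to convert any deterministic protocol $\Pi$ with expected cost $c$ and error at most $\beta/100$ under $\tau_\beta$ into a worst-case protocol for the classical $\tau_{1/4}$ distribution, and then invoke Razborov's $\Omega(n)$ lower bound. First I would decompose the error: writing $\eps_Y, \eps_N$ for the conditional errors of $\Pi$ on YES and NO instances respectively, the bound $\beta \eps_Y + (1-\beta)\eps_N \le \beta/100$ immediately yields $\eps_Y \le 1/100$ and, since $\beta \le 1/4$, also $\eps_N \le \beta/(100(1-\beta)) \le 1/300$.

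Next I would exploit the fact that the conditional distributions on YES and NO are identical under $\tau_\beta$ and $\tau_{1/4}$. Running the same $\Pi$ under $\tau_{1/4}$ therefore gives error at most $(1/4)\eps_Y + (3/4)\eps_N \le 1/400 + 1/400 = 1/200$, and expected cost $c^* = (1/4) c_Y + (3/4) c_N$, where $c_Y = \E_{\tau_Y}[|\Pi|]$ and $c_N = \E_{\tau_N}[|\Pi|]$ are the conditional expected lengths. Truncating $\Pi$ at $T = 200 c^*$ bits adds at most $c^*/T = 1/200$ error under $\tau_{1/4}$ by Markov, yielding a worst-case protocol of cost $T$ and error $\le 1/100$. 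Razborov's theorem, in the form $D^{\delta_0}_{\tau_{1/4}}(\TWODISJ) = \Omega(n)$ for any sufficiently small constant $\delta_0$ (the explicit $1/400$ quoted before is merely one such choice), then forces $T = \Omega(n)$, and hence $c^* = \Omega(n)$.

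Finally I would relate $c^*$ back to $c = \beta c_Y + (1-\beta) c_N$. Among pairs $(c_Y, c_N) \ge 0$ with $c$ fixed, the ratio $c^*/c$ is maximized at $c_N = 0$, giving $c^* \le c/(4\beta)$ for $\beta \le 1/4$; hence $c \ge 4\beta c^* = \Omega(\beta n)$, which is $\Omega(n)$ whenever $\beta$ is bounded below by a constant, as is the case in all applications of this lemma in the paper. The main obstacle is the factor $\beta$ lost in the Markov truncation; to obtain a constant truly independent of $\beta$, one would instead apply Razborov's rectangle/corruption lemma directly under $\tau_\beta$, summing the bound $\tau_Y(R) \le O(1) \tau_N(R) + 2^{-\Omega(n)}$ over the $2^c$ rectangles of any deterministic protocol of cost $c$ to obtain $1 - \eps_Y \le O(\eps_N) + 2^{c - \Omega(n)}$; since $\eps_Y \le 1/100$ while $\eps_N \le O(\beta)$ are both small, this directly forces $c = \Omega(n)$ and bypasses the $\beta$-loss altogether.
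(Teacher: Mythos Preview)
The paper does not prove this theorem; it simply cites \cite{PVZ12}, Lemma 2.2, after remarking that the extension from Razborov's $\beta=1/4$ bound is ``easy.'' So I am evaluating your proposal on its own merits.

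Your first approach (transfer to $\tau_{1/4}$, truncate, invoke Razborov) is carried out correctly, but as you note it only yields $c \ge \Omega(\beta n)$. The escape clause you offer --- that $\beta$ is bounded below by a constant in all applications in the paper --- is false. In Section~\ref{sec:F0} the paper sets $\beta = 1/(k\eps^2)$ with $k$ allowed to be arbitrarily large compared to $1/\eps^2$; the whole point of the $F_0$ lower bound is to let $k$ grow, and the final $\tilde\Omega(k/\eps^2)$ bound genuinely needs the $\Omega(n)$ constant here to be independent of $\beta$.

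Your second approach (apply Razborov's corruption lemma directly) is the right idea, but the sketch has a gap: you sum over ``the $2^c$ rectangles of any deterministic protocol of cost $c$,'' which is only valid when $c$ is the \emph{worst-case} cost. For a protocol with \emph{expected} cost $c$ the number of leaves is unbounded, so the $|A|\cdot 2^{-\Omega(n)}$ term is uncontrolled. The fix is a truncation, but with an asymmetric default: run $\Pi$ for at most $T$ bits and \emph{output YES on timeout}. Then $\eps_Y$ does not increase at all, while $\eps_N$ increases by at most $\Pr_{\tau_N}[|\Pi|>T] \le c_N/T \le c/((1-\beta)T) \le \tfrac{4}{3}c/T$ --- crucially, this bound uses $c_N \le c/(1-\beta)$ and carries no $1/\beta$ factor. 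The truncated protocol has at most $2^T$ rectangles, so summing the corruption bound over the YES-labeled ones gives
\[
1-\eps_Y \;\le\; C\bigl(\eps_N + \tfrac{4}{3}c/T\bigr) + 2^{\,T-\delta n}.
\]
Taking $T = \delta n/2$ and using $\eps_Y \le 1/100$, $\eps_N \le \beta/75$, this forces $c = \Omega(n)$ once $\beta$ is below an absolute constant; for larger $\beta$ your first approach already gives $c = \Omega(\beta n) = \Omega(n)$. The asymmetric timeout rule is exactly what eliminates the $\beta$ loss that your Markov-based truncation incurred.
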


In the rest of the section, we simply write $\tau_\beta$ as $\tau$.
 
\iffull{\subsection{The Complexity of $F_0$}}
\ifconf{\\\\{\bf The Complexity of $F_0$:}}
\subsubsection{Connecting $F_0$ and \bitD}

Set $\beta = 1/(k\eps^2)$, $B = 20000/\delta$, where $\delta$ is the small constant error parameter for \bit\ in Theorem~\ref{thm:bit}. 

We choose $f$ to be \TWODISJ\ with universe size $n = B/\eps^2$, set its input distribution to be $\tau$, and work on \bitD. Let $\mu$ be the input distribution of \bitD, which is a function of $\tau$ (see Section~\ref{sec:bit} for the detailed construction of $\mu$ from $\tau$). Let $\{X_1, \ldots, X_k, Y\} \sim \mu$. Let $Z_i = \text{\TWODISJ}(X_i, Y)$. Let $\zeta$ be the induced distribution of $\mu$ on $\{X_1, \ldots, X_k\}$ which we choose to be the input distribution for $F_0$. 
In the rest of this section, for convenience, we will omit the subscripts \TWODISJ\ and $\tau$ in \bitD\ when there is no confusion.

\iffull{
Let $N = \sum_{i \in [k]} Z_i = \sum_{i \in [k]} \text{\TWODISJ}(X_i, Y)$. Let $R = F_0(\cup_{i \in [k]}X_i \bigcap Y)$. The following lemma shows that $R$ will concentrate around its expectation $\E[R]$, which can be calculated exactly. 

\begin{lemma}
\label{lem:bin-ball}
With probability at least $(1 - 6500/B)$, we have $\abs{R - \E[R]} \le 1/(10\eps)$, where $\E[R] = (1 - \lambda) N$ for some fixed  constant $0 \le \lambda \le 4/B$.
\end{lemma}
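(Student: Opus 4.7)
The plan is to reduce $R$ to a balls-and-bins calculation conditional on $N$, then combine a first-moment expansion with a Chebyshev-style concentration inequality.

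First I would establish the right conditional distribution for $R$. By the construction of $\mu$, the $X_i$'s are independent given $Y$, each distributed as $\tau\mid Y$. Given $Y$ with $|Y|=\ell$ and $Z_i=1$, the set $X_i$ is uniform over the $\ell$-subsets of $[n]$ that meet $Y$ in exactly one element, so by symmetry the intersection point $X_i\cap Y$ is uniform over $Y$. Thus, conditional on $Y$ and the vector $(Z_1,\ldots,Z_k)$, the random variable $R=F_0\bigl((\cup_i X_i)\cap Y\bigr)$ is distributed as the number of non-empty bins when $N=\sum_i Z_i$ balls are thrown i.i.d.\ uniformly into $\ell$ bins.

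Second, I would compute $\E[R\mid N]$ via the standard balls-and-bins formula and Taylor expand:
\begin{equation*}
\E[R\mid N] \;=\; \ell\!\left(1-(1-1/\ell)^{N}\right) \;=\; N - \binom{N}{2}/\ell + O(N^3/\ell^2) \;=\; (1-\lambda)\,N,
\end{equation*}
with $\lambda = (N-1)/(2\ell) + O((N/\ell)^2)$. Since we will restrict to the typical range $N\in [\beta k/2,\,2\beta k]=[1/(2\eps^2),\,2/\eps^2]$ (outside of which $Z$ is a joker, an event bounded by earlier Chernoff calculations in Lemma~\ref{lem:pi-normal}), and since $\ell=(n+1)/4=\Theta(B/\eps^2)$, we have $N/\ell=O(1/B)$, giving $\lambda\in[0,4/B]$.

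Third, I would bound $\mathrm{Var}[R\mid N]$. Writing $R=\sum_{y\in Y}I_y$ with $I_y$ the indicator that $y$ is hit, and setting $q=(1-1/\ell)^N$, $r=(1-2/\ell)^N$, a direct covariance computation yields $\mathrm{Var}[R\mid N]=\ell q(1-q)+\ell(\ell-1)(r-q^2)$. The diagonal term is $\Theta(N)$, but a careful expansion shows $r-q^2=-N/\ell^2+O(N^2/\ell^3)$, so the off-diagonal term $\approx -N$ largely cancels it and leaves $\mathrm{Var}[R\mid N]=O(N^2/\ell)$. (Alternatively, one can sidestep the cancellation by writing $N-R=\sum_y (B_y-1)_+$ and bounding its second moment through the expected number of colliding pairs $\binom{N}{2}/\ell$.) For $N\le 2/\eps^2$ this gives $\mathrm{Var}[R\mid N]=O(1/(B\eps^2))$.

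Finally, Chebyshev yields
\begin{equation*}
\Pr\!\left[\,\bigl|R-\E[R\mid N]\bigr|>1/(10\eps)\ \bigm|\ N\,\right] \;\le\; \frac{\mathrm{Var}[R\mid N]}{(1/(10\eps))^{2}} \;=\; O(1/B),
\end{equation*}
and union-bounding with $\Pr[Z\text{ is a joker}]\le 4e^{-\beta k/8}=O(1/B)$ produces the claimed $6500/B$ failure probability with appropriately chosen constants. The main technical obstacle is the variance estimate: the naive bound $\mathrm{Var}[R\mid N]\le N$ is far too weak for a $1/(10\eps)$ deviation tolerance, and one really needs the $O(N^2/\ell)$ improvement driven by the negative covariances $r-q^2<0$ between distinct bin-indicators.
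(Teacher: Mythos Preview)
Your proposal is correct and follows essentially the same approach as the paper: model $R$ as the number of nonempty bins when $N$ balls are thrown into $\ell$ bins, compute $\E[R\mid N]=\ell(1-(1-1/\ell)^N)=(1-\lambda)N$ with $\lambda\le 4/B$, bound $\var[R\mid N]=O(N^2/\ell)$, apply Chebyshev, and union-bound with the Chernoff event $N>2/\eps^2$. The only difference is that the paper cites Kane for the variance bound $\var[R]<4N^2/\ell$, whereas you derive it directly via the covariance cancellation $r-q^2\approx -N/\ell^2$; your explicit computation is a nice self-contained substitute for that citation.
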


\begin{proof}
We can think of our problem as a bin-ball game: Think each pair $(X_i, Y)$ such that \TWODISJ$(X_i, Y) = 1$ are balls (thus we have $N$ balls), and elements in the set $Y$ are bins. Let $\ell = \abs{Y}$. We throw each of the $N$ balls into one of the $\ell$ bins uniformly at random. Our goal is to estimate the number of non-empty bins at the end of the process.

By a Chernoff bound, with probability $\left(1 - e^{-\beta k/3}\right) \ge (1 - 100/B)$, $N \le 2\beta k = 2/\eps^2$. By Fact~$1$ and Lemma $1$ in \cite{Kane10}, we have $\E[R] = \ell \left(1 - (1 - 1/\ell)^N\right)$ and $\var[R] < 4N^2/\ell$. Thus by Chebyshev's inequality we have 
$$\Pr[\abs{R - \E[R]} > 1/(10\eps)] \le \frac{\var[R]}{1/(100\eps^2)} \le \frac{6400}{B}.$$
Let $\theta = N/\ell \le 8/B$. We can write
$$\E[R] = \ell\left(1 - e^{-\theta}\right) + O(1)  = \theta\ell \left(1 - \frac{\theta}{2!} + \frac{\theta^2}{3!} - \frac{\theta^3}{4!} + \right) + O(1).$$
This series converges and thus we can write $\E[R] = (1 - \lambda) \theta \ell = (1 - \lambda) N$ for some fixed  constant $0 \le \lambda \le \theta/2 \le 4/B$.
\end{proof}
}

The next lemma shows that we can use a protocol for $F_0$ to solve \bit\ with good properties.

\begin{lemma}
\label{lem:F0-reduction}
Any protocol $\cal P$ that computes a $(1 + \gamma \eps)$-approximation to $F_0$ (for a sufficiently small constant $\gamma$) on input distribution $\zeta$ with error probability $\delta/2$ can be used to compute  \bitD\  on input distribution $\mu$ with error probability $\delta$. 
\end{lemma}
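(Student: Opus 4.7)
The plan is to show that the sites can feed their inputs $(X_1,\ldots,X_k)$ directly into $\mathcal{P}$---these are distributed as $\zeta$ by construction of $\mu$---and that the coordinator, using its knowledge of $Y$, can post-process the resulting $(1+\gamma\eps)$-approximation $\tilde{F}$ of $F_0(\cup_i X_i)$ into an estimate $\tilde{N}$ of $N=\sum_i Z_i$ with additive error at most $\sqrt{\beta k}=1/\eps$. The guiding decomposition is
\[
F_0(\cup_i X_i) \;=\; R + U, \qquad R := |\cup_i X_i \cap Y|, \qquad U := |\cup_i X_i \cap ([n]\setminus Y)|,
\]
where $R$ carries all the information about $N$ and $U$ will be cancelled deterministically by the coordinator.

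The first step is to relate $R$ to $N$ via Lemma~\ref{lem:bin-ball}, which yields $|R-(1-\lambda)N|\le 1/(10\eps)$ with probability at least $1-6500/B$ for a known $\lambda\le 4/B$. The second step, which is the main obstacle, is to show that $U$ is essentially a universal constant $U^{\ast}$ depending only on $n,|Y|,k$ (and hence computable by the coordinator without communication). The key observation is that in the regime $k\ge c_\beta/\eps^2$ forced by $\beta=1/(k\eps^2)\le 1/c_\beta$, for each $j\in[n]\setminus Y$ the probability that $j$ is uncovered by $\cup_i X_i$ is approximately $(2/3)^k$, which is exponentially small in $k$ and depends on $N$ only through a factor $(1+2/n)^N = 1+O(1/B)$ that is negligible when $N\le 2/\eps^2\ll n = B/\eps^2$. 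Thus $\E[U]$ is essentially a function of $n,|Y|,k$ alone, and a direct covariance computation (exploiting the near-cancellation of $\mathrm{Cov}(I_j,I_{j'})$ for the indicators $I_j=\mathbf{1}[j\in \cup_i X_i]$) gives $\var[U]=O(n(2/3)^k)$; Chebyshev then yields $|U-U^{\ast}|\le 1/(100\eps)$ with probability $\ge 1-\delta/10$.

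Given these two facts, the reduction is straightforward: the sites run $\mathcal{P}$ on $(X_1,\ldots,X_k)$ to obtain $\tilde{F}$, and the coordinator outputs $\tilde{N}:=(\tilde{F}-U^{\ast})/(1-\lambda)$. Choosing $\gamma=1/(100 B)$ ensures that whenever $\mathcal{P}$ succeeds $|\tilde{F}-F_0(\cup_i X_i)|\le \gamma\eps\cdot n = \gamma B/\eps \le 1/(100\eps)$. Writing
\[
(1-\lambda)(\tilde{N}-N) \;=\; (\tilde{F}-F_0(\cup_i X_i)) \,+\, (U-U^{\ast}) \,+\, (R-(1-\lambda)N),
\]
and summing the three bounds gives $|\tilde{N}-N|=O(1/\eps)=O(\sqrt{\beta k})$. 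A union bound over the three failure events, together with the choice $B=20000/\delta$, bounds the total error probability by $\delta/2 + 6500/B + \delta/10 \le \delta$, completing the reduction.

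The main difficulty is controlling $U$: we need both its tight concentration and the near-independence of its mean on $N$, so that subtracting a precomputed $U^{\ast}$ does not destroy the $1/\eps$-additive resolution. Both properties rest on $(2/3)^k$ being negligible in our parameter regime $k\ge 1/\eps^2$; once this is in hand, the covariance calculation and the Chebyshev step are routine, and the rest of the lemma follows from a careful but standard triangle-inequality bookkeeping.
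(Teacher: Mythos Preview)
Your proof is correct and follows the same overall architecture as the paper's: decompose $F_0(\cup_i X_i)=R+U$ with $R=|\cup_i X_i\cap Y|$, invoke Lemma~\ref{lem:bin-ball} for $R$, cancel $U$ using a quantity known to the coordinator, and finish with a triangle inequality and union bound.

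The one place you diverge is in the treatment of $U$. You argue that $\E[U]$ depends on $N$ only through a $(1+O(1/B))$ factor, compute $\var[U]=O(n(2/3)^k)$, and apply Chebyshev to pin $U$ near a precomputed $U^\ast$. This works, but it is more than is needed. The paper's observation is sharper and simpler: because $k\ge c_\beta/\eps^2$ and each $j\in[n]\setminus Y$ lies in each $X_i$ with probability at least $1/4$ (independently across $i$ given $Y$), a union bound gives
\[
\Pr\bigl[\exists\, j\in[n]\setminus Y:\ j\notin\textstyle\bigcup_i X_i\bigr]\ \le\ (n-\ell)(3/4)^k\ \le\ \tfrac{1}{\eps^2}\,e^{-\Omega(k)}\ \le\ 100/B,
\]
so with probability $1-100/B$ one has $U=n-\ell$ \emph{exactly}. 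This lets the coordinator take $U^\ast=n-\ell$ deterministically, eliminates the variance computation, and sidesteps the issue of $\E[U]$ depending (however mildly) on $N$. Your route buys nothing extra here; the paper's is strictly simpler. Otherwise the bookkeeping (choice of $\gamma$, the final additive error of at most $1/(4\eps)<1/\eps=\sqrt{\beta k}$, and the error budget $\delta/2+O(1/B)\le\delta$ via $B=20000/\delta$) matches.
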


\iffull{
\begin{proof}
Given an input $\{X_1, \ldots, X_k, Y\} \sim \mu$ for \bit. The $k$ sites and the coordinator use $\cal P$ to compute $\tilde{W}$ which is a $(1 + \gamma \eps)$-approximation to $F_0(X_1, \ldots, X_k)$, and then determine the answer to \bit\ to be
\begin{eqnarray*}
\label{eq:estimator}
 \frac{\tilde{W} - (n - \ell)}{1 - \lambda}.
\end{eqnarray*}
Recall that $0 \le \lambda \le 4/B$ is some fixed constant, $n = B/\eps^2$ and $\ell = (n+1)/4$.

\paragraph{Correctness.} Given a random input $(X_1, \ldots, X_k, Y) \sim \zeta$, the exact value of $W = F_0(X_1, \ldots, X_k)$ can be written as the sum of two components.
\begin{equation}
\label{eq:F0-1}
W = Q + R,
\end{equation}
where $Q$ counts $F_0(\cup_{i \in [k]} X_i \backslash Y)$, and $R$ counts $F_0(\cup_{i \in [k]} X_i \bigcap Y)$. First, from our construction it is easy to see by a Chernoff bound and the union bound that with probability $\left(1 - 1/\eps^2 \cdot e^{-\Omega(k)}\right) \ge 1 - 100/B$, we have $Q = \abs{\{[n] - Y\}} = n - \ell$, since each element in $\mathcal{S} \backslash Y$ will be chosen by every $X_i\ (i = 1, 2, \ldots, k)$ with a probability at least $1/4$. Second, by Lemma~\ref{lem:bin-ball} we know that with probability $(1 - 6500/B)$, $R$ is within $1/(10\eps)$ from its mean $(1 - \lambda)N$ for some fixed constant $0 \le \lambda \le 4/B$. Thus with probability $(1 - 6600/B)$, we can write Equation (\ref{eq:F0-1}) as 
\begin{equation}
\label{eq:F0-2}
W = (n - \ell) + (1 - \lambda) N+ \kappa_1, 
\end{equation}
for a value $\abs{\kappa_1} \le 1/(10\eps)$ and $N \le 2/\eps^2$.

Set $\gamma = 1/(20 B)$. Since $F_0(X_1, X_2, \ldots, X_k)$ computes a value $\tilde{W}$ which is a $(1 + \gamma \eps)$-approximation of $W$, we can substitute $W$ with $\tilde{W}$ in Equation~(\ref{eq:F0-2}), resulting in the following.
\begin{equation}
\label{eq:F0-3}
\tilde{W} = (n - \ell) + (1 - \lambda) N+ \kappa_1 + \kappa_2,
\end{equation}
where $\abs{\kappa_1} \le 1/(10\eps)$, $N \le 2/\eps^2$, and
\begin{eqnarray*}
\kappa_2 &\le& \gamma \eps \cdot W \\
&=& \gamma \eps \cdot ((n - \ell)  + (1 - \lambda) N+ \kappa_1) \\
&\le& \gamma \eps \cdot (B/\eps^2  + 2/\eps^2 + 1/(10\eps)) \\
&\le& 1/(10\eps). 
\end{eqnarray*}
Now we have
\begin{eqnarray*}
N & = & (\tilde{W} - (n - \ell) - \kappa_1 - \kappa_2)/(1 - \lambda) \\
& = & (\tilde{W} - (n - \ell))/(1 - \lambda) + \kappa_3,
\end{eqnarray*}
where $\abs{\kappa_3} \le (1/(10\eps) + 1/(10\eps))/(1 - 4/B) \le 1/(4\eps)$. Therefore $(\tilde{W} - (n - \ell))/(1 - \lambda)$ approximates $N = \sum_{i \in [k]}Z_i$ correctly up to an additive error $1/(4\eps) < \sqrt{\beta k} = 1/\eps$, thus computes \bit\ correctly. The total error probability of this simulation is at most $(\delta/2 + 6600/B)$, where the first term counts the error probability of $\cal P'$ and the second term counts the error probability introduced by the reduction. This is less than $\delta$ if we choose $B = 20000/\delta$.
\end{proof}
}

\subsubsection{An Embedding Argument}
\label{sec:embed}
\begin{lemma}
\label{lem:embed}
Suppose that there exists a deterministic protocol $\cal P'$ which computes $(1 + \gamma \eps)$-approximate $F_0$ (for a sufficiently small constant $\gamma$) on input distribution $\zeta$ with error probability $\delta/2$ (for a sufficiently small constant $\delta$) and communication $o(C)$, then there exists a deterministic protocol $\cal P$ that computes \TWODISJ\ on input distribution $\tau$ with error probability $\beta/100$ and expected communication complexity $o(\log(1/\beta) \cdot C/k)$, where the expectation is taken over the input distribution $\tau$.
\end{lemma}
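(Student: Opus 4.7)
The plan is a symmetrization-style embedding: we plant a single \TWODISJ\ instance at a uniformly random coordinate of a $k$-party \bitD\ problem, run a protocol derived from $\mathcal{P}'$, decode the planted coordinate from the transcript, and finally boost the success probability. First, apply Lemma~\ref{lem:F0-reduction} to convert $\mathcal{P}'$ into a deterministic protocol $\mathcal{Q}$ for \bitD\ on distribution $\mu$ with error at most $\delta$ and communication $o(C)$.

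For the embedding, given a \TWODISJ\ input $(x,y) \sim \tau$, Alice and Bob use public randomness to sample $j \in [k]$ uniformly. Alice simulates site $S_j$ with $X_j = x$; Bob simulates the coordinator with $Y = y$ and, using private randomness, draws $X_i \sim \tau \mid Y$ for each $i \neq j$ and simulates those sites internally. By construction $(X_1,\ldots,X_k,Y) \sim \mu$, and the marginal of $(X_j,Y)$ is $\tau$. They then run $\mathcal{Q}$; in the message-passing model only messages between $S_j$ and the coordinator cross the Alice-Bob channel, so those are the only bits actually exchanged. From the resulting transcript $\pi$, both parties compute $q_j^\pi$ (a function of $\pi$ and $\mu$ alone) and output $\hat{Z} = \mathbf{1}\bigl[q_j^\pi > \beta/c_0\bigr]$ as their guess for $\text{\TWODISJ}(x,y)$.

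For correctness, Theorem~\ref{thm:bit} implies that $\pi$ is good with probability at least $0.96$. Conditioned on a good $\pi$, fewer than a $0.1$ fraction of $i \in H_1(Z)$ satisfy $q_i^\pi \le \beta/c_0$ and fewer than a $0.1$ fraction of $i \in H_0(Z)$ satisfy $q_i^\pi \ge \beta/c_0$, so for a uniformly random $j$ the threshold decoder errs with probability at most $0.1$. Combined with the $0.04$ chance that $\pi$ is not good, the overall error on the planted $(x,y) \sim \tau$ is at most $0.14$. For communication, the total cost of $\mathcal{Q}$ is $o(C)$, so by linearity the expected amount attributable to the $S_j$-coordinator edge, averaged over uniform $j$, is $o(C)/k = o(C/k)$. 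To drive the constant error down to $\beta/100$, we repeat the embedding $T = \Theta(\log(1/\beta))$ times with independent randomness and output the majority of the $T$ decoded bits; a Chernoff bound yields the desired error while the total expected communication becomes $o(\log(1/\beta) \cdot C/k)$. Finally, a Markov/Yao averaging step fixes the public coin and Bob's private coins to yield a deterministic 2-party protocol whose error on $\tau$ is at most $\beta/100$ and whose expected cost over $\tau$ remains $o(\log(1/\beta) \cdot C/k)$, as required.

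The main obstacle is the decoder step: using a single transcript of $\mathcal{Q}$ to produce a reliable guess for $Z_j$ at a random coordinate $j$ with no further communication. This is exactly what the good-transcript property of Theorem~\ref{thm:bit} provides, and it is essential that the threshold $\beta/c_0$ depends only on global parameters (not on $j$ or the inputs), so both parties agree on the output $\hat Z$ from the transcript alone. The remaining ingredients -- sampling the non-planted sites from $\tau \mid Y$ with zero communication, charging expected cost by symmetrization over $j$, and Chernoff boosting followed by derandomization -- are routine once the decoder is in place.
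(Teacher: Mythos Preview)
Your single-trial embedding, the threshold decoder based on $q_j^\pi$, and the communication accounting are all correct and essentially identical to the paper's argument. The gap is in the boosting step.

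You write that repeating the embedding $T=\Theta(\log(1/\beta))$ times with fresh randomness and taking a majority yields error $\beta/100$ by Chernoff. But Chernoff requires each trial to succeed with probability bounded above $1/2$ \emph{for the fixed input} $(x,y)$. Your $0.86$ guarantee is only an average over $(x,y)\sim\tau$: the per-trial success probability $p(x,y)$ (over the choice of $j$ and Bob's samples) may well be below $1/2$ for a constant fraction of inputs, and on those inputs majority voting amplifies the wrong way. Since you need final error $\beta/100$ with $\beta$ tiny, an uncontrolled constant fraction of bad $(x,y)$ is fatal.

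The paper closes this gap with two ingredients you omitted. First, at each repetition it applies a fresh uniformly random permutation $\sigma$ of $[n]$ to both $x$ and $y$ before planting; because $\tau$ conditioned on the answer bit $b=\text{\TWODISJ}(x,y)$ is permutation-invariant, this makes the per-trial input distribution depend only on $b$, so $p(x,y)=p_b$ is the same for every $(x,y)$ with a given answer. Second, one must check that $p_b\ge 0.7$ for \emph{both} values of $b$, not just on average (the naive mixture bound says nothing about $p_1$ since $\Pr[b=1]=\beta$ is tiny). This is where the two-sided structure of a good transcript --- that the decoder is correct on a $0.9$ fraction of $H_0(Z)$ \emph{and} on a $0.9$ fraction of $H_1(Z)$ separately --- is used, together with a total-variation comparison between $\zeta$ and $\zeta\mid Z_I=b$ of order $O(1/\sqrt{\beta k})=O(\eps)$. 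Only then are the $T$ trials i.i.d.\ with success probability $\ge 0.7$ conditioned on $(x,y)$, and Chernoff legitimately applies.
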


\begin{proof}
In \TWODISJ, Alice holds $X$ and Bob holds $Y$ such that $(X, Y) \sim \tau$. We show that Alice and Bob can use the deterministic protocol $\cal P'$ to construct a deterministic protocol $\cal P$ for \TWODISJ$(X, Y)$ with desired error probability and communication complexity. 

Alice and Bob first use $\cal P'$ to construct a protocol 
$\cal P''$. During the construction they will use public and private randomness which will be fixed at the end. $\cal P''$ consists of two phases.

{\em Input reduction phase.} Alice and Bob construct an input for $F_0$ using $X$ and $Y$ as follows: They pick a random site $S_I\ (I \in [k])$ using public randomness. Alice assigns $S_I$ with input $X_I = X$, and Bob constructs inputs for the rest $(k - 1)$ sites using $Y$. For each $i \in [k] \backslash I$, Bob samples an $X_i$ according to $\tau\ |\ Y$ using independent private randomness and assigns it to $S_i$. Let $Z_i = \text{\TWODISJ}(X_i, Y)$. Note that $\{X_1, \ldots, X_k, Y\} \sim \mu$ and $\{X_1, \ldots, X_k\} \sim \zeta$.

{\em Simulation phase.} Alice simulates $S_I$ and Bob simulates the rest $(k - 1)$ sites, and they run protocol $\cal P'$ on $\{X_1, \ldots, X_k\} \sim \zeta$ to compute $F_0(X_1, \ldots, X_k)$ up to a $(1 + \gamma \eps)$-approximation for a sufficiently small constant $\gamma$ and error probability $\delta/2$. Let $\pi$ be the protocol transcript, and let $\tilde{W}$ be the output. By Lemma~\ref{lem:F0-reduction}, we can use $\tilde{W}$ to compute \bit\ with error probability $\delta$. And then by Theorem~\ref{thm:bit}, for $0.96$ fraction of $Z = z$  over the input distribution $\mu$ and $\pi = \Pi(z)$, it holds that for $0.9$ fraction of $\{i \in [k]\ |\ z_i = 0\}$, $q_i^\pi < \beta/c_0$, and $0.9$ fraction of $\{i \in [k]\ |\ z_i = 1\}$, $q_i^\pi > \beta/c_0$. Now $\cal P''$ outputs $1$ if $q_I^\pi > \beta/c_0$, and $0$ otherwise. Since $S_I$ is chosen randomly among the $k$ sites, and the inputs for the $k$ sites are identically distributed,  $\cal P''$ computes $Z_I = \text{\TWODISJ}(X, Y)$ on input distribution $\tau$ correctly with probability $0.96 \cdot 0.9 \ge 0.8$.

We now describe the final protocol $\cal P$: Alice and Bob  repeat $\cal P''$ independently for $c_R \log(1/\beta)$ times for a large enough constant $c_R$. At the $j$-th repetition, in the input reduction phase, they choose a random permutation $\sigma_j$ of $[n]$ using public randomness, and apply it to each element in $X_1, \ldots, X_k$ before assigning them to the $k$ sites. After running $\cal P''$ for $c_R \log(1/\beta)$ times, $\cal P$ outputs the majority of the outcomes. 

Since $Z_I = \text{\TWODISJ(X, Y)}$ is fixed at each repetition, the inputs $\{X_1, \ldots, X_k\}$ at each repetition have a small dependence, but conditioned on $Z_I$, they are all independent. Let $\mu'$ to be input distribution of $\{X_1, \ldots, X_k, Y\}$ conditioned on $Z_I = b$. Let $\zeta'$ be the induced distribution of $\mu'$ on $\{X_1, \ldots, X_k\}$. The successful probability of a run of $\cal P''$ on $\zeta'$ is at least $0.8 - \TV(\zeta, \zeta')$, where $\TV(\zeta, \zeta')$ is the total variation distance between distributions $\zeta, \zeta'$, which is at most 
$$\max\{\TV(\text{Binomial}(k, \beta), \text{Binomial}(k - 1, \beta)), \TV(\text{Binomial}(k, \beta), \text{Binomial}(k - 1, \beta) + 1)\},$$ and can be bounded by $O(1/\sqrt{\beta k}) = O(\eps)$ (see, e.g., Fact 2.4 of \cite{GMRZ11}). Since conditioned on $Z_I$, the inputs at each repetition are independent, and the success probability of each run of $\cal P''$ is at least $0.7$, by a Chernoff bound over the $c_R  \log(1/\beta)$ repetitions for a sufficiently large $c_R$, we conclude that $\cal P$ succeeds with error probability $\beta/1600$.

We next consider the communication complexity. At each run of $\cal P''$, let $\CC(S_I, S_{-I})$ be the expected communication cost between the site $S_I$ and the rest players (more precisely, between $S_I$ and the coordinator, since in the coordinator model all sites only talk to the coordinator, whose initial input is $\emptyset$), where the expectation is taken over the input distribution $\zeta$ and the choice of the random $I \in [k]$. Since 
conditioned on $Y$, all $X_i\ (i \in [k])$ are independent and identically distributed, if we take a random site $S_I$, the expected communication between $S_I$ and the coordinator should be equal to the total communication divided by a factor of $k$. Thus we have $\CC(S_I, S_{-I}) = o(C/k)$.
Finally, by the linearity of expectation, the expected total communication cost of the $O(\log(1/\beta))$  runs of $\cal P''$ is $o(\log(1/\beta) \cdot C/k)$.

At the end we fix all the randomness used in construction of protocol $\cal P$. We first use two Markov inequalities to fix all public randomness such that $\cal P$ succeeds with error probability $\beta/400$, and the expected total communication cost of the $o(\log(1/\beta) C / k)$, where  both the error probability and the cost expectation are taken over the input distribution $\mu$ and Bob's private randomness. We next use another two Markov inequalities to fix Bob's private randomness such that $\cal P$ succeeds with error probability $\beta/100$, and the expected total communication cost of the $o(\log(1/\beta) C / k)$, where both the error probability and the cost expectation are taken over the input distribution $\mu$.
\end{proof}

The following theorem is a direct consequence of Lemma~\ref{lem:embed}, Theorem~\ref{thm:DISJ} for \TWODISJ\ and Lemma~\ref{lem:yao} (Yao's Lemma). Recall that we set $n = O(1/\eps^2)$ and $1/\beta = \eps^2 k$. In the definition for \bit\ we need $c_\beta/k \le \beta \le 1/c_\beta$ for a sufficiently large constant $c_\beta$, thus we require $c_\beta \le 1/\eps^2 \le k/c_\beta$ for a sufficiently large constant $c_\beta$.
 
\begin{theorem}
\label{thm:F0}
Assuming that $c_\beta \le 1/\eps^2 \le k/c_\beta$ for a sufficiently large constant $c_\beta$. Any randomised protocol that computes a $(1 + \eps)$-approximation to $F_0$ with error probability $\delta$ (for a sufficiently small constant $\delta$) has communication complexity $\Omega\left(\frac{k}{\eps^2 \log (\eps^2 k)}\right)$.
\end{theorem}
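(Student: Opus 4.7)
The plan is to derive Theorem~\ref{thm:F0} as a direct composition of three results already in hand: Yao's Lemma, the embedding lemma (Lemma~\ref{lem:embed}), and the hardness of \TWODISJ\ (Theorem~\ref{thm:DISJ}). There is no new technical content left to develop; the only work is a careful bookkeeping of the parameters to verify that the numbers line up.

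First I would invoke Yao's Lemma (Lemma~\ref{lem:yao}): to lower bound the randomized complexity of $(1+\eps)$-approximating $F_0$, it suffices to lower bound $D^{\delta}_{\zeta}(F_0)$ for the hard input distribution $\zeta$ constructed in Section~\ref{sec:embed}. Since the $\Omega$-bound we want is insensitive to constant rescalings of $\eps$, I may as well work with $(1+\gamma\eps)$-approximations for the small constant $\gamma$ required by Lemma~\ref{lem:F0-reduction} and Lemma~\ref{lem:embed}; replacing $\eps$ by $\eps/\gamma$ at the end absorbs the rescaling into the $\Omega(\cdot)$ notation.

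Next I would argue by contradiction: suppose there is a deterministic protocol $\mathcal{P}'$ computing a $(1+\gamma\eps)$-approximation to $F_0$ on $\zeta$ with error $\delta/2$ and communication $o(C)$ for $C = k/(\eps^2 \log(\eps^2 k))$. Applying Lemma~\ref{lem:embed} produces a deterministic protocol $\mathcal{P}$ for \TWODISJ\ on input distribution $\tau$ with error $\beta/100$ and expected communication
$$o\!\left(\log(1/\beta) \cdot C/k\right) = o\!\left(\log(\eps^2 k) \cdot \frac{1}{\eps^2 \log(\eps^2 k)}\right) = o(1/\eps^2),$$
where I used $1/\beta = \eps^2 k$. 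Since $n = B/\eps^2 = \Theta(1/\eps^2)$, this is $o(n)$, which contradicts Theorem~\ref{thm:DISJ} that states $\ED^{\beta/100}_{\tau}(\text{\TWODISJ}) = \Omega(n)$. The contradiction yields the claimed randomized lower bound $\Omega(k/(\eps^2 \log(\eps^2 k)))$.

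The only thing to double-check is the regime hypothesis $c_\beta \le 1/\eps^2 \le k/c_\beta$, which is exactly the condition under which $\beta = 1/(k\eps^2)$ lies in the interval $[c_\beta/k, 1/c_\beta]$ required by the \bit\ construction, so that Theorem~\ref{thm:bit} (invoked inside Lemma~\ref{lem:embed}) and Theorem~\ref{thm:DISJ} both apply. Since every step in the chain has already been proved, there is no genuine obstacle; the main place to be careful is ensuring the expected-communication form of Lemma~\ref{lem:embed} matches the expected-cost form $\ED^{\beta/100}_{\tau}$ in Theorem~\ref{thm:DISJ}, which it does by construction.
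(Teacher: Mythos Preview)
Your proposal is correct and follows exactly the approach the paper takes: the paper states that Theorem~\ref{thm:F0} ``is a direct consequence of Lemma~\ref{lem:embed}, Theorem~\ref{thm:DISJ} for \TWODISJ\ and Lemma~\ref{lem:yao} (Yao's Lemma),'' together with the parameter settings $n = \Theta(1/\eps^2)$, $1/\beta = \eps^2 k$, and the range check on $\beta$. Your write-up just spells out this chain explicitly, including the arithmetic $\log(1/\beta)\cdot C/k = o(1/\eps^2) = o(n)$ that yields the contradiction with $\ED^{\beta/100}_\tau(\text{\TWODISJ}) = \Omega(n)$.
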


\section{A Lower Bound for $F_p\ (p > 1)$}
\label{sec:F2-LB}
We first introduce a problem called \XOR\ which can be considered to some extent as a combination of two \DISJ\ (introduced in \cite{alon96:_space,BYJKS02}) instances, and then compose it with \gap\ (introduced in \cite{Sherstov11}) to create another problem that we call the $k$-BLOCK-THRESH-XOR (\thresh) problem. We prove that the communication complexity of \thresh\ is large. Finally, we prove a communication complexity lower bound for $F_p$ by performing a reduction from \thresh. In this section we work in the blackboard model.

\iffull{\subsection{The \gap\ Problem}}
\ifconf{{\bf The \gap\ Problem}:}
\label{sec:gap}
In the \gap\ problem we have two players Alice and Bob. Alice has a vector $x = \{x_1, \ldots, x_{1/\eps^2}\}\in \{0,1\}^{1/\eps^2}$ and Bob has a vector $y = \{y_1, \ldots, y_{1/\eps^2}\} \in \{0,1\}^{1/\eps^2}$. They want to compute  
\begin{eqnarray*}
\begin{array}{l}
2\textrm{-GAP-ORT}(x,y)  =  \left\{
  \begin{array}{rl}
   1, & \abs{\sum\limits_{i \in [1/\eps^2]} \textrm{XOR}(x_i, y_i) - \frac{1}{2\eps^2}} \ge \frac{2}{\eps},\\
   0, & \abs{\sum\limits_{i \in [1/\eps^2]} \textrm{XOR}(x_i, y_i) - \frac{1}{2\eps^2}} \le \frac{1}{\eps},\\
   *, & \text{otherwise.}
  \end{array}
  \right.
\end{array}
\end{eqnarray*}
Let $\phi$ be the uniform distribution on $\{0,1\}^{1/\eps^2} \times \{0,1\}^{1/\eps^2}$ and let $(X,Y)$ be a random input chosen from distribution $\phi$. The following theorem is recently obtained by Chakrabarti et al.~\cite{CKW12}.~\footnote{In our original conference paper, which was published before \cite{CKW12}, we proved the theorem for protocols with $\mathrm{poly}(N)$ communication, using a simple argument based on \cite{Sherstov11} and Theorem 1.3 of \cite{Barak10}. }

\begin{theorem}[\cite{CKW12}]
\label{thm:gap}
Let $\Pi$ be the transcript of any randomized protocol for \gap\ on input distribution $\phi$ with error probability $\iota$, for a sufficiently small constant $\iota > 0$. Then, $I(X, Y; \Pi) \ge {\Omega}(1/\eps^2)$.
\end{theorem}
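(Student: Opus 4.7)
My plan is to prove this information-cost lower bound by a direct-sum argument over the $1/\eps^2$ coordinates of $(X,Y)$, combined with an anti-concentration analysis of the XOR-Hamming statistic $S := \sum_i X_i \oplus Y_i$. Under the uniform distribution $\phi$, the pairs $(X_i, Y_i)$ are i.i.d.\ uniform on $\{0,1\}^2$, so by the chain rule and the independence of $(X_i, Y_i)$ from $(X_{<i}, Y_{<i})$, I get
\begin{equation*}
I(X,Y;\Pi) \;=\; \sum_{i=1}^{1/\eps^2} I(X_i, Y_i; \Pi \mid X_{<i}, Y_{<i}) \;\ge\; \sum_{i=1}^{1/\eps^2} I(X_i, Y_i; \Pi).
\end{equation*}
The rectangle property $X \perp Y \mid \Pi$ further decomposes each term as $I(X_i, Y_i; \Pi) = I(X_i; \Pi) + I(Y_i; \Pi)$, via the entropy identity $H(X_i, Y_i \mid \Pi) = H(X_i \mid \Pi) + H(Y_i \mid \Pi)$. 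So it suffices to lower bound $\sum_i [I(X_i;\Pi) + I(Y_i;\Pi)]$ by $\Omega(1/\eps^2)$.

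For the per-coordinate handle, I would leverage anti-concentration. Under $\phi$, $S$ is Binomial$(1/\eps^2, 1/2)$, and by the central limit theorem both the no-region $\{|S - 1/(2\eps^2)| \le 1/\eps\}$ and each yes-tail have constant probability. For a correct protocol with small constant error $\iota$, the conditional mean $\mu_\Pi := \E[S \mid \Pi]$ must lie in the correct region on most transcripts, which forces $\var(\mu_\Pi) = \Omega(1/\eps^2)$. Using the rectangle property, a direct expansion yields the clean identity
\begin{equation*}
\mu_\pi - \tfrac{1}{2\eps^2} = -2 \sum_i (\alpha_i^\pi - \tfrac{1}{2})(\beta_i^\pi - \tfrac{1}{2}), \qquad \alpha_i^\pi := \Pr[X_i = 1 \mid \pi],\ \beta_i^\pi := \Pr[Y_i = 1 \mid \pi],
\end{equation*}
and the standard inequality $1 - H_b(p) \ge 2(p - \tfrac{1}{2})^2/\ln 2$ relates per-coordinate $\ell_2$-deviation back to information via $I(X_i;\Pi) = \Omega(\E[(\alpha_i^\Pi - \tfrac{1}{2})^2])$ and similarly for $Y_i$.

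The hard part is bridging from $\var(\mu_\Pi) = \Omega(1/\eps^2)$ to an $\Omega(1/\eps^2)$ information bound, since a direct Cauchy-Schwarz on the cross-products $(\alpha_i^\Pi - \tfrac{1}{2})(\beta_i^\Pi - \tfrac{1}{2})$ only yields an $\Omega(1)$ bound, because those terms can align favorably across coordinates. Following the framework of \cite{CKW12}, the resolution is to analyze the protocol round by round: each message of $\Pi$ reduces the conditional variance $\var(S \mid \Pi_{\le m})$ by an amount controlled by the information it leaks about individual coordinates, where the control comes from the product structure $X \perp Y \mid \Pi_{\le m}$ combined with Feller-type anti-concentration (Fact~\ref{lem:feller}). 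Summing the cumulative variance reduction along the protocol, together with the total-variance target $\Omega(1/\eps^2)$ derived above, yields the desired $\Omega(1/\eps^2)$ external information cost bound.
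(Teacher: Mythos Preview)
First, note that the paper does not actually prove Theorem~\ref{thm:gap}: it is quoted as a result of \cite{CKW12}, with a footnote explaining that the conference version established only a weaker form (for protocols of $\poly(N)$ communication) by combining Sherstov's communication lower bound for \gap\ with the compression theorem of Barak et al. So there is no in-paper proof to compare your attempt against; the evaluation below is of your sketch on its own merits.

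Your setup is correct: the chain rule together with coordinate independence under $\phi$ gives $I(X,Y;\Pi)\ge\sum_i I(X_i,Y_i;\Pi)$; the rectangle property $X\perp Y\mid\Pi$ splits each summand as $I(X_i;\Pi)+I(Y_i;\Pi)$; the identity $\mu_\pi-1/(2\eps^2)=-2\sum_i(\alpha_i^\pi-\tfrac12)(\beta_i^\pi-\tfrac12)$ is right; and your diagnosis that a direct Cauchy--Schwarz on the cross terms yields only $\Omega(1)$ is exactly the obstruction.

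The genuine gap is that your round-by-round ``resolution'' does not overcome this obstruction---it reproduces it. When Alice sends message $M_m$, the $\beta_i$'s are frozen and only the $\alpha_i$'s move, so $\mu_m-\mu_{m-1}=-2\sum_i(\Delta\alpha_i)(\beta_i-\tfrac12)$. Bounding $E[(\mu_m-\mu_{m-1})^2]$ by Cauchy--Schwarz gives at best $4\bigl(\sum_i(\beta_i-\tfrac12)^2\bigr)\cdot\sum_i E[(\Delta\alpha_i)^2]$, and relating $\sum_i E[(\Delta\alpha_i)^2]$ to $I(X;M_m\mid\Pi_{<m})$ via the concavity of $H_b$ still leaves the prefactor $\sum_i(\beta_i-\tfrac12)^2$, which can be $\Theta(1/\eps^2)$. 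Summing over rounds then yields only $\var(\mu_\Pi)\le O(1/\eps^2)\cdot I(X,Y;\Pi)$, i.e., $I(X,Y;\Pi)=\Omega(1)$---the same loss as before. Avoiding this requires an additional idea (for instance, tracking that $\sum_i(\beta_i-\tfrac12)^2$ is itself charged to $I(Y;\Pi)$ and amortizing carefully, or a different route entirely); your sketch does not supply it, and invoking ``the framework of \cite{CKW12}'' is not a proof. A secondary gap: the preliminary claim $\var(\mu_\Pi)=\Omega(1/\eps^2)$ is not immediate either, since the YES region is two-sided and a YES-transcript could a priori have $\mu_\pi\approx 1/(2\eps^2)$ with mass split between the two tails; ruling this out requires the unimodality (log-concavity) of the sum-of-independent-Bernoullis distribution $S\mid\pi$ together with an anti-concentration step that you only allude to.
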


\iffull{\subsection{The \XOR\ Problem}}
\ifconf{{\bf The \XOR\ Problem:}} 
In the \XOR\ problem we have $k$ sites $S_1, \ldots, S_k$. Each site $S_i\ (i = 1,2,\ldots,k)$ holds a block ${b}_i = \{b_{i,1}, \ldots, b_{i,n}\}$ of $n$ bits. Let $b = (b_1, \ldots, b_k)$ be the inputs of $k$ sites. Let $b_{[k], \ell}$ be the $k$ sites' inputs on the $\ell$-th coordinate. W.l.o.g., we assume $k$ is a power of $2$. 
The $k$ sites want to compute the following function.
\begin{eqnarray*}
\begin{array}{l}
k\textrm{-XOR}({b}_{1}, \ldots, {b}_{k})  =  \left\{
  \begin{array}{rl}
   1, & \text{if $\exists$ $j \in [n]$ such that $b_{i,j} = 1$} \\ 
   & \quad \text{for exactly $k/2$ $i$'s},\\
   0, & \text{otherwise.}
  \end{array}
  \right.
\end{array}
\end{eqnarray*}

We define the input distribution $\varphi_n$ for the \XOR\ problem as follows. For each coordinate $\ell\ (\ell \in [n])$ there is a variable $D_{\ell}$ chosen uniformly at random from $\{1, 2, \ldots, k\}$. Conditioned on $D_{\ell}$, all but the $D_{\ell}$-th sites set their inputs to $0$, whereas the $D_{\ell}$-th site sets its input to $0$ or $1$ with equal probability. We call the $D_{\ell}$-th site the special site in the $\ell$-th coordinate. Let $\varphi_1$ denote this input distribution on one coordinate.

Next, we choose a random special coordinate $M \in [n]$ and replace the $k$ sites' inputs on the $M$-th coordinate as follows: 
For the first $k/2$ sites, with probability $1/2$ we replace all $k/2$ sites' inputs with $0$, and with probability $1/2$ we replace all $k/2$ sites' inputs with $1$; and we independently perform the same operation to the second $k/2$ sites. Let $\psi_1$ denote the distribution on this special coordinate. And let $\psi_n$ denote the input distribution that on the special coordinate $M$ is distributed as $\psi_1$ and on each of the remaining $n-1$ coordinates is distributed as $\varphi_1$.

Let ${B}, B_i, B_{[k], \ell}, B_{i, \ell}$ be the corresponding random variables of ${b}, b_i, b_{[k], \ell}, b_{i, \ell}$ when the input of \XOR\ is chosen according to the distribution $\psi_n$. Let $D = \{D_1, \ldots, D_n\}$. 

\subsection{The \guess\ Problem}
The \guess\ problem can be seen as an augmentation of the \XOR\ problem. The $k$ sites are still given an input $B$, as that in the \XOR\ problem. In addition, we introduce another player called the {\em predictor}. The predictor will be given an input $Z$, but it cannot talk to any of the $k$ sites (that is, it cannot write anything to the blackboard). After the $k$ sites finish the whole communication, the predictor computes the final output $g(\Pi(B), Z)$, where $\Pi(B)$ is the transcript of the $k$ sites' communication on their input $B$, and $g$ is the (deterministic) maximum likelihood function (see Proposition \ref{prop:mut}). In this section when we talk about protocol transcripts, we always mean the concatenation of the messages exchanged by the $k$ sites, but excluding the output of the predictor. 

In the \guess\ problem, the goal is for the predictor to output $(X, Y)$, where $X = 1$ if the inputs of the first $k/2$ sites in the special coordinate $M$ are all $1$ and $X = 0$ otherwise, and $Y = 1$ if the inputs of the second $k/2$ sites in the special coordinate $M$ are all $1$ and $Y = 0$ otherwise.  We say the instance $B$ is a $00$-instance if $X = Y = 0$, a $10$-instance if $X = 1$ and $Y = 0$, a $01$-instance if $X = 0$ and $Y = 1$, and a $11$-instance if $X = Y = 1$. Let $S \in \{00, 01, 10, 11\}$ be the type of an instance.

We define the following input distribution for \guess: We assign an input $B \sim \psi_n$ to the $k$ sites, and $Z = \{D, M\}$ to the predictor, where $D, M$ are those used to construct the $B$ in the distribution $\psi_n$. Slightly abusing notation, we also use $\psi_n$ to denote the joint distribution of $B$ and $\{D, M\}$. We do the same for the one coordinate distributions $\psi_1$ and $\varphi_1$. That is, we also use $\psi_1$ (or $\varphi_1$) to denote the joint distribution of $B_{[k], \ell}$ and $D_\ell$ for a single coordinate $\ell$.

\begin{theorem}
\label{thm:XOR}
Let $\Pi$ be the transcript of any randomized protocol for \guess\ on input distribution $\psi_n$ with success probability $1/4 + \Omega(1)$. Then we have $I(B; \Pi\ |\ D, M) = {\Omega}(n/k)$, where information is measured~\footnote{When we say that the information is measured with respect to a distribution $\mu$ we mean that the inputs to the protocol are distributed according to $\mu$ when computing the mutual information (note that there is also randomness used by $\Pi$ when measuring the mutual information).} with respect to the input distribution $\psi_n$. 
\end{theorem}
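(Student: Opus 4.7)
The plan is to lower bound $I(B;\Pi\mid D,M)$ by applying the chain rule across the $n$ coordinates and showing each resulting term is $\Omega(1/k)$ via a Hellinger distance argument, analogous to the approach used for $k$-party set-disjointness in \cite{BYJKS02}. The chain rule gives
\[
I(B;\Pi\mid D,M) \;=\; \sum_{\ell=1}^n I(B_{[k],\ell};\Pi\mid B_{<\ell}, D, M),
\]
so it suffices to show that a constant fraction of the summands are $\Omega(1/k)$.

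To lower bound the $\ell$-th summand, I reduce to a single-coordinate subproblem on $\psi_1$ as follows. Using the public coin, sample $D$ uniformly and fix $M=\ell$; each site uses private coins to fill in coordinates $\ell'\neq\ell$ according to $\varphi_1\mid D_{\ell'}$; place the $\psi_1$-distributed single-coordinate input at position $\ell$, run $\Pi$ on the constructed input, and have the predictor apply the $\guess$-predictor with $(D,M=\ell)$. The constructed distribution on $B$ is exactly $\psi_n\mid M=\ell$, so this reduction yields a single-coordinate protocol with success probability equal to $\Pr[\Pi \text{ succeeds}\mid M=\ell]$. By averaging over $\ell$, a constant fraction of $\ell$'s yield success $1/4+\Omega(1)$. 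For such ``good'' $\ell$, Proposition~\ref{prop:diameter} and Proposition~\ref{prop:relation} imply that the four transcript distributions $P_s$ ($s\in\{00,01,10,11\}$) of the reduced protocol are pairwise $\Omega(1)$-apart in Hellinger distance; in particular $h(P_{10},P_{01})=\Omega(1)$.

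Now consider the length-$k$ hybrid path of inputs from $(1^{k/2},0^{k/2})$ to $(0^{k/2},1^{k/2})$ obtained by flipping one site's bit at a time, and let $h_i$ be the Hellinger distance between the consecutive hybrid transcript distributions. Cauchy--Schwarz and triangle inequality for Hellinger distance give
\[
\sum_{i=1}^k h_i^2 \;\ge\; \frac{h^2(P_{10},P_{01})}{k} \;=\; \Omega(1/k).
\]
The rectangle (cut-and-paste) property of multi-party communication protocols, combined with the standard Hellinger-to-mutual-information bound (e.g.\ \cite[Lemma~6.2]{BYJKS02}), upper bounds each $h_i^2$ by the mutual information contribution of site $i$'s bit conditioned on preceding sites; summing via the chain rule gives $I(B_{[k],\ell};\Pi\mid B_{<\ell},D,M=\ell)=\Omega(1/k)$. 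Since $\Pi$'s transcript is a function of $B$ alone and is oblivious to $M$, the same Hellinger argument (which only uses that the transcript distinguishes coord-$\ell$ inputs) extends to give $I(B_{[k],\ell};\Pi\mid B_{<\ell},D,M)=\Omega(1/k)$ even when we average over $M$. Summing over the $\Omega(n)$ good coordinates completes the proof.

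The main obstacle is the Hellinger / cut-and-paste step: unlike standard $k$-party disjointness, the inputs in $\psi_1$ are highly correlated (all $k/2$ sites in each half share a single bit), so the length-$k$ hybrid path of single-bit flips passes through off-support inputs, and the rectangle property and Hellinger-to-mutual-information bound must be invoked on these off-support intermediate inputs. A secondary subtlety is lifting the per-coordinate bound from its natural conditional statement (under $M=\ell$) to the statement averaged over all $M$, which crucially uses that $\Pi$ is $M$-oblivious.
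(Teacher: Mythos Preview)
Your proposal has the right overall architecture (chain rule over coordinates, embedding to a single-coordinate problem, Hellinger-distance endgame), but the core Hellinger step does not work as written, and the ``secondary subtlety'' about $M$ is more serious than you suggest.

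\textbf{The Hellinger step.} The chain-rule summand $I(B_{[k],\ell};\Pi\mid B_{<\ell},D,M)$ is measured under $\psi_n$, where almost surely $M\neq\ell$ and hence $B_{[k],\ell}\sim\varphi_1$. Under $\varphi_1$ the only randomness in coordinate $\ell$ is a \emph{single} bit at site $D_\ell$, so the standard Hellinger-to-information inequality yields
\[
I(B_{[k],\ell};\Pi\mid D_\ell,\ldots)\;\ge\;\Omega(1/k)\sum_{i=1}^k h^2\bigl(\Pi(\mathbf{0}),\Pi(\mathbf{e}_i)\bigr).
\]
What must be lower-bounded, then, is $\sum_i h^2(\Pi(\mathbf{0}),\Pi(\mathbf{e}_i))$. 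Your hybrid path from $1^{k/2}0^{k/2}$ to $0^{k/2}1^{k/2}$ produces distances $h_i=h(\Pi(w_i),\Pi(w_{i+1}))$ between transcript laws on inputs with many $1$'s; these are \emph{not} the quantities $h(\Pi(\mathbf{0}),\Pi(\mathbf{e}_i))$, and the rectangle property does not identify them (the product form $\prod_j f_j(x_j,\pi)$ shows that flipping bit $i$ against background $w_{-i}$ is genuinely different from flipping it against background $0^{k-1}$). There is also no chain rule over sites into which your $h_i$'s feed: under $\varphi_1$ only one site has entropy, and under $\psi_1$ only two sites do, so a $k$-term site-wise chain rule collapses. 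The paper closes this gap with Theorem~7 of \cite{Jayram09}, a nontrivial recursive cut-and-paste argument showing $\sum_i h^2(\Pi(\mathbf{0}),\Pi(\mathbf{e}_i))=\Omega\bigl(h^2(\Pi(\mathbf{0}),\Pi(t))\bigr)$ for each $t\in\{1^{k/2}0^{k/2},0^{k/2}1^{k/2},\mathbf{1}\}$ \emph{without} the $1/k$ loss your Cauchy--Schwarz step incurs; combining these via triangle inequality and the diameter lemma then gives the $\Omega(1)$ lower bound.

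\textbf{The distribution issue.} Your embedding places the $\psi_1$-input at coordinate $\ell$ and sets $M=\ell$, yielding $\psi_n\mid M=\ell$. But the chain-rule term you need is under $\psi_n$ with $M$ uniform, where coordinate $\ell$ is (typically) distributed as $\varphi_1$, not $\psi_1$. Saying ``$\Pi$ is $M$-oblivious'' does not bridge this: the transcript depends only on $B$, but the \emph{distribution} of $B_{[k],\ell}$ depends on whether $M=\ell$, and that is exactly what the mutual information is sensitive to. The paper handles this by conditioning on $S=00$ and $M>\kappa$ (so coordinate $\ell\le\kappa$ is genuinely $\varphi_1$-distributed), and then paying a total-variation correction $O(1/\sqrt{n-\kappa})$ when transporting the success guarantee from $\psi_n\mid M=\ell$ to the conditioned distribution. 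Your argument needs an analogous step.
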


\begin{proof}
By a Markov inequality, we know that for $\kappa_1 = \Omega(n)$ of $\ell \in [n]$, the protocol succeeds with probability $1/4 + \Omega(1)$ conditioned on $M = \ell$. Call an $\ell$ for which this holds {\em eligible}. Let $\kappa = n - \kappa_1/2$. We say an $\ell$ is {\em good} if $\ell$ is both eligible and $\ell \le \kappa$. Thus there are $\kappa_1 + \kappa - n = \Omega(n)$ good $\ell$.
Let $D_{-\ell}$ denote the random variable $D$ with $\ell$-th component missing. We say a $d$ is {\em nice} for a good $\ell$ if the protocol succeeds with probability $1/4 + \Omega(1)$ conditioned on $M = \ell$ and $D_{-\ell} = d$. By another Markov inequality, it holds that at least an $\Omega(1)$ fraction of $d$ is nice for a good $\ell$. 

Now we consider $I(B ; \Pi\ |\ D, M, S = 00, M > \kappa)$. Note that if we can show that $I(B ; \Pi\ |\ D, M, S = 00, M > \kappa) = \Omega(n/k)$, then it follows that $I(B ; \Pi\ |\ D, M) = \Omega(n/k)$, since $H(S) = 2$, and $\Pr[(S = 00) \wedge (M > \kappa)] = 1/4 \cdot {\Omega}(1) = {\Omega}(1)$. By the chain rule, expanding the conditioning, and letting $B_{[k], <\ell}$ be the inputs to the $k$ sites on the first $\ell-1$ coordinates, we have
\begin{eqnarray}
&& I(B ; \Pi\ |\ D, M, S = 00, M > \kappa) \nonumber \\
& = & \sum_{\ell \in [n]} I(B_{[k], \ell} ; \Pi\ |\ D, M, S = 00, M > \kappa, B_{[k], <\ell}) \label{eq:a-1}  \\
& = & \sum_{\ell \in [n]} I(B_{[k], \ell} ; \Pi\ |\ D, M, S = 00, M > \kappa) \label{eq:a-2} \\
& \geq & \sum_{\textrm{good } \ell} I(B_{[k], \ell} ; \Pi\ |\ D, M, S = 00, M > \kappa) \nonumber \\
& = &  \sum_{\textrm{good }\ell} \sum_{d} \Pr[D_{-\ell} = d] \cdot I(B_{[k], \ell} ; \Pi \ |\ D_{\ell}, M, S = 00, M > \kappa, D_{-\ell} = d) \nonumber \\
& \ge &  \sum_{\textrm{good }\ell} \sum_{\text{nice $d$ for $\ell$}} \Pr[D_{-\ell} = d] \cdot I(B_{[k], \ell} ; \Pi \ |\ D_{\ell}, M, S = 00, M > \kappa, D_{-\ell} = d),
\label{eq:a-3} 
\end{eqnarray}
where (\ref{eq:a-1}) to (\ref{eq:a-2}) is because $B_{[k], <\ell}$ is independent of $B_{[k], \ell}$ given other conditions, and we apply item $3$ in Proposition \ref{prop:mut}. 

Now let's focus on a good $\ell$ and a nice $d$ for $\ell$. We define a protocol $\Pi_{\ell, d}$ which on input $(A_1, \ldots, A_k, R) \sim \psi_1$, attempts to output $(U,V)$,
where $U = 1$ if $A_1 = \ldots = A_{k/2} = 1$ and $U = 0$ otherwise, and $V = 1$ if $A_{k/2+1} = \ldots = A_k = 1$ and $V = 0$ otherwise. Here $A_1, \ldots, A_k$ are inputs of the $k$ sites and $R$ is the input of the predictor.
The protocol $\Pi_{\ell, d}$ has $(\ell, d)$ hardwired into it, and works as follows. First, the $k$ sites construct an input $B$ for \guess\ distributed according to $\psi_n$, using $\{A_1, \ldots, A_k\}, d$ and their private randomness, without any communication: They set the input on the $\ell$-th coordinate to be $B_{[k], \ell} = \{A_1, \ldots, A_k\}$, and use their private randomness to sample the inputs for coordinates $\ell' \neq \ell$ using the value $d$ and the fact that the inputs to the $k$ sites are independent conditioned on $D_{-\ell} = d$. The predictor sets its input to be $\{D_\ell = R, D_{-\ell} = d, M = \ell\}$. Next, the $k$ sites run $\Pi$ on their input $B$. Finally, the predictor outputs $(U, V) = g(\Pi(B), R, d, \ell)$.

Let $\psi_{n, 00, d} = (\psi_n|S=00, D_{-\ell} = d)$. Let $\varphi_{n, d} =  (\varphi_n|D_{-\ell} = d)$. Let $\psi_{n, 00, d}^\ell$ and $\varphi_{n, d}^\ell$ be the distributions $\psi_{n, 00, d}$ and $\varphi_{n, d}$ of $B$ after embedding $(A_1, \ldots, A_k)$ to the $\ell$-th coordinate conditioned on $M > \kappa$ (recall that $\kappa \ge \ell$ for a good $\ell$, thus $M > \ell$), respectively. Since $\ell$ is good and $d$ is nice for $\ell$, 
it follows that 
\begin{eqnarray*}
\Pr[\Pi_{\ell, d}(A_1, \ldots, A_k, R) = (U,V)\ |\ S = 00, M > \kappa] & = & 1/4 + {\Omega}(1) - \TV(\psi_{n, 00, d}^\ell, \varphi_{n, d}^\ell) \\
& \ge & 1/4 + {\Omega}(1),
\end{eqnarray*} 
where the probability is taken over $(A_1, \ldots, A_k, R) \sim \psi_1$, and $\TV(\psi_{n, 00, d}^\ell, \varphi_{n, d}^\ell)$ is the total variation distance between distributions $\psi_{n, 00, d}^\ell$ and $\varphi_{n, d}^\ell$, which can be bounded by $O(1/\sqrt{n - \kappa}) = O(1/\sqrt{n})$. The proof will be given  shortly, and here
is where we use that $\kappa = n - \Omega(n)$. 

Hence, for a good $\ell$, and for a nice $d$ for $\ell$, we have
\begin{eqnarray}
 I(B_{[k],\ell} ; \Pi \ |\ D_{\ell}, M, S = 00, M > \kappa, D_{-\ell} = d) \ge I(A_1, \ldots, A_k ; \Pi'\ |\ R, M, S = 00, M > \kappa), \label{eq:b-1}
\end{eqnarray} 
where $\Pi'$ is the protocol that minimizes the information cost when the information (on the right side of (\ref{eq:b-1})) is measured with respect to the marginal distribution of $\psi_{n}$ on a good coordinate $\ell$, and succeeds in outputting $(U,V)$ with probability $1/4 + {\Omega}(1)$ when the $k$ sites and the predictor get input $\{A_1, \ldots, A_k, R\} \sim \psi_1$,  
The information on the left side of (\ref{eq:b-1}) is measured with respect to $\psi_{n}$. 

Combining (\ref{eq:a-3}) and (\ref{eq:b-1}), and given that we have $\Omega(n)$ good $\ell$, as well as at least an $\Omega(1)$ fraction of $d$ that are nice for any good $\ell$, we have
\begin{eqnarray}
\label{eq:b-2}
&& I(B ; \Pi\ |\ D, M, S = 00, M > \kappa) \ge {\Omega}(n) \cdot I(A_1, \ldots, A_k ; \Pi'\ |\ R, M, S = 00, M > \kappa).
\end{eqnarray} 

Now we analyze $\TV(\psi_{n, 00, d}^\ell, \varphi_{n, d}^\ell)$. First note that we can just focus on coordinates $\ell' > \ell$, since the distributions $\psi_{n, 00, d}^\ell$ and $\varphi_{n, d}^\ell$ are the same on coordinates $\ell' \le \ell$. Let $\varsigma_1$ and $\varsigma_2$ be the distribution of $\psi_{n, 00, d}^\ell$ and $\varphi_{n, d}^\ell$ on coordinates $\{\ell'\ |\ \ell' > \ell, \ell' \in [n]\}$, respectively. Observe that $\varsigma_2$ can be thought as a binomial distribution: for each coordinate $\ell' > \ell$, we set $B_{D_{\ell'}, \ell'}$ randomly to be $0$ or $1$ with equal probability. The remaining $B_{i, \ell'}\ (i \neq D_{\ell'})$ are all set to be $0$. Moreover, $\varsigma_1$ can be thought in the following way: we first sample according to $\varsigma_2$, and then randomly choose a coordinate $M > \kappa \ge \ell$ and reset $B_{D_M, M} = 0$. Since $M$ is random, the total variation distance between $\varsigma_1$ and $\varsigma_2$ is the total variation distance between Binomial$(n-\ell, 1/2)$ and Binomial$(n-\ell-1, 1/2)$ (that is, by symmetry, only the number of $1$'s in $\{B_{D_{\ell'}, \ell'}\ |\ \ell' > \ell\}$ matters), which is at most $O(1/\sqrt{n - \ell}) \le O(1/\sqrt{n - \kappa})$ (see, e.g., Fact 2.4 of \cite{GMRZ11}).

Let $\mathcal{E}$ be the event that all sites have the value $0$ in the $M$-th 
coordinate when the inputs are drawn from $\varphi_n$. 
Observe that $(\varphi_n | \mathcal{E}) = (\psi_{n}|S=00)$, thus
$$ I(B ; \Pi\ |\ D, M, S = 00, M > \kappa) \ge {\Omega}(n) \cdot I(A_1, \ldots, A_k ; \Pi'\ |\ R, M, M > \kappa, \mathcal{E}),$$
where the information on the left hand side is measured with respect to inputs $(B, \{D, M\})$ drawn from $\psi_{n}$, and the information on the right hand side is measured with respect to inputs $(A_1, \ldots, A_k, R)$ drawn from the marginal distribution of $\varphi_n$ on a good coordinate $\ell$, which is equivalent to $\varphi_1$ since $M > \kappa \ge \ell$. By the third item of Proposition~\ref{prop:mut}, and using that $A_1, \ldots, A_k$ are independent of $\mathcal{E}$ given $M > \kappa \ge \ell$ and $R$, we obtain
\begin{eqnarray*}
I(A_1, \ldots, A_k ; \Pi'\ |\ R, M, M > \kappa, \mathcal{E}) \ge I(A_1, \ldots, A_k\ ; \Pi'\ |\ R, M, M > \kappa).
\end{eqnarray*}
Finally, since $M, M > \kappa$ are independent of $A_1, \ldots, A_k$ and $R$, it holds that $I(A_1, \ldots, A_k \ ; \Pi'\ |\ R, M, M > \kappa) \ge I(A_1, \ldots, A_k ; \Pi'\ |\ R)$, where the information is measured with respect to the input distribution $\varphi_1$, and $\Pi'$ is a
protocol which succeeds with probability $1/4 +{\Omega}(1)$ on $\psi_1$. 

It remains to show that $I(A_1, \ldots, A_k ; \Pi'\ |\ R) = {\Omega}(1/k)$, where the information is measured with respect to $\varphi_1$, and the correctness is measured with respect to $\psi_1$. Let $\mathbf{0}$ be the all-$0$ vector, $\mathbf{1}$ be the all-$1$ vector and $\mathbf{e_i}$ be the standard basis vector with the $i$-th coordinate being $1$. By the relationship between mutual information
and Hellinger distance (see Proposition 2.51 and Proposition 2.53 of \cite{Bar-Yossef:02}), we have
\begin{eqnarray*}
I(A_1, \ldots, A_k ; \Pi'\ |\ R) & = & (1/k) \cdot \sum_{i \in [k]}  I(A_1, \ldots, A_k ; \Pi'\ |\ R = i) \\
& = & \textstyle \Omega(1/k) \cdot \sum_{i \in [k]} h^2(\Pi'(\mathbf{0}), \Pi'(\mathbf{e_i})),
\end{eqnarray*}
where $h(\cdot, \cdot)$ is the Hellinger distance (see Section \ref{sec:prelim} for a definition). 
Now we assume $k$ and $k/2$ are powers of $2$, and we use Theorem $7$ of \cite{Jayram09}, which says that the following three statements hold:
\begin{enumerate}
\item $\sum_{i \in [k]} h^2(\Pi'(\mathbf{0}), \Pi'(\mathbf{e_{i}})) = \Omega(1) \cdot h^2(\Pi'(\mathbf{0}), \Pi'(1^{k/2}0^{k/2}))$.

\item $\sum_{i \in [k]} h^2(\Pi'(\mathbf{0}), \Pi'(\mathbf{e_{i}})) = \Omega(1) \cdot h^2(\Pi'(\mathbf{0}), \Pi'(0^{k/2}1^{k/2}))$.

\item $\sum_{i \in [k]} h^2(\Pi'(\mathbf{0}), \Pi'(\mathbf{e_{i}})) = \Omega(1) \cdot h^2(\Pi'(\mathbf{0}), \Pi'(\mathbf{1}))$.
\end{enumerate}
It follows that
\begin{eqnarray*}
I(A_1, \ldots, A_k ; \Pi'\ |\ R) 
&=& \Omega(1/k) \cdot \left(h^2(\Pi'(\mathbf{0}), \Pi'(1^{k/2}0^{k/2})) \right. \\
&&\left. + h^2(\Pi'(\mathbf{0}), \Pi'(0^{k/2}1^{k/2}))+  h^2(\Pi'(\mathbf{0}), \Pi'(\mathbf{1}))\right).
\end{eqnarray*}
By the Cauchy-Schwartz inequality we have, 
\begin{eqnarray*}
I(A_1, \ldots, A_k ; \Pi'\ |\ R) &=& \Omega(1/k)  \cdot \left(h(\Pi'(\mathbf{0}), \Pi'(1^{k/2}0^{k/2}) ) \right. \\
&&\left.+ h(\Pi'(\mathbf{0}), \Pi'(0^{k/2}1^{k/2}) )+  h(\Pi'(\mathbf{0}), \Pi'(\mathbf{1}))\right)^2.
\end{eqnarray*}
We can rewrite this as (by changing the constant in the $\Omega(1/k)$):
\begin{eqnarray*}
I(A_1, \ldots, A_k ; \Pi'\ |\ R) 
&=& \Omega(1/k) \cdot \left (3h(\Pi'(\mathbf{0}), \Pi'(1^{k/2}0^{k/2}) ) \right. \\
&&\left.+ 3h(\Pi'(\mathbf{0}), \Pi'(0^{k/2}1^{k/2}) )+  3h(\Pi'(\mathbf{0}), \Pi'(\mathbf{1}))\right)^2.
\end{eqnarray*}
By the triangle inequality of the Hellinger distance, we get
\begin{enumerate}
\item $h(\Pi'(\mathbf{0}), \Pi'(\mathbf{1})) + h(\Pi'(\mathbf{0}), \Pi'(1^{k/2}0^{k/2}) ) \ge h(\Pi'(\mathbf{1}), \Pi'(1^{k/2}0^{k/2}))$,

\item $h(\Pi'(\mathbf{0}), \Pi'(\mathbf{1})) + h(\Pi'(\mathbf{0}), \Pi'(0^{k/2}1^{k/2}) ) \ge h(\Pi'(\mathbf{1}), \Pi'(0^{k/2}1^{k/2}))$,

\item $h(\Pi'(\mathbf{0}), \Pi'(0^{k/2}1^{k/2}) + h(\Pi'(\mathbf{0}), \Pi'(1^{k/2}0^{k/2}) ) \ge h(\Pi'(0^{k/2}1^{k/2}), \Pi'(1^{k/2}0^{k/2}))$.
\end{enumerate}
Thus we have
\begin{eqnarray*}
\label{eq:BTX-2}
\textstyle I(A_1, \ldots, A_k ; \Pi'\ |\ R)
&=& \textstyle \Omega(1/k) \cdot \left(\sum_{a,b \in \{\mathbf{0},\ \mathbf{1},\ 1^{k/2} 0^{k/2},\ 0^{k/2} 1^{k/2}\}} h(\Pi'(a), \Pi'(b))\right)^2.
\end{eqnarray*}
The claim is that at least one of $h(\Pi'(a), \Pi'(b))$ in the RHS in Equation~(\ref{eq:BTX-2}) is ${\Omega}(1)$, and this will complete the proof. By Proposition \ref{sec:varHell}, this is true if the total variation distance $\TV(\Pi'(a), \Pi'(b)) = {\Omega}(1)$ for a pair $(a,b) \in \{\mathbf{0}, \ \mathbf{1}, \ 1^{k/2}0^{k/2}, \ 0^{k/2}1^{k/2}\}^2$. We show that there must be such a pair $(a,b)$, for the following reasons. 

For a $z \in \{\mathbf{0}, \ \mathbf{1}, \ 1^{k/2}0^{k/2}, \ 0^{k/2}1^{k/2}\}$, let $\chi(z) = xy$ if $z = x^{k/2}y^{k/2}$. First, there must exist a pair $(a, b)$ such that
\begin{eqnarray} 
\label{eq:TV-1}
\TV(g(\Pi'(a), R), g(\Pi'(b), R)) = \Omega(1),
\end{eqnarray}
since otherwise, if $\TV(g(\Pi'(a), R), g(\Pi'(b), R)) = o(1)$ for all $(a,b) \in \{\mathbf{0}, \ \mathbf{1}, \ 1^{k/2}0^{k/2}, \ 0^{k/2}1^{k/2}\}^2$, then by Proposition~\ref{prop:diameter}, for any pair $(a, b)$ with $a \neq b$, and a $c$ chosen from $\{a, b\}$ uniformly at random, it holds that $\Pr[g(\Pi'(c), R) = \chi(c)] \le 1/2 + o(1)$, where the probability is taken over the distribution of $c$. Consequently, for a $c$ chosen from $\{\mathbf{0}, \ \mathbf{1}, \ 1^{k/2}0^{k/2}, \ 0^{k/2}1^{k/2}\}$ uniformly at random, it holds that $\Pr[g(\Pi'(c), R) = \chi(c)] \le 1/4 + o(1)$, violating the protocol's success probability guarantee. Second, since $g$ is a deterministic function, and $R$ is independent of $A_1, \ldots, A_k$ when $(A_1, \ldots, A_k, R) \sim \psi_1$, we have
\begin{eqnarray} 
\label{eq:TV-2}
\TV(\Pi'(a), \Pi'(b)) \ge \TV(g(\Pi'(a), R), g(\Pi'(b), R)).
\end{eqnarray}
The claim follows from (\ref{eq:TV-1}) and (\ref{eq:TV-2}).
\end{proof}

\subsection{The \thresh\ Problem}
The input of the \thresh\ problem is a concatenation of $1/\eps^2$ copies of inputs of the \XOR\ problem. That is, each site $S_i\ (i = 1, 2, \ldots, k)$ holds an input consisting of $1/\eps^2$ blocks each of which is an input for a site in the \XOR\ problem. More precisely, each $S_i\ (i \in [k])$ holds an input ${b_i} = ({b}_i^1,  \ldots, {b}_i^{1/\eps^2})$ where  ${b}_i^j = (b_{i,1}^j, \ldots, b_{i,n}^j)\ (j \in [1/\eps^2])$ is a vector of $n$ bits. Let $b^j = (b^j_1,  \ldots, b^j_k)$ be the list of inputs to the $k$ sites in the $j$-th block. Let ${b} = ({b_1},  \ldots, {b_k})$ be the list of inputs to the $k$ sites. In the \thresh\ problem the $k$ sites want to compute the following.
\begin{eqnarray*}
\begin{array}{l}
k\textrm{-BTX}({b}_1, \ldots, {b}_k)  =  \left\{
  \begin{array}{rl}
   1, & \text{if } \abs{\sum\limits_{j \in [1/\eps^2]} k\textrm{-XOR}({b}_1^j,  \ldots, {b}_k^j) - \frac{1}{2\eps^2}}  \ge 2/\eps,\\
   0, & \text{if }  \abs{\sum\limits_{j \in [1/\eps^2]} k\textrm{-XOR}({b}_1^j,  \ldots, {b}_k^j) - \frac{1}{2\eps^2}}  \le 1/\eps, \\
   *, & \text{otherwise}.
  \end{array}
  \right.
\end{array}
\end{eqnarray*}

We define the input distribution $\nu$ for the \thresh\ problem as follows: The input of the $k$ sites in each block is chosen independently according to the input distribution $\psi_n$, which is defined for the \XOR\ problem. Let ${B}, {B}_i, B^j, {B}_i^j, B_{i, \ell}^j$ be the corresponding random variables of ${b}, {b}_i, b^j, {b}_i^j, b_{i, \ell}^j$ when the input of \thresh\ is chosen according to the distribution $\nu$. Let ${D}^j = ({D}_1^j,  \ldots, {D}_{n}^j)$ where ${D}_\ell^j\ (\ell \in [n], j \in [1/\eps^2])$ is the special site in the $\ell$-th coordinate of block $j$, and let ${D} = ({D}^1, \ldots, {D}^{1/\eps^2})$. Let ${M} = (M^1, \ldots, M^{1/\eps^2})$ where $M^j$ is the special coordinate in block $j$. Let $S = (S^1, \ldots, S^{1/\eps^2})$ where $S^j \in \{00, 01, 10, 11\}$ is the type of the \XOR\ instance in block $j$. 

For each block $j\ (j \in [1/\eps^2])$, let $X^j = 1$ if the inputs of the first $k/2$ sites in the special coordinate $M^j$ are all $1$ and $X^j = 0$ otherwise; and similarly  let $Y^j = 1$ if the inputs of the second $k/2$ sites in the coordinate $M^j$ are all $1$ and $Y^j = 0$ otherwise. Let $X = (X^1, \ldots, X^{1/\eps^2})$ and $Y = (Y^1, \ldots, Y^{1/\eps^2})$. 


\paragraph{Linking \thresh\ to \gap.} 
We show that Alice and Bob, who are given $(X, Y) \sim \phi$, can construct a $2$-player protocol $\Pi'$ for \gap$(X, Y)$ using a protocol $\Pi$ for \thresh.

They first construct an input for \thresh\ using $(X, Y)$. Alice simulates the first $k/2$ players, and Bob simulates the second $k/2$ players. Alice and Bob use the public randomness to generate $M^j$ and $D^j$  for each $j \in [1/\eps^2]$. For each $j \in [1/\eps^2]$, Alice sets  the $M^j$-th coordinate of each of the first $k/2$ players to $X^j$. Similarly, Bob sets the $M^j$-th coordinate of each of the last $k/2$ players to $Y^j$. Alice and Bob then use private randomness and the $D^j$ vectors to fill in the remaining coordinates. Observe that the resulting inputs $B$ (for \thresh)  is distributed according to $\nu$.

Alice and Bob then run the protocol $\Pi$ on $B$. Every time a message is sent between any two of the $k$ players in $\Pi$, it is appended to the transcript. That is, if the two players are among the first $k/2$, Alice still forwards this message to Bob. If the two players are among the last $k/2$, Bob still forwards this message to Alice. If the message is between a player in the first group and the second group, Alice and Bob exchange a message. The output of $\Pi'$ is equal to that of $\Pi$. 

\begin{theorem} 
\label{thm:preBTX}
Let $\Pi$ be the transcript of any randomized protocol for \thresh\ on input distribution $\nu$ with error probability $\delta$ for a sufficiently small constant $\delta$.
Then $I(X, Y ; \Pi\ |\ M, D) = {\Omega}(1/\eps^2)$, where the information is measured with respect to the uniform distribution on $X,Y$. 
\end{theorem}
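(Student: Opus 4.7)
The plan is to prove Theorem~\ref{thm:preBTX} by reducing \gap{} on distribution $\phi$ to \thresh{} on distribution $\nu$, using the $2$-player simulation of $\Pi$ described immediately above the statement, and then invoking Theorem~\ref{thm:gap} on the resulting $2$-party protocol $\Pi'$.

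First, I would check that the simulation is faithful. With $(X,Y)\sim\phi$ and $(M,D)$ sampled from public randomness, the input $B$ that Alice and Bob assemble is distributed exactly according to $\nu$: each block is produced independently as in $\psi_n$, with the special coordinate carrying the uniform bits $(X^j,Y^j)$ and the remaining coordinates filled in by independent private randomness indexed by $D^j$. Moreover, the $2$-party transcript $\Pi'$ is, by construction, bitwise identical to the $k$-party transcript $\Pi$ run on $B$, since all inter-site messages are simply forwarded.

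Second, and this is the crux, I would verify the deterministic identity $k\text{-BTX}(B) = 2\text{-GAP-ORT}(X,Y)$ under $\nu$. Fix a block $j$. Under $\psi_n$, at any non-special coordinate $\ell \neq M^j$ only the $D_\ell^j$-th site carries a potentially non-zero bit, so the number of $1$'s in that coordinate lies in $\{0,1\}$, which is strictly less than $k/2$ for $k\ge 3$. Therefore the $k$-XOR indicator depends only on the special coordinate $M^j$, where the weight is $0,\; k/2,\; k/2,\; k$ for $S^j\in\{00,10,01,11\}$ respectively. Hence $k\text{-XOR}(B^j) = \mathbf{1}[S^j\in\{10,01\}] = X^j \oplus Y^j$, and summing over the $1/\eps^2$ blocks gives the claimed identity between the defining quantities of the two problems. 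Consequently, a $\delta$-error protocol $\Pi$ for \thresh{} on $\nu$ induces a $\delta$-error protocol $\Pi'$ for \gap{} on $\phi$, and we pick $\delta \le \iota$ so that Theorem~\ref{thm:gap} applies.

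Third, applying Theorem~\ref{thm:gap} to $\Pi'$ gives $I(X,Y;\Pi') = \Omega(1/\eps^2)$. Since $\Pi'\equiv\Pi$ as random transcripts and $(M,D)$ is public randomness of $\Pi'$ independent of $(X,Y)$, item~3 of Proposition~\ref{prop:mut} yields
\[
I(X,Y;\Pi \mid M,D) \;=\; I(X,Y;\Pi' \mid M,D) \;\ge\; I(X,Y;\Pi') \;=\; \Omega(1/\eps^2),
\]
which is exactly the desired bound. The one step that really requires care is the pointwise identity $k\text{-XOR}(B^j)=X^j\oplus Y^j$ in the second paragraph, because it is what allows us to convert the \thresh{} guarantee into a \gap{} guarantee without any loss in error; everything else is bookkeeping about the simulation or a direct invocation of earlier results.
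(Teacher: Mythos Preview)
Your proposal is correct and follows essentially the same approach as the paper: reduce \gap\ to \thresh\ via the two-player simulation and invoke Theorem~\ref{thm:gap}. The only difference is cosmetic: the paper applies a Markov bound over $(M,D)$ to find ``good'' fixings with error at most $2\delta$ and then invokes Theorem~\ref{thm:gap} pointwise, whereas you apply Theorem~\ref{thm:gap} once to the public-coin protocol $\Pi'$ and then use the independence of $(M,D)$ and $(X,Y)$ (item~3 of Proposition~\ref{prop:mut}) to pass to the conditional information---your route is slightly cleaner and avoids the factor-two loss in the error parameter.
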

\begin{proof}
By a Markov inequality, we have that for at least $1/2$ fraction of choices of $(M, D) = (m, d)$, the $k$-party protocol $\Pi$ computes \thresh\ with error probability at most $2\delta$. Say such a pair $(m, d)$ {\em good}. According to our reduction, we have that the transcript of $\Pi$ is equal to the transcript of $\Pi'$ and the output of $\Pi'$ is the same as the output of that of $\Pi$. Hence, for a good pair $(m, d)$, the $2$-party protocol $\Pi'$ computes \gap\ with error probability at most $2\delta$ on distribution $\phi$. We have
\begin{eqnarray*}
I(X, Y ; \Pi\ |\ M, D) &= &\sum_{(m, d)} \Pr[(M, D) = (m, d)] \left[I(X, Y ; \Pi\ |\ (M, D) = (m, d))\right] \nonumber \\
& \ge &  \sum_{\text{good } (m, d)} \Pr[(M, D) = (m, d)]  \left[I(X, Y ; \Pi\ |\ (M, D) = (m, d))\right] \nonumber \\
& = &  \sum_{\text{good } (m, d)} \Pr[(M, D) = (m, d)] \left[I(X, Y ; \Pi'\ |\ (M, D) = (m, d))\right] \label{eq:z-2} \\
& \ge & 1/2 \cdot {\Omega}(1/\eps^2) \ge {\Omega}(1/\eps^2). \quad (\text{By Theorem~\ref{thm:gap}}) \nonumber
\end{eqnarray*}
\end{proof}

Now we are ready to prove our main theorem for \thresh.
\begin{theorem}
\label{thm:BTX}
Let $\Pi$ be the transcript of any randomized protocol for \thresh\ on input distribution $\nu$ with error probability $\delta$ for a sufficiently small constant $\delta$. We have $I(B; \Pi\ |\ M, D) \ge {\Omega}(n/(k\eps^2))$, where the information is measured with respect to the input distribution $\nu$. 
\end{theorem}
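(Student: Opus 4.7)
The plan is a block-by-block reduction that chains Theorem~\ref{thm:preBTX} with Theorem~\ref{thm:XOR}. First, applying the chain rule to Theorem~\ref{thm:preBTX} I would write
\begin{equation*}
\sum_{j=1}^{1/\eps^2} I(X^j, Y^j; \Pi \mid M, D, X^{<j}, Y^{<j}) \;=\; I(X, Y; \Pi \mid M, D) \;=\; \Omega(1/\eps^2).
\end{equation*}
Since each summand is at most $H(X^j, Y^j) = 2$, an averaging argument produces an $\Omega(1)$ fraction of blocks $j$ (call them \emph{good}) with $I(X^j, Y^j; \Pi \mid M, D, X^{<j}, Y^{<j}) = \Omega(1)$. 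By the definition of $\psi_n$ (the replacement on the special coordinate is independent of $D$ and of all other blocks), $(X^j, Y^j)$ is uniform on $\{00,01,10,11\}$ and independent of $(M, D, X^{<j}, Y^{<j})$, so $H(X^j, Y^j \mid \Pi, M, D, X^{<j}, Y^{<j}) \le 2 - \Omega(1)$ for every good $j$.

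Next, for each good $j$ I would invoke the maximum-likelihood principle (Proposition~\ref{prop:mut}, item~7) to obtain a deterministic function $g_j$ mapping $(\Pi, M, D, X^{<j}, Y^{<j})$ to $\{00, 01, 10, 11\}$ that recovers $(X^j, Y^j)$ with probability at least $2^{-(2-\Omega(1))} = 1/4 + \Omega(1)$. Using $g_j$ I would build a randomized protocol $\Pi_j$ for \guess\ on input distribution $\psi_n$: given a single-block \guess\ instance $(A_1, \dots, A_k, (D^j, M^j)) \sim \psi_n$, the $k$ sites use public randomness to sample the remaining blocks $(B^{\neq j}, D^{\neq j}, M^{\neq j})$ from the $\nu$-marginal, so the full input is distributed as $\nu$, and they then simulate $\Pi$ verbatim, incurring no extra communication. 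The predictor of $\Pi_j$ has $(D^j, M^j)$ and the public randomness, so it can reconstruct $B^{<j}, M^{<j}$ and hence $X^{<j}, Y^{<j}$, and it simply outputs $g_j(\Pi, M, D, X^{<j}, Y^{<j})$. By construction $\Pi_j$ succeeds with probability $1/4 + \Omega(1)$ on $\psi_n$, so Theorem~\ref{thm:XOR} yields $I(B^j; \Pi \mid D^j, M^j) = \Omega(n/k)$; this quantity is the same under $\psi_n$-with-public-randomness and under $\nu$, because the joint distribution of $(B^j, D^j, M^j, \Pi)$ is identical in both settings.

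To conclude, I would apply the chain rule and item~3 of Proposition~\ref{prop:mut}, using that $B^j$ is independent of $(M^{\neq j}, D^{\neq j}, B^{<j})$ given $(M^j, D^j)$ because the blocks are mutually independent under $\nu$:
\begin{equation*}
I(B; \Pi \mid M, D) \;=\; \sum_{j=1}^{1/\eps^2} I(B^j; \Pi \mid M, D, B^{<j}) \;\ge\; \sum_{j=1}^{1/\eps^2} I(B^j; \Pi \mid M^j, D^j) \;\ge\; \sum_{\text{good } j} \Omega(n/k) \;=\; \Omega(n/(k\eps^2)).
\end{equation*}
The main point requiring care is the reduction in the second paragraph: I must verify that the combined input really has distribution $\nu$ (which follows from block independence) and that the deterministic MLE predictor $g_j$, although defined abstractly, can be implemented by the \guess-predictor using only its input $(D^j, M^j)$, the public randomness, and the transcript of $\Pi$. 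Once these points are checked, the rest is routine information-theoretic bookkeeping.
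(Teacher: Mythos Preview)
Your proposal is correct and follows essentially the same approach as the paper: chain rule plus Theorem~\ref{thm:preBTX} to isolate $\Omega(1/\eps^2)$ good blocks, the MLE principle to convert $\Omega(1)$ information into a $1/4+\Omega(1)$ success probability for a predictor, a reduction to \guess\ so that Theorem~\ref{thm:XOR} applies, and a final chain-rule summation. The one cosmetic difference is that you fill in the other blocks with public randomness and work with a single protocol $\Pi_j$, whereas the paper first expands $I(B^j;\Pi\mid M,D,X^{<j},Y^{<j})$ over fixings $(M^{-j},D^{-j},X^{<j},Y^{<j})=(m,d,x,y)$, applies a second Markov step to find good tuples, and hardwires each good tuple into a separate protocol $\Pi_{j,x,y,m,d}$; your route is slightly cleaner but equivalent in spirit.
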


\iffull{
\begin{proof}
By Theorem~\ref{thm:preBTX} we have $I(X, Y; \Pi\ |\ M, D) = {\Omega}(1/\eps^2)$. Using the chain rule and a Markov inequality, it holds that 
$$I(X^j, Y^j ; \Pi\ |\ M, D, X^{<j}, Y^{<j}) = {\Omega}(1)$$ for at least ${\Omega}(1/\eps^2)$ of $j \in [1/\eps^2]$, where $X^{<j} = \{X^1, \ldots, X^{j-1}\}$ and similarly for $Y^{<j}$. We say such a $j$ for which this holds is {\em good}. 

Now we consider a good $j \in [1/\eps^2]$, and show that 
$$I(B^j; \Pi\ |\ M, D, B^{<j}) = {\Omega}(n/k).$$ Since $B^{<j}$ determines $(X^{<j}, Y^{<j})$ given $M$, and $B^{<j}$ is independent of $B^j$ given $M$ and $D$, by item $3$ of Proposition \ref{prop:mut}, it suffices to prove that $I(B^j; \Pi\ |\ M, D, X^{<j}, Y^{<j}) = {\Omega}(n/k)$. By expanding the conditioning, we can write $I(B^j; \Pi\ |\ M, D, X^{<j}, Y^{<j})$ as
\begin{eqnarray*}
\label{eq:F2-DS-1}
&&\sum_{(x, y, m, d)}\Pr\left[(M^{-j}, D^{-j}, X^{<j}, Y^{<j}) = (m, d, x, y)\right] \\
&&\quad \quad \quad \quad \times I\left(B^j ; \Pi\ |\ M^j, D^j, (M^{-j}, D^{-j}, X^{<j}, Y^{<j}) = (m, d, x, y)\right).
\end{eqnarray*}
By the definition of a good $j \in [1/\eps^2]$, we know by a Markov bound that with probability ${\Omega}(1)$ over the choice of $(x, y, m, d)$, we have 
\begin{equation*}
I(X^j, Y^j ; \Pi \ |\ M^j, D^j, (M^{-j}, D^{-j}, X^{<j}, Y^{<j}) = (m, d, x, y)) = {\Omega}(1).
\end{equation*}
Call these $(x, y, m, d)$ for which this holds {\em good} for $j$. 

Note that 
$H(X^j, Y^j \ |\  M^j, D^j, (M^{-j}, D^{-j}, X^{<j}, Y^{<j}) = (m, d, x, y)) = 2$, since $M, D, X^{<j}, Y^{<j}$ are independent of $X, Y$, Therefore, for a good $j$ and a tuple $(x, y, m, d)$ that is good for $j$, we have
\begin{eqnarray*}
&&H(X^j, Y^j \ |\  \Pi, M^j, D^j, (M^{-j}, D^{-j}, X^{<j}, Y^{<j}) = (m, d, x, y)) \\
&=& H(X^j, Y^j \ |\  M^j, D^j, (M^{-j}, D^{-j}, X^{<j}, Y^{<j}) = (m, d, x, y)) \\
&&- I(X^j, Y^j ; \Pi \ |\ M^j, D^j, (M^{-j}, D^{-j}, X^{<j}, Y^{<j}) = (m, d, x, y)) \\
& = & 2 - \Omega(1).
\end{eqnarray*}
By the Maximum Likelihood Principle in Proposition \ref{prop:mut}, the maximum likelihood
function $g$ computes $(X^j, Y^j)$ from the transcript of $\Pi$ and $M^j, D^j, (j, m, d, x, y)$,
with error probability $\delta_g$, over $X^j, Y^j, M^j, D^j$ and the randomness of $\Pi$, satisfying
\begin{eqnarray}
\label{eq:error}
1 - \delta_g \ge \frac{1}{2^{H(X^j, Y^j \ |\ \Pi, M^j, D^j, (M^{-j}, D^{-j}, X^{<j}, Y^{<j}) = (m, d, x, y))}} \ge \frac{1}{2^{2-{\Omega}(1)}} 
= \frac{1}{4} + {\Omega}(1),
\end{eqnarray}

Now for a good $j$, and a tuple $(x, y, m, d)$ that is good for $j$, we define a protocol $\Pi_{j, x, y, m, d}$ which computes the \guess\ problem on input $(A_1, \ldots, A_k, \{Q, R\}) \sim \psi_n$ correctly with probability $1/4 + \Omega(1)$. Here $A_1, \ldots, A_k$ are inputs of the $k$ sites and $\{Q, R\}$ is the input of the predictor.
The protocol $\Pi_{j, x, y, m, d}$ has $(j, x, y, m, d)$ hardwired into it, and works as follows. First, the $k$ sites construct an input $B$ for the \thresh\ problem distributed according to $\nu$, using $\{A_1, \ldots, A_k\}$, $(j, x, y, m, d)$ and their private randomness, without any communication: They set $B^j = \{A_1, \ldots, A_k\}$, and use their private randomness to sample inputs for blocks $j' \neq j$ using the values $(x, y, m, d)$ and the fact that the inputs to the $k$ sites are independent conditioned on $(x, y, m, d)$. The predictor sets its input to be $\{M^j = Q, D^j = R, (M^{-j}, D^{-j}, X^{<j}, Y^{<j}) = (m, d, x, y) \}$. Next, the $k$ sites run $\Pi$ on their input $B$. Finally, the predictor outputs $(X^j, Y^j) = g(\Pi(B), Q, R, j, m, d, x, y)$.

Combining these with Theorem~\ref{thm:XOR}, we obtain
\begin{eqnarray*}
\label{eq:BTX-1}
I(B; \Pi\ |\ M, D) & \ge & \sum_{\text{good } j} I(B^j; \Pi\ |\ M, D, B^{<j}) \\
& \ge & \sum_{\text{good } j} \sum_{\text{good } (x, y, m, d) \text{ for } j}\Pr\left[(M^{-j}, D^{-j}, X^{<j}, Y^{<j}) = (m, d, x, y)\right]\\
&& \quad \quad \quad \quad \times I\left(B^j ; \Pi\ |\ M^j, D^j, (M^{-j}, D^{-j}, X^{<j}, Y^{<j}) = (m, d, x, y)\right) \\
& = & \Omega(1/\eps^2) \cdot {\Omega}(n/k) \quad (\mbox{By (\ref{eq:error}) and Theorem~\ref{thm:XOR}}) \\
&\ge& {\Omega}(n/(k\eps^2)).
\end{eqnarray*}
This completes the proof. 
\end{proof}
}

\iffull{By Proposition~\ref{prop:icost}}\ifconf{By the fact} that says that the randomized communication complexity is always at least the conditional information cost\ifconf{ (see the preliminaries of the full version of this paper)}, we have the following immediate corollary. 
\begin{corollary}
\label{cor:BTX}
Any randomized protocol  that computes \thresh\ on input distribution $\nu$ with error probability $\delta$ for some sufficient small constant $\delta$ has communication complexity ${\Omega}(n/(k\eps^2))$.
\end{corollary}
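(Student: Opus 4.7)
The plan is to derive this corollary almost immediately from Theorem~\ref{thm:BTX}, which is the real workhorse: it establishes the conditional information cost lower bound $I(B;\Pi\mid M,D) = \Omega(n/(k\eps^2))$ with information measured with respect to the input distribution $\nu$. The corollary merely converts this information-theoretic bound into a communication bound.

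First, I would invoke the standard fact (stated as Proposition~\ref{prop:icost} in the full version, or equivalently in Bar-Yossef's thesis) that for any distribution $\mu$, any distribution $\lambda$ that partitions $\mu$, and any error parameter $\delta$,
\begin{equation*}
R^\delta(f) \;\ge\; \mathrm{IC}_\mu^\delta(f) \;\ge\; \mathrm{IC}_\mu^\delta(f\mid \lambda).
\end{equation*}
In our setting, $\mu = \nu$ is the input distribution on $B$, and the partitioning variable is $\lambda = (M,D)$. Note that conditioned on $(M,D)$, the input $B$ is indeed a product distribution across sites (each coordinate's input is independently generated given the designated special site $D_\ell^j$ and the special coordinate $M^j$), so $(M,D)$ legitimately partitions $\nu$ in the sense required.

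Next, given any randomized protocol $\Pi$ for \thresh\ on $\nu$ with error probability $\delta$, its communication cost is at least its conditional information cost $I(B;\Pi\mid M,D)$ under $\nu$. Applying Theorem~\ref{thm:BTX} to $\Pi$ yields $I(B;\Pi\mid M,D) = \Omega(n/(k\eps^2))$, and chaining the two inequalities gives the desired communication lower bound of $\Omega(n/(k\eps^2))$.

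There is no real obstacle here; the only thing to be slightly careful about is to confirm that $(M,D)$ does partition $\nu$ into a product distribution, which is immediate from the construction of $\nu$ in the paper (each block's inputs are independent given $(M^j,D^j)$, and across sites within a block the non-special coordinates are deterministically $0$ while the special site's bit is an independent fair coin). Once that is noted, the corollary follows in one line from Theorem~\ref{thm:BTX}.
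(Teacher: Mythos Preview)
Your approach is exactly the paper's: invoke Proposition~\ref{prop:icost} (communication is at least conditional information cost) and then apply Theorem~\ref{thm:BTX}. The paper literally derives the corollary in one sentence from these two facts.

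One small correction, though it does not break the argument: your claim that $(M,D)$ partitions $\nu$ into a product distribution across the $k$ sites is not quite right. On the special coordinate $M^j$ of each block, the first $k/2$ sites all share the \emph{same} random bit (and likewise the second $k/2$ sites), so conditioned on $(M,D)$ the sites' inputs are still correlated there. Fortunately this does not matter for the corollary: the chain $|\Pi| \ge H(\Pi) \ge H(\Pi \mid M,D) \ge I(B;\Pi \mid M,D)$ holds for any conditioning variable, not only one that makes the input a product. The partitioning hypothesis in Proposition~\ref{prop:icost} is needed for direct-sum decompositions (as used inside the proof of Theorem~\ref{thm:BTX}), not for the basic inequality you invoke here.
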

\iffull{\subsection{The Complexity of $F_p\ (p > 1)$}}
\ifconf{{\bf The Complexity of $F_p\ (p > 1)$:}}
\label{sec:F2-reduction}
\iffull{The input of $\eps$-approximate $F_p\ (p > 1)$ is chosen to be the same as \thresh\ by setting $n = k^p$. That is, we choose $\{B_1, \ldots, B_k\}$ randomly according to distribution $\nu$. $B_i$ is the input vector for site $S_i$ consisting of $1/\eps^2$ blocks each having $n =k^p$ coordinates.} We prove the lower bound for $F_p$ by performing a reduction from \thresh.

\begin{lemma}
\label{lem:F2-reduction}
If there exists a protocol $\cal P'$ that computes a $(1 + \alpha\eps)$-approximate $F_p\ (p > 1)$ for a sufficiently small constant $\alpha$ on input distribution $\nu$ with communication complexity $C$ and error probability at most $\delta$, then there exists a protocol $\cal P$ for \thresh\ on input distribution $\nu$ with communication complexity $C$ and error probability at most $3\delta + \sigma$, where $\sigma$ is an arbitrarily small constant.
\end{lemma}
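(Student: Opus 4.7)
The plan is a direct reduction: the $k$ sites run ${\cal P}'$ on their input $B\sim\nu$ to obtain $\tilde F$, a $(1+\alpha\eps)$-approximation of $F_p(B)$, and then output a \thresh\ decision as a deterministic function of $\tilde F$. The key is a structural decomposition of $F_p(B)$ under $\nu$. Each non-special coordinate carries at most one Bernoulli bit across the $k$ sites (the rest are $0$), while each special coordinate in block $j$ contributes $0$, $(k/2)^p$, or $k^p$ to $F_p$ according to whether $S^j = 00$, $S^j\in\{01,10\}$, or $S^j=11$. Writing $W$ for the number of non-special coordinates whose bit is $1$, $a$ for the number of type-$11$ blocks, and $b=\sum_j k\textrm{-XOR}(B_1^j,\ldots,B_k^j)$ for the number of type-$01$ or type-$10$ blocks, we obtain the exact identity
\[
F_p(B) \;=\; W + b\cdot(k/2)^p + a\cdot k^p.
\]

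Using $a+b+c = 1/\eps^2$, $\E[a\mid b]=(1/\eps^2-b)/2$, and $\E[W]=\Theta(k^p/\eps^2)$, this yields the approximate linear relation $\E[F_p\mid b]\approx k^p/\eps^2 - b\cdot(k/2)^p(2^{p-1}-1)$. The coefficient of $b$ is nonzero precisely because $p>1$, which explains the hypothesis of the lemma. I would invert this relation to define the estimator $\hat b = (k^p/\eps^2 - \tilde F)/[(k/2)^p(2^{p-1}-1)]$ and have ${\cal P}$ output $1$ iff $|\hat b - 1/(2\eps^2)| > 3/(2\eps)$, a threshold placed in the middle of the ambiguous band of \thresh; any non-$\star$ instance is then classified correctly as long as $|\hat b - b| < 1/(2\eps)$. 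Standard Chernoff bounds give that with probability at least $1-\sigma$, both $W$ and $a$ lie within a constant multiple of their standard deviations, which are $O(k^{p/2}/\eps)$ and $O(1/\eps)$ respectively, of their (conditional) means.

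The hard part is controlling the $a$-induced noise: its $\Theta(1/\eps)$ fluctuation is amplified by a factor of $k^p$ in $F_p$, which is the same order as the $(k/2)^p/\eps$ signal that one swing of $b$ by $1/\eps$ produces. One absorbs the low-probability event that $a$ or $W$ is atypical into the $\sigma$ slack of the lemma, and on the complementary event chooses the constant $\alpha$ small enough (depending on $p$ and on the Chernoff cutoff implicit in $\sigma$) to make the approximation term $O(\alpha/\eps)$ strictly smaller than the budget remaining in $1/(2\eps)$ after subtracting the $W$- and $a$-concentration slacks. A union bound over the failure of ${\cal P}'$ and the two concentration events then gives total error at most $3\delta + \sigma$; the factor $3$ is loose and leaves room for a constant-factor amplification of ${\cal P}'$ used to upgrade its one-sided $(1+\alpha\eps)$ guarantee to the two-sided form needed when inverting the linear relation above.
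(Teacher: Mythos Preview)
Your single-call approach has a genuine gap, and you have in fact identified it yourself without resolving it. With $F_p(B)=W+b\,(k/2)^p+a\,k^p$, the conditional fluctuation of $a$ given $b$ is $\Theta(1/\eps)$, so the induced error in your estimator $\hat b$ is
\[
\frac{k^p}{(k/2)^p\,(2^{p-1}-1)}\cdot |a-\E[a\mid b]|
\;=\;\frac{2^p}{2^{p-1}-1}\cdot\Theta(1/\eps),
\]
which for every $p>1$ is a fixed constant larger than $1$ times $1/\eps$. Concretely, for $p=2$ the coefficient is $4$. This already exceeds the entire budget $1/(2\eps)$ before you even account for the $W$-slack or the approximation error, so there is no ``budget remaining'' into which to fit the $O(\alpha/\eps)$ term. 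Shrinking $\alpha$ does nothing to the $a$-slack, and shrinking $\sigma$ only \emph{increases} the Chernoff constant $C_\sigma$; the event ``$a$ deviates from its conditional mean by $\Theta(1/\eps)$'' is a constant-probability event, not a rare one you can push into $\sigma$.

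The paper resolves this by running ${\cal P}'$ three times rather than once: on all $k$ sites to get $W_0$, on the first $k/2$ sites to get $W_1$, and on the last $k/2$ sites to get $W_2$. In $W_1+W_2$ a type-$11$ block contributes $2(k/2)^p$ rather than $k^p$, so the linear combination $2^{p-1}(W_1+W_2)-W_0$ cancels the $a$-term exactly while leaving a nonzero coefficient $(2^{p-1}-1)(k/2)^p$ on $b$. After that cancellation the only remaining random slack is from $W$ (which is genuinely lower-order, $O(k^{p/2}/\eps)$) and from the three approximation errors, and now choosing $\alpha$ small does suffice. This is also the source of the $3\delta$ in the statement: it is a union bound over the three invocations of ${\cal P}'$, not an amplification step.
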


\iffull{
\begin{proof}
We pick a random input $B = \{B_1, \ldots, B_k\}$ from distribution $\nu$. Each coordinate (column) of $B$ represents an item. Thus we have a total of $1/\eps^2 \cdot k^p = k^p/\eps^2$ possible items, which we identify with the set $[k^p/\eps^2]$. If we view each input vector $B_i\ (i \in [k])$ as a set, then each site has a subset of $[k^p/\eps^2]$ corresponding to these $1$ bits. Let $W_0$ be the exact value of $F_p(B)$. $W_0$ can be written as the sum of three components:
\begin{eqnarray} 
W_0  & = & \left(\frac{k^p - 1}{2\eps^2} + Q\right) \cdot 1^p   +  \left(\frac{1}{2\eps^2} + U\right) \cdot (k/2)^p + \left(\frac{1}{4\eps^2} + V\right) \cdot k^p, \label{eq:F2-1}
\end{eqnarray} 
where $Q, U, V$ are random variables (it will be clear why we write it this way in what follows). The first term of the RHS of Equation~(\ref{eq:F2-1}) is the contribution of non-special coordinates across all blocks in each of which one site has $1$. The second term is the contribution of the special coordinates across all blocks in each of which $k/2$ sites have $1$. The third term is the contribution of the special coordinates across all blocks in each of which all $k$ sites have $1$. 

Note that \thresh$(B_1, \ldots, B_k)$ is $1$ if $\abs{U} \ge 2/\eps$ and $0$ if $\abs{U} \le 1/\eps$. Our goal is to use a protocol $\cal P'$ for $F_p$ to construct a protocol $\cal P$ for \thresh\ such that we can differentiate the two cases (i.e., $\abs{U} \ge 2/\eps$ or $\abs{U} \le 1/\eps$) with a very good probability. 

Given a random input $B$, let $W_1$ be the exact $F_p$-value on the first $k/2$ sites, and $W_2$ be the exact $F_p$-value on the second $k/2$ sites. That is, $W_1 = F_p(B_1,  \ldots, B_{k/2})$ and 
$W_2 = F_p(B_{k/2+1},  \ldots, B_{k})$. We have
\begin{eqnarray} 
\label{eq:F2-2}
W_1 + W_2 & = & \left(\frac{k^p - 1}{2\eps^2} + Q\right) \cdot 1^p 
 +  \left(\frac{1}{2\eps^2} + U\right) \cdot (k/2)^p + \left(\frac{1}{4\eps^2} + V\right) \cdot 2 \cdot (k/2)^p.
\end{eqnarray} 
By Equation~(\ref{eq:F2-1}) and (\ref{eq:F2-2}) we can cancel out $V$:
\begin{eqnarray} 
2^{p-1}(W_1 + W_2) - W_0 & = &  (2^{p-1} - 1) \left(\left(\frac{k^p - 1}{2\eps^2} + Q\right) + \left(\frac{1}{2\eps^2} + U\right) \cdot (k/2)^p\right). \label{eq:F2-3}
\end{eqnarray}
Let $\tilde{W}_0$, $\tilde{W}_1$ and $\tilde{W}_2$ be the estimated $W_0$, $W_1$ and $W_2$ obtained by running $\cal P'$ on the $k$ sites' inputs, the first $k/2$ sites' inputs and the second $k/2$ sites' inputs, respectively. Observe that $W_0 \le (2^p+1)k^p/\eps^2$ and $W_1, W_2 \le 2k^p/\eps^2$. By the randomized approximation guarantee of $\cal P'$  and the discussion above we have that with probability at least $1 - 3\delta$,
{
\begin{equation} 
\label{eq:F2-4}
2^{p-1}(W_1 + W_2) - W_0  =  2^{p-1}(\tilde{W}_1 + \tilde{W}_2) -  \tilde{W}_0 \pm \beta' k^p/\eps,
\end{equation}
}
where $\abs{\beta'} \le 3(2^p+1)\alpha$. 

By a Chernoff bound we have that $\abs{Q} \le c_1 k^{p/2}/\eps$ with probability at least $1 - \sigma$, where $\sigma$ is an arbitrarily small constant and $c_1 \le \kappa \log^{1/2}(1/\sigma)$ for some universal constant $\kappa$.  Combining this fact with Equation (\ref{eq:F2-3}) and (\ref{eq:F2-4}) and letting $\tilde{W} = (2^{p-1}(\tilde{W}_1 + \tilde{W}_2) - \tilde{W}_0)/(2^{p-1} - 1)$, we have that with probability at least $1 - 3\delta - \sigma$,
\begin{eqnarray} 
\label{eq:F2-6}
U & = &  \frac{2^p \tilde{W}}{k^p} - \frac{2^p+1}{2\eps^2} - \frac{2^p \beta}{(2^{p-1}-1) \eps},
\end{eqnarray}
where $\abs{\beta} \le 3(2^p+1)\alpha + o(1)$.

\paragraph{Protocol $\cal P$.}  Given an input $B$ for \thresh, protocol $\cal P$ first uses $\cal P'$ to obtain the value $\tilde{W}$ described above, and then determines the answer to \thresh\ as follows:
\begin{eqnarray*}
\begin{array}{l}
k\textrm{-BTX}(B)  =  \left\{
  \begin{array}{rl}
   1, & \text{if} \quad \abs{2^p \tilde{W}/k^p - (2^p+1)/(2\eps^2)} \ge 1.5/\eps,\\
   0, & \text{otherwise}.
  \end{array}
  \right.
\end{array}
\end{eqnarray*}

\paragraph{Correctness.} Note that with probability at least $1 - 3\delta - \sigma$, we have $\abs{\beta} \le 3(2^p+1)\alpha + o(1)$, where $\alpha> 0$ is a sufficiently small constant, and thus $\abs{\frac{2^p\beta}{(2^{p-1}-1)\eps}} < 0.5/\eps$. Therefore, in this case protocol $\cal P$ will always succeed. 
\end{proof}
}

Theorem~\ref{thm:BTX} (set $n = k^p$) and Lemma~\ref{lem:F2-reduction} directly imply the following main theorem for $F_p$.
\begin{theorem}
\label{thm:F2}
Any protocol that computes a $(1 + \eps)$-approximate $F_p\ (p > 1)$ on input distribution $\nu$ with error probability $\delta$ for some sufficiently small constant $\delta$ has communication complexity ${\Omega}(k^{p-1}/\eps^2)$.
\end{theorem}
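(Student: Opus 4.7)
The plan is essentially to compose the two main ingredients already in place: Lemma~\ref{lem:F2-reduction}, which reduces \thresh\ to $(1+\alpha\eps)$-approximate $F_p$, and Corollary~\ref{cor:BTX}, which gives an $\Omega(n/(k\eps^2))$ communication lower bound for \thresh\ on the distribution $\nu$. The first step is to instantiate the per-block universe size as $n = k^p$, so that a \thresh\ instance lives on $k^p/\eps^2$ coordinates distributed across the $1/\eps^2$ blocks; this is exactly the setup expected by Lemma~\ref{lem:F2-reduction}.

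Next I would argue by contradiction. Suppose there exists a randomized protocol $\cal P'$ that computes a $(1+\eps)$-approximation to $F_p$ on input distribution $\nu$ with sufficiently small constant error probability $\delta$ and total communication $o(k^{p-1}/\eps^2)$. After rescaling $\eps$ by the constant factor $\alpha$ required by Lemma~\ref{lem:F2-reduction}, we may treat $\cal P'$ as a $(1+\alpha\eps)$-approximation protocol (only a constant-factor change in both $\eps^{-2}$ and the communication, absorbed by the $o(\cdot)$). Lemma~\ref{lem:F2-reduction} then hands us a protocol $\cal P$ for \thresh\ on distribution $\nu$ whose error is at most $3\delta + \sigma$ for an arbitrarily small constant $\sigma$, while invoking $\cal P'$ only a constant number of times (once on all $k$ sites, once on the first $k/2$, and once on the last $k/2$), so the communication remains $o(k^{p-1}/\eps^2)$.

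Finally, I choose the constants $\delta$ and $\sigma$ small enough that $3\delta + \sigma$ falls below the constant error threshold that Corollary~\ref{cor:BTX} requires. With $n = k^p$, Corollary~\ref{cor:BTX} then forces any protocol for \thresh\ at this error level to use at least $\Omega(n/(k\eps^2)) = \Omega(k^{p-1}/\eps^2)$ bits of communication, contradicting the assumed bound on $\cal P$. Therefore no such $\cal P'$ can exist, and the claimed $\Omega(k^{p-1}/\eps^2)$ lower bound holds.

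The only obstacles here are bookkeeping rather than conceptual: one must check that the constant blow-up in error ($\delta \mapsto 3\delta + \sigma$) and the constant blow-up in communication (from running $\cal P'$ on three different input partitions) both stay within the regime where Corollary~\ref{cor:BTX} applies, and that rescaling $\eps$ by the constant $\alpha$ required by Lemma~\ref{lem:F2-reduction} does not affect the asymptotic statement $\Omega(k^{p-1}/\eps^2)$. All real work has been done in Theorem~\ref{thm:BTX} (the direct-sum argument composing \gap\ with $1/\eps^2$ copies of \XOR) and in the algebraic identity (\ref{eq:F2-3}) underlying Lemma~\ref{lem:F2-reduction}, so this final step is genuinely just a one-line composition.
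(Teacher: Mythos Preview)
Your proposal is correct and takes essentially the same approach as the paper, which simply states that Theorem~\ref{thm:BTX} (equivalently Corollary~\ref{cor:BTX}) with $n=k^p$ together with Lemma~\ref{lem:F2-reduction} directly imply the result. Your write-up is more explicit about the constant bookkeeping (the $\eps\mapsto\alpha\eps$ rescaling and the $\delta\mapsto 3\delta+\sigma$ error blowup), but the argument is the same one-line composition the paper intends.
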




\section{An Upper Bound for $F_p\ (p > 1)$}
\label{sec:F2-UB}

We describe the following protocol to give a factor $(1+\Theta(\eps))$-approximation to $F_p$ 
at all points in time in the union of $k$ streams each held by a different site. 
Each site has a non-negative vector $v^i \in \mathbb{R}^m$,~\footnote{We use $m$ instead of $N$ for universe size only in this section.} which evolves with time, 
and at all times the coordinator holds a $(1+\Theta(\eps))$-approximation to $\|\sum_{i=1}^k v^i\|_p^p$. 
Let $n$ be the length of the union of the $k$ streams. 
We assume $n = \poly(m)$, and that $k$ is a power of $2$.  

As observed in \cite{CMY11}, up to a factor of $O(\eps^{-1} \log n \log (\eps^{-1} \log n))$ in 
communication, the problem is equivalent 
to the threshold problem: given a threshold $\tau$, with probability  $2/3$:
when $\|\sum_{i=1}^k v^i\|_p^p > \tau$,
the coordinator outputs $1$, when $\|\sum_{i=1}^k v^i\|_p^p < \tau/(1+\eps)$, the coordinator
outputs $0$, and for $\tau/(1+\eps) \leq \|\sum_{i=1}^k v^i\|_p^p \leq \tau$, the coordinator 
can output either
$0$ or $1$\footnote{To see the equivalence, by independent repetition, we can assume the success
probability of the protocol for the threshold problem is $1-\Theta(\eps/\log n)$. Then
we can run a protocol for each
$\tau = 1, (1+\eps), (1+\eps)^2, (1+\eps)^3, \ldots, \Theta(n^2)$, 
and we are correct on all instantiations with 
probability at least $2/3$.}.

We can thus assume we are given a threshold $\tau$ in the following
algorithm description. For notational convenience, define $\tau_{\ell} = \tau/2^{\ell}$ 
for an integer $\ell$. A nice property of the algorithm is that it is one-way, 
namely, all communication is from the sites to the coordinator. 
\iffull{
We leave optimization of the $\poly(\eps^{-1} \log n)$ factors
in the communication complexity to future work. 
}
\iffull{\subsection{Our Protocol}}
\ifconf{\\\\{\bf Our Protocol:}}
The protocol consists of four algorithms illustrated in Algorithm~\ref{algo:interp} to Algorithm~\ref{algo:coord}.
%
Let $v = \sum_{i=1}^k v^i$ at any point in time during the 
union of the $k$ streams. At times we will make the following assumptions on the algorithm
parameters $\gamma, B,$ and $r$: we assume $\gamma = \Theta(\eps)$ 
is sufficiently small, and
 $B = \poly(\eps^{-1} \log n)$ and $r = \Theta(\log n)$ are sufficiently large. 
\begin{algorithm}[t!]
\caption{Intepretation of the random public coin by sites and the coordinator}
\label{algo:interp}
$r = \Theta(\log n)$ \tcc{A parameter used by the sites and coordinator}
\For{$z = 1, 2, \ldots, r$}
{
\For{$\ell = 0, 1, 2, \ldots, \log m$}
{Create a set $S^{z}_{\ell}$ by including each coordinate in $[m]$ 
independently with probability $2^{-\ell}$.}
}
\end{algorithm}
\begin{algorithm}[t!]
\caption{Initialization at Coordinator}
\label{algo:coord-init}
$\gamma = \Theta(\eps), B = \poly(\eps^{-1} \log n)$. Choose $\eta \in [0, 1]$
uniformly at random
\tcc{Parameters}
\For{$z = 1, 2, \ldots, r$}
{
\For{$\ell = 0, 1, 2, \ldots, \log m$}
{
\For{$j = 1, 2, \ldots, m$}
{$f_{z, \ell, j} \leftarrow 0$ \tcc{Initialize all frequencies seen to $0$}
}
}
}
$out \leftarrow 0$ \tcc{The coordinator's current output}
\end{algorithm}
\begin{algorithm}[t!]
\caption{When Site $i$ receives an update $v^i \leftarrow v^i + e_j$ for 
standard unit vector $e_j$}
\label{algo:site-process}
\For{$z = 1, 2, \ldots, r$}
{\For{$\ell = 0, 1, 2, \ldots, \log m$}
{\If{$j \in S_{\ell}^z$ and $v_j^i > \tau_{\ell}^{1/p} / (kB)$}
  {With probability $\min (B/\tau_{\ell}^{1/p}, 1)$, 
send $(j, z, \ell)$
to the coordinator}
}
}
\end{algorithm}
\begin{algorithm}[t!]
\caption{Algorithm at Coordinator if a tuple $(j, z, \ell)$ arrives}
\label{algo:coord}
$f_{z, \ell, j} \leftarrow f_{z, \ell, j} + \tau_{\ell}^{1/p}/B$\\
\For{$h = 0, 1, 2, \ldots, O(\gamma^{-1} \log(n/\eta^p))$}
{\For{$z = 1, 2, \ldots, r$}
{
Choose $\ell$ for which 
$2^{\ell} \leq \frac{\tau}{\eta^p (1+\gamma)^{ph} B} < 2^{\ell+1}$, or $\ell = 0$ if no such
$\ell$ exists\\
Let $F_{z, h} = \{j \in [m] \mid f_{z, \ell, j} \in [\eta (1+\gamma)^h, \eta (1+\gamma)^{h+1})\}$\\ 
}
$\tilde{c}_{h} = \textrm{median}_z \ 2^{\ell} \cdot |F_{z, h}|$ 
}
\If{$\sum_{h \geq 0} \tilde{c}_{h} \cdot \eta^p \cdot (1+\gamma)^{ph} > (1-\eps)\tau$}
{$out \leftarrow 1$\\
Terminate the protocol
}
\end{algorithm}

\iffull{\subsection{Communication Cost}}
\begin{lemma}\label{lem:comm}
Consider any setting of $v^1, \ldots, v^k$ for which we have
$\|\sum_{i=1}^k v^i\|_p^p \leq 2^p \cdot \tau.$ Then the expected total communication
is $k^{p-1} \cdot \poly(\eps^{-1} \log n)$ bits.
\end{lemma}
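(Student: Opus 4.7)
The plan is to bound, for each level $\ell \in \{0, 1, \ldots, \log m\}$ and each repetition $z \in \{1, \ldots, r\}$, the expected number of tuples $(j, z, \ell)$ sent to the coordinator, and then sum over all $O(r \log m)$ such pairs. Since each tuple has size $O(\log n)$, this will give the claimed bound once I show that the per-$(\ell,z)$ expectation is at most $k^{p-1} \cdot \poly(\eps^{-1} \log n)$.

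Fix $(\ell, z)$ and let $T_\ell = \tau_\ell^{1/p}/(kB)$. In the insertion-only model, for a fixed site $i$ and coordinate $j$, the number of updates at which the guard $v_j^i > T_\ell$ holds is at most $v_j^i$ (the final value). Each such update is sent independently with probability $\mathbf{1}[j \in S_\ell^z] \cdot \min(B/\tau_\ell^{1/p}, 1)$, and $\Pr[j \in S_\ell^z] = 2^{-\ell}$. Writing $H_\ell = \{(i,j) : v_j^i > T_\ell\}$, the expected number of tuples sent at level $(\ell, z)$ is therefore at most
\[
E_\ell \;\le\; 2^{-\ell} \cdot \min\!\bigl(B/\tau_\ell^{1/p},\,1\bigr) \cdot \sum_{(i,j) \in H_\ell} v_j^i.
\]

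The key step is to exploit $\|v\|_p^p \le 2^p \tau$ to control both $|H_\ell|$ and the sum above. Since entries are non-negative and $p \ge 1$, for each $j$ we have $\sum_i (v_j^i)^p \le (\sum_i v_j^i)^p = v_j^p$, and hence $\sum_{i,j} (v_j^i)^p \le \|v\|_p^p \le 2^p \tau$. Every pair $(i,j) \in H_\ell$ satisfies $(v_j^i)^p > T_\ell^p = \tau_\ell/(kB)^p$, so
\[
|H_\ell| \;\le\; \frac{2^p \tau}{\tau_\ell/(kB)^p} \;=\; 2^p (kB)^p \cdot 2^\ell.
\]
By Hölder's inequality applied over $H_\ell$,
\[
\sum_{(i,j) \in H_\ell} v_j^i \;\le\; |H_\ell|^{1 - 1/p} \Bigl(\sum_{(i,j)\in H_\ell} (v_j^i)^p\Bigr)^{1/p} \;\le\; \bigl(2^p (kB)^p \cdot 2^\ell\bigr)^{1-1/p} \cdot (2^p \tau)^{1/p} \;=\; 2^p (kB)^{p-1} \cdot 2^{\ell(1-1/p)} \cdot \tau^{1/p}.
\]

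Substituting back and using $\tau_\ell = \tau \cdot 2^{-\ell}$, a short algebraic simplification gives
\[
E_\ell \;\le\; \min\!\Bigl( 2^p B^p k^{p-1},\; 2^p (kB)^{p-1} \tau_\ell^{1/p}\Bigr),
\]
where the two arguments of the min correspond to the two regimes of $\min(B/\tau_\ell^{1/p}, 1)$. In the second regime, $B/\tau_\ell^{1/p} \ge 1$ forces $\tau_\ell^{1/p} \le B$, so the bound collapses to $2^p B^p k^{p-1}$ in both cases. Summing over $O(r \log m)$ pairs $(\ell, z)$ and multiplying by the $O(\log n)$ bits per message yields total expected communication $O(r \log^2 n \cdot 2^p B^p k^{p-1}) = k^{p-1} \cdot \poly(\eps^{-1} \log n)$, as required.

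The main obstacle is the Hölder step: one must resist the temptation to bound the truncated sum $\sum_{(i,j) \in H_\ell} v_j^i$ by $\|v\|_1$ (which is too loose because of the universe size $m$), and instead leverage the $F_p$ budget on a per-pair basis via $\sum_i (v_j^i)^p \le v_j^p$. This is exactly where the threshold $T_\ell = \tau_\ell^{1/p}/(kB)$ in the site-side filter pays off: it makes $|H_\ell|$ small enough that the Hölder exponent $1 - 1/p$ lets the $k^p$ factor in $|H_\ell|$ be reduced to $k^{p-1}$, matching the lower bound from Section~\ref{sec:F2-LB}.
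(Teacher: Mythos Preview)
Your proof is correct and follows essentially the same route as the paper's. The only cosmetic differences are that the paper phrases the key step as an extremal principle (``$\sum v_j^i$ is maximized subject to $(v_j^i)^p \ge \tau_\ell/(kB)^p$ and $\sum (v_j^i)^p \le \|v^\ell\|_p^p$ when all the $v_j^i$ equal $\tau_\ell^{1/p}/(kB)$'') rather than invoking H\"older explicitly, and the paper first restricts to $j \in S_\ell^z$ and uses $\E[\|v^\ell\|_p^p] \le 2^p\tau_\ell$ whereas you factor out the sampling probability $2^{-\ell}$ up front; both lead to the identical per-level bound $2^p B^p k^{p-1}$.
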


\begin{proof}
Fix any particular $z \in [r]$ and $\ell \in [0, 1, \ldots, \log m]$. Let $v_{j}^{i, \ell}$ equal $v_j^i$ if $j \in S_{\ell}$
and equal $0$ otherwise. Let $v^{i, \ell}$ be the vector with coordinates $v_{j}^{i, \ell}$
for $j \in [m]$. Also let $v^{\ell} = \sum_{i=1}^k v^{i, \ell}$. 
Observe that
${\bf E}[\|v^{\ell}\|_p^p] \leq 2^p \cdot \tau/2^{\ell} = 2^p \cdot \tau_{\ell}$. 

Because of non-negativity of the $v^i$, 
$$\sum_{i=1}^k \sum_{j \in S_{\ell}} (v_{j}^{i, \ell})^p \leq \sum_{i=1}^k \|v^{i, \ell}\|_p^p 
\leq \|v^{\ell}\|_p^p.$$
Notice that a $j \in S_{\ell}$ is sent by a site 
with probability at most $B/\tau_{\ell}^{1/p}$ and only if 
$(v_j^i)^p \geq \frac{\tau_{\ell}}{k^pB^p}$. Hence the expected number of messages sent
for this $z$ and $\ell$, over all randomness, is
\begin{equation}
 \frac{B}{\tau_{\ell}^{1/p}} 
{\bf E}\left[\sum_{i,j \ \mid \ (v_j^i)^p \geq \frac{\tau_{\ell}}{k^p B^p}} v_j^i \right]
\leq \frac{B}{\tau_{\ell}^{1/p}} \cdot 
\frac{{\bf E}[\|v^{\ell}\|_p^p]}{\tau_{\ell}/(k^p B^p)} \cdot \frac{\tau_{\ell}^{1/p}}{kB} 
\leq \frac{2^p \cdot \tau_{\ell} \cdot k^{p-1} \cdot B^p}{\tau_{\ell}}
= 2^p \cdot k^{p-1} \cdot B^p,
\end{equation}
where we used that $\sum v_j^i$ is maximized subject to $(v_j^i)^p \geq \frac{\tau_{\ell}}{k^p B^p}$
and $\sum (v_j^i)^p \leq \|v^{\ell}\|_p^p$ when all the $v_j^i$ are equal to $\tau_{\ell}^{1/p}/(kB)$. 
Summing over all $z$ and $\ell$, it follows that the expected number of messages sent
in total is $O(k^{p-1} B^p \log^2 n)$. Since each message is $O(\log n)$ bits,
the expected number of bits is $k^{p-1} \cdot \poly(\eps^{-1} \log n)$. 
%
\end{proof}
\iffull{\subsection{Correctness}}
\ifconf{\noindent {\bf Correctness:}}
\ifconf{Due to space constraints, we defer the whole proof of correctness to the full version.}

\iffull{
We let $C > 0$ be a sufficiently large constant. 

\subsubsection{Concentration of Individual Frequencies}
We shall make use of the following standard multiplicative Chernoff bound.
\begin{fact}\label{fact:chernoff}
Let $X_1, \ldots X_s$ be i.i.d. Bernoulli$(q)$ random variables. Then for all $0 < \beta < 1$,
$$\Pr \left [|\sum_{i=1}^s X_i - qs| \geq \beta q s \right ] \leq 2 \cdot e^{-\frac{\beta^2qs}{3}}.$$
\end{fact}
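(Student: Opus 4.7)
My plan is to prove this by the standard Chernoff/Bernstein moment generating function argument, treating the upper and lower tails separately and then taking a union bound to account for the factor of $2$ in the stated bound. Let $S = \sum_{i=1}^s X_i$, so $\mathbb{E}[S] = qs$. For the upper tail, I would fix a parameter $t>0$ and apply Markov's inequality to the non-negative random variable $e^{tS}$, giving $\Pr[S \geq (1+\beta)qs] \leq \mathbb{E}[e^{tS}]/e^{t(1+\beta)qs}$.

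The next step would be to exploit independence and the Bernoulli structure: $\mathbb{E}[e^{tS}] = \prod_{i=1}^s \mathbb{E}[e^{tX_i}] = (1-q+qe^t)^s$. Using $1+x \leq e^x$ with $x = q(e^t-1)$, this is at most $\exp(qs(e^t-1))$. Choosing $t = \ln(1+\beta) > 0$ then yields $\Pr[S \geq (1+\beta)qs] \leq \exp\bigl(qs(\beta - (1+\beta)\ln(1+\beta))\bigr)$. The core analytic inequality I need is $(1+\beta)\ln(1+\beta) - \beta \geq \beta^2/3$ for $0 < \beta < 1$, which is straightforward from Taylor expansion (one checks the function $g(\beta) = (1+\beta)\ln(1+\beta) - \beta - \beta^2/3$ vanishes at $0$ and has non-negative derivative on $(0,1)$). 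This gives the upper-tail bound $\Pr[S \geq (1+\beta)qs] \leq e^{-\beta^2 qs/3}$.

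For the lower tail I would do the symmetric argument with $t < 0$: letting $t = \ln(1-\beta)$ in the analogous MGF chain gives $\Pr[S \leq (1-\beta)qs] \leq \exp\bigl(qs(-\beta-(1-\beta)\ln(1-\beta))\bigr)$, and the companion inequality $(1-\beta)\ln(1-\beta) + \beta \geq \beta^2/2 \geq \beta^2/3$ on $(0,1)$ yields $\Pr[S \leq (1-\beta)qs] \leq e^{-\beta^2 qs/2} \leq e^{-\beta^2 qs/3}$. Combining the two one-sided bounds via a union bound produces exactly the stated two-sided inequality $\Pr[|S-qs|\geq \beta qs] \leq 2 e^{-\beta^2 qs/3}$.

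There is no real obstacle here; this is a textbook derivation. The only mildly technical steps are verifying the two elementary convex-analysis inequalities for $(1\pm\beta)\ln(1\pm\beta)$ on $(0,1)$, and being careful that the weaker constant $1/3$ (rather than $1/2$) is chosen precisely so that a single bound covers both tails uniformly. In fact, since this is a completely standard fact, I would expect the authors simply to cite a reference (e.g., the Chernoff bound as stated in Mitzenmacher–Upfal or Dubhashi–Panconesi) rather than reproduce the derivation.
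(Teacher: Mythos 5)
Your derivation is correct: the MGF/Markov argument with $t=\ln(1\pm\beta)$, the two elementary inequalities $(1+\beta)\ln(1+\beta)-\beta\ge\beta^2/3$ and $(1-\beta)\ln(1-\beta)+\beta\ge\beta^2/2$ on $(0,1)$, and the final union bound all check out. The paper itself gives no proof of this fact, stating it only as ``the following standard multiplicative Chernoff bound,'' exactly as you anticipated, so there is nothing to compare against beyond noting that your write-up supplies the standard textbook argument the authors omit.
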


\begin{lemma}\label{lem:concentration}
For a sufficiently large constant $C > 0$, with probability $1-n^{-\Omega(C)}$,
for all $z$, $\ell$, $j \in S_{\ell}$, and all times in the union of
the $k$ streams,
\begin{enumerate}
\item $f_{z,\ell, j} \leq 2e \cdot v_j + \frac{C \tau_{\ell}^{1/p}\log n}{B}$, and
\item if $v_j \geq \frac{C (\log^5 n) \tau_{\ell}^{1/p}}{B\gamma^{10}}$, then 
$|f_{z, \ell, j} - v_j| \leq \frac{\gamma^5}{\log^2 n} \cdot v_j$. 
\end{enumerate}
\end{lemma}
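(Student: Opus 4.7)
\emph{Proof plan.} The plan is to view $f_{z,\ell,j}$ as a scaled binomial estimator for $v_j$, and then invoke a Chernoff bound in two different ways for the two parts of the lemma. Fix $z$, $\ell$, and $j\in S_\ell^z$; throughout we condition on $j\in S_\ell^z$, which is independent of the subsequent coin flips that govern whether each update is transmitted. At any time $t$, let $N(t)$ denote the total number of updates to coordinate $j$ across all $k$ sites whose post-update local frequency exceeds the per-site threshold $\tau_\ell^{1/p}/(kB)$; then $f_{z,\ell,j}(t)=(\tau_\ell^{1/p}/B)\,X(t)$, where $X(t)\sim\mathrm{Binomial}(N(t),p_\ell)$ with $p_\ell=\min(B/\tau_\ell^{1/p},1)$. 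A simple counting step --- at most $\lceil\tau_\ell^{1/p}/(kB)\rceil$ skipped updates per site --- gives a deterministic bias $0\le v_j(t)-N(t)\le 2\tau_\ell^{1/p}/B$, so $\E[f_{z,\ell,j}(t)]=N(t)$ is already within $2\tau_\ell^{1/p}/B$ of $v_j(t)$ before any random sampling enters.

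For Part~(1), we would apply Fact~\ref{fact:chernoff} (and its $\beta\ge 1$ Poisson-tail variant) in two regimes. When $v_j(t)\,p_\ell\ge C\log n$ the multiplicative Chernoff bound with deviation parameter $2e-1$ gives $X(t)\le 2e\,v_j(t)\,p_\ell$ with probability $1-n^{-\Omega(C)}$; when $v_j(t)\,p_\ell<C\log n$, the Poisson tail $\Pr[X(t)\ge t]\le 2^{-t}$ for $t\ge 2e\,\E[X(t)]$ gives $X(t)\le 2eC\log n$ with the same probability. Multiplying by $\tau_\ell^{1/p}/B$ and using that $p_\ell\cdot \tau_\ell^{1/p}/B\le 1$ then yields the claim in Part~(1) after absorbing constants into $C$.

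For Part~(2), the hypothesis $v_j(t)\ge C(\log^5 n)\tau_\ell^{1/p}/(B\gamma^{10})$ does double duty. First, the deterministic bias $2\tau_\ell^{1/p}/B$ is at most $(\gamma^5/(2\log^2 n))\,v_j(t)$ for $C$ large. Second, $\E[X(t)]=N(t)\,p_\ell\ge\tfrac12 v_j(t)\,p_\ell\ge C\log^5 n/(2\gamma^{10})$, which is exactly the expected-sample-size threshold needed for Fact~\ref{fact:chernoff} with deviation $\beta=\gamma^5/(2\log^2 n)\in(0,1)$ to give failure probability $2\exp(-\beta^2\E[X(t)]/3)=n^{-\Omega(C)}$. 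Adding the sampling error to the bias via the triangle inequality closes Part~(2).

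To promote these per-time bounds to hold simultaneously at every point in the stream and for every triple $(z,\ell,j)$, we would union-bound over the at most $n$ update times and over $r\cdot(1+\log m)\cdot m=\poly(n)$ triples; a sufficiently large $C$ absorbs the polynomial factor and gives overall failure probability $n^{-\Omega(C)}$. The main subtlety we expect to handle carefully is the low-frequency regime of Part~(1): the standard multiplicative Chernoff is useless when $\E[X(t)]$ is below $\log n$, and it is essential to switch to the Poisson-tail version to obtain the additive $O(\tau_\ell^{1/p}\log n/B)$ slack on which the rest of the algorithm's correctness argument depends.
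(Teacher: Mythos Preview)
Your proposal is correct and follows essentially the same approach as the paper: both view $g_{z,\ell,j}=f_{z,\ell,j}\cdot B/\tau_\ell^{1/p}$ as a sum of at most $v_j$ independent Bernoulli($p_\ell$) indicators, bound the deterministic bias from the per-site threshold by $O(\tau_\ell^{1/p}/B)$, apply a binomial tail bound for Part~(1) and the multiplicative Chernoff bound (Fact~\ref{fact:chernoff}) for Part~(2), and finish with a union bound over $\poly(n)$ time steps and triples $(z,\ell,j)$. The only cosmetic difference is in Part~(1): the paper handles both regimes at once via the single estimate $\Pr[g\ge w]\le\binom{v_j}{w}p_\ell^{\,w}\le(ev_jp_\ell/w)^w\le 2^{-C\log n}$ with $w=2e\,v_jp_\ell+C\log n$, whereas you split into the cases $v_jp_\ell\ge C\log n$ and $v_jp_\ell<C\log n$ and invoke the Poisson-tail form separately; both arguments are valid and yield the same conclusion.
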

\begin{proof}
Fix a particular time snapshot in the stream. 
Let $g_{z, \ell, j} = f_{z, \ell, j} \cdot B/\tau_{\ell}^{1/p}$. Then $g_{z, \ell, j}$
is a sum of indicator variables, where the number of indicator variables depends on the
values of the $v_j^i$. The indicator variables are independent, each with expectation 
$\min(B/\tau_{\ell}^{1/p}, 1)$. 

{\bf First part of lemma.}
The number $s$ of indicator variables is at most $v_j$,
and the expectation of each is at most $B/\tau_{\ell}^{1/p}$. Hence, the 
probability that $w = 2e \cdot v_j \cdot B/\tau_{\ell}^{1/p} + C \log n$ or more of them 
equal $1$ is at most
$${v_j \choose w} \cdot \left (\frac{B}{\tau_{\ell}^{1/p}} \right )^{w}
\leq \left (\frac{ev_j B}{w \tau_{\ell}^{1/p}} \right )^{w} \leq \left (\frac{1}{2} \right )^{C \log n} = n^{-C}.$$
This part of the lemma now follows by scaling the $g_{z,\ell,j}$ by $\tau_{\ell}^{1/p}/B$ to obtain a 
bound on the $f_{z, \ell, j}$. 

{\bf Second part of lemma.}
Suppose at this time $v_j \geq \frac{C(\log^5 n) \tau_{\ell}^{1/p}}{B\gamma^{10}}$.
 The number $s$ of indicator variables is minimized when there 
are $k-1$ distinct $i$ for which $v_j^i = \frac{\tau_{\ell}^{1/p}}{kB}$, and one value of $i$
for which 
$$v_j^i = v_j - (k-1) \cdot \frac{\tau_{\ell}^{1/p}}{kB}.$$
Hence, 
$$s \geq v_j - (k-1) \cdot \frac{\tau_{\ell}^{1/p}}{kB} - \frac{\tau_{\ell}^{1/p}}{kB}
= v_j - \frac{\tau_{\ell}^{1/p}}{B}.$$
 If the expectation is $1$, then
$f_{z, \ell, j} = v_j - \frac{\tau_{\ell}^{1/p}}{B}$, and using that 
$v_j \geq \frac{C (\log^5 n) \tau_{\ell}^{1/p}}{B\gamma^{10}}$ establishes this part of the lemma. 
Otherwise, 
applying Fact \ref{fact:chernoff} with 
$s \geq v_j - \frac{\tau_{\ell}^{1/p}}{B} \geq \frac{C (\log^5 n) \tau_{\ell}^{1/p}}{2B\gamma^{10}}$
and $q = \frac{B}{\tau_{\ell}^{1/p}}$, and using that 
$qs \geq \frac{C \log^5 n}{2\gamma^{10}}$, we have
$$\Pr \left [|g_{z, \ell, j} - qs| > \frac{\gamma^5 qs}{2 \log^2 n} \right ] = n^{-\Omega(C)}.$$
Scaling by $\frac{\tau_{\ell}^{1/p}}{B} = \frac{1}{q}$, we have
$$\Pr \left [|f_{s, \ell, j}-s| > \frac{\gamma^5 s}{2 \log^2 n} \right ] = n^{-\Omega(C)},$$
and since $v_j - \frac{\tau_{\ell}^{1/p}}{B} \leq s \leq v_j$,
$$\Pr \left [|f_{s, \ell, j}-v_j| \geq \frac{\gamma^5 v_j}{2\log^2 n} + \frac{\tau_{\ell}^{1/p}}{B} \right ] 
= n^{-\Omega(C)},$$
and finally using that $\frac{\tau_{\ell}^{1/p}}{B} < \frac{\gamma^5 v_j}{2\log^2 n}$, and union-bounding
over a stream of length $n$ as well as all choices of $z, \ell,$ and $j$, the lemma follows. 
\end{proof}

\subsubsection{Estimating Class Sizes}
Define the classes $C_h$ as follows:
$$C_h = \{j \in [m] \mid \eta (1+\gamma)^h \leq v_j < \eta(1+\gamma)^{h+1}\}.$$
Say that $C_h$ contributes at a point in time in the union of the $k$ streams if 
$$|C_h|\cdot  \eta^p (1+\gamma)^{ph} \geq \frac{\gamma \|v\|_p^p}{B^{1/2} \log (n/\eta^p)}.$$
Since the number of non-zero $|C_h|$ is $O(\gamma^{-1} \log (n/\eta^p))$, we have
\begin{eqnarray}\label{eqn:contribute}
\sum_{\textrm{ non-contributing }h} |C_h| \cdot \eta^p (1+\gamma)^{ph+p} 
= O \left ( \frac{\|v\|_p^p}{B^{1/2}} \right ).
\end{eqnarray}

\begin{lemma}\label{lem:easy}
With probability $1-n^{-\Omega(C)}$, at all points in time in the union of the $k$ streams
and for all $h$ and $\ell$, for at least a $3/5$ fraction of the $z \in [r]$,
$$|C_h \cap S_{\ell}^z| 
\leq 3 \cdot 2^{-\ell} \cdot |C_h|$$
\end{lemma}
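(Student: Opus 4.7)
\textbf{Proof plan for Lemma \ref{lem:easy}.}
The plan is to fix a time $t$, a class index $h$, and a sampling level $\ell$, and first establish the sampling guarantee at that single triple via a one-dimensional concentration argument. Observe that, by Algorithm \ref{algo:interp}, for each $z \in [r]$ the set $S_\ell^z$ is obtained by including every coordinate independently with probability $2^{-\ell}$, and the sets $\{S_\ell^z\}_{z \in [r]}$ are mutually independent because they are sampled from independent portions of the public coin. Since $C_h$ at time $t$ is a deterministic function of the (non-negative) inputs seen so far, we can treat $|C_h \cap S_\ell^z|$ as a sum of independent indicator random variables with mean $2^{-\ell}|C_h|$.

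For a fixed $(t,h,\ell,z)$, Markov's inequality gives
\[
\Pr\!\left[|C_h \cap S_\ell^z| > 3 \cdot 2^{-\ell} |C_h|\right] \leq \tfrac{1}{3}.
\]
Let $Y_z \in \{0,1\}$ be the indicator that this bad event occurs for the $z$-th sample. By independence of $S_\ell^1,\dots,S_\ell^r$, the $Y_z$ are independent Bernoullis with mean at most $1/3$, so $Y := \sum_{z=1}^r Y_z$ has mean at most $r/3$. Since we require the good event on at least a $3/5$ fraction of the $z$'s, i.e.\ $Y \leq 2r/5$, and $2r/5 - r/3 = r/15 = \Omega(r)$, a standard Chernoff bound yields
\[
\Pr\!\left[Y > \tfrac{2r}{5}\right] \leq e^{-\Omega(r)} = n^{-\Omega(C)},
\]
using $r = \Theta(C \log n)$.

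It remains to union-bound over all triples $(t,h,\ell)$. The set $C_h$ changes only when some coordinate's frequency crosses one of the bucket boundaries $\eta(1+\gamma)^{h}$, so the collection $\{C_h\}_h$ takes at most $n+1$ distinct configurations during the stream (one per update); thus it suffices to control the event at these at most $n+1$ snapshots. The number of nonzero classes at any time is $O(\gamma^{-1} \log(n/\eta^p)) = \poly \log n$, and $\ell$ ranges over $\{0,1,\dots,\log m\}$. Therefore the number of relevant triples is at most $n \cdot \poly\log n \cdot \log m = \poly(n)$, and union-bounding the $n^{-\Omega(C)}$ failure probability over all of them preserves a bound of $n^{-\Omega(C)}$ for a sufficiently large constant $C$. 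The main thing to be careful about is exactly this union bound: because the classes $C_h$ evolve over time, we must argue that only polynomially many snapshots matter so that our per-snapshot exponential-in-$r$ tail can absorb all of them; once that is observed, no further work is needed.
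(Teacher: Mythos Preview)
Your proposal is correct and follows essentially the same approach as the paper: Markov's inequality for a single $z$, then Chernoff over the $r$ independent repetitions, then a union bound over times, $h$, and $\ell$. The only minor inaccuracy is that $C_h$ is not quite a deterministic function of the inputs alone---it also depends on the random shift $\eta$---but since $\eta$ is independent of the sampling sets $S_\ell^z$, conditioning on $\eta$ first leaves the argument unchanged.
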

\begin{proof}
The random variable $|C_h \cap S_{\ell}^z|$ is a sum of $|C_h|$ independent
Bernoulli$(2^{-\ell})$ random variables. By a Markov bound, 
$\Pr[|C_h \cap S_{\ell}^z| \leq 3 \cdot 2^{-\ell} |C_h|] \geq 2/3$. 
Letting $X_z$ be an indicator variable which is $1$ iff $|C_h \cap S_{\ell}^z| \leq 3 \cdot 2^{-\ell}|C_h|$,
the lemma follows by applying Fact \ref{fact:chernoff} to the $X_z$, using that $r$ is large enough, 
and union-bounding over a stream of length $n$ and all $h$ and $\ell$. 
\end{proof}
%
For a given $C_h$, let $\ell(h)$ be the value of $\ell$ for which we have
$2^{\ell} \leq \frac{\tau}{\eta^p (1+\gamma)^{ph}B} < 2^{\ell+1}$, or $\ell = 0$
if no such $\ell$ exists.

\begin{lemma}\label{lem:set2}
With probability $1-n^{-\Omega(C)}$, at all points in time in the union of the $k$ streams
and for all $h$, for at least a $3/5$ fraction of the $z \in [r]$, 
\begin{enumerate}
\item $2^{\ell(h)} \cdot |C_h \cap S_{\ell(h)}^z| \leq 3|C_h|,$ and
\item if at this time $C_h$ contributes and $\|v\|_p^p \geq \frac{\tau}{5}$, then 
$2^{\ell(h)} \cdot |C_h \cap S_{\ell(h)}^z| 
= \left (1 \pm \gamma \right) |C_h|.$
\end{enumerate}
\end{lemma}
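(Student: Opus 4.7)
The plan is to treat the two parts separately. Part 1 follows essentially by repeating the proof of Lemma~\ref{lem:easy} for the specific choice $\ell = \ell(h)$, and Part 2 follows from a sharper Chernoff bound once we observe that the contribution hypothesis forces the mean $|C_h|/2^{\ell(h)}$ to be large.

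For Part 1, fix a time in the stream and an $h$. The random variable $|C_h \cap S_{\ell(h)}^z|$ is a sum of $|C_h|$ independent Bernoulli$(2^{-\ell(h)})$ variables (drawn from the public coin), so it has mean $|C_h|/2^{\ell(h)}$. By Markov's inequality, for each $z$ independently we have $\Pr[\,2^{\ell(h)}|C_h \cap S_{\ell(h)}^z| \le 3|C_h|\,] \ge 2/3$. Letting $X_z$ be the indicator of this event and applying Fact~\ref{fact:chernoff} to $X_1,\ldots,X_r$ with $r = \Theta(\log n)$ sufficiently large gives that at least a $3/5$ fraction of the $z$ satisfy the bound, except with probability $n^{-\Omega(C)}$. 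Union bounding over the $\le n$ time steps and $O(\gamma^{-1}\log(n/\eta^p))$ values of $h$ handles Part 1.

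For Part 2, assume $C_h$ contributes at the given time and $\|v\|_p^p \ge \tau/5$. The defining inequality of $\ell(h)$ gives $2^{\ell(h)} < 2 \tau / (\eta^p(1+\gamma)^{ph} B)$, hence
\[
\mu_h \;\stackrel{\mathrm{def}}{=}\; \frac{|C_h|}{2^{\ell(h)}} \;\ge\; \frac{|C_h|\,\eta^p(1+\gamma)^{ph} B}{2\tau} \;\ge\; \frac{\gamma B^{1/2}}{2\log(n/\eta^p)} \cdot \frac{\|v\|_p^p}{\tau} \;\ge\; \frac{\gamma B^{1/2}}{10\log(n/\eta^p)},
\]
where the second inequality uses the contribution hypothesis and the third uses $\|v\|_p^p \ge \tau/5$. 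Since $B = \poly(\eps^{-1}\log n)$ is chosen sufficiently large, we may assume $\gamma^2 \mu_h / 3 \ge \ln 10$, so Fact~\ref{fact:chernoff} applied with $\beta = \gamma$ yields
\[
\Pr\!\left[\,|2^{\ell(h)}|C_h \cap S_{\ell(h)}^z| - |C_h|\,| > \gamma |C_h|\,\right] \;\le\; 2 e^{-\gamma^2 \mu_h / 3} \;\le\; 1/5
\]
for each $z$ independently. As in Part 1, applying Fact~\ref{fact:chernoff} again over $z = 1,\ldots,r$ shows that at least a $3/5$ fraction of the $z$ satisfy the $(1\pm\gamma)$ bound, except with probability $n^{-\Omega(C)}$, and a final union bound over all times and all $h$ completes the argument.

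The only subtlety is verifying that $B$ is chosen large enough to push the per-$z$ failure probability in Part 2 below a fixed constant (well below the $3/5$ majority threshold); once that is secured, everything else is a standard two-level Chernoff plus union bound, analogous to Lemma~\ref{lem:easy}. I do not foresee any essential obstacle beyond the bookkeeping on the polylogarithmic factors absorbed into $B$.
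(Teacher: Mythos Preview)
Your proposal is correct and follows essentially the same approach as the paper: Part~1 is exactly Lemma~\ref{lem:easy} specialized to $\ell=\ell(h)$, and Part~2 hinges on the same lower bound for the mean $\mu_h=|C_h|/2^{\ell(h)}$ derived from the contribution hypothesis together with $\|v\|_p^p\ge\tau/5$. The only cosmetic difference is that the paper uses Chebyshev (via $\mathrm{Var}[q(z)]\le\E[q(z)]$) for the per-$z$ concentration in Part~2 while you use the multiplicative Chernoff bound of Fact~\ref{fact:chernoff}; both give a constant failure probability per $z$ once $\mu_h$ is large. Two small bookkeeping items the paper makes explicit and you should add: (i) the degenerate case $\ell(h)=0$, where $S_0^z=[m]$ and the conclusion is immediate; and (ii) the conditioning on $\eta\ge n^{-C}$ (which holds with probability $1-n^{-C}$), needed so that $\log(n/\eta^p)=O(\log n)$ and hence $B$ can absorb it.
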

\begin{proof}
We show this statement for a fixed $h$ and at a particular point in time in the union of the $k$ 
streams. The lemma will follow by a union bound. 

The first part of the lemma follows from Lemma \ref{lem:easy}.

We now prove the second part. In this case $\|v\|_p^p \geq \frac{\tau}{5}$. 
We can assume that there exists an $\ell$ for which 
$2^{\ell} \leq \frac{\tau}{\eta^p (1+\gamma)^{ph}B} < 2^{\ell+1}$. Indeed, otherwise $\ell(h) = 0$ and
$|C_h \cap S_{\ell(h)}^z| = |C_h|$ and the second part of the lemma follows. 

Let $q(z) = |C_h \cap S_{\ell(h)}^z|$,
which is a sum of independent 
indicator random variables and so ${\bf Var}[q(z)] \leq {\bf E}[q(z)]$. Also,
\begin{eqnarray}\label{eqn:firstw}
{\bf E}[q(z)] & = & 2^{-\ell} |C_h| \geq \frac{\eta^p (1+\gamma)^{ph}B}{\tau} \cdot |C_h|.
\end{eqnarray}
Since $C_h$ contributes, 
$|C_h| \cdot \eta^p \cdot (1+\gamma)^{ph} \geq \frac{\gamma \|v\|_p^p}{B^{1/2} \log (n/\eta^p)}$,
and combining this with (\ref{eqn:firstw}), 
$${\bf E}[q(z)] \geq \frac{B\gamma \|v\|_p^p}{B^{1/2}\tau \log (n/\eta^p)} 
\geq \frac{B^{1/2} \gamma}{5\log (n/\eta^p)}.$$
It follows that for $B$ sufficiently large, and assuming $\eta \geq 1/n^C$ which happens with probability
$1-1/n^C$, we have ${\bf E}[q(z)] \geq \frac{3}{\gamma^2}$, and so by Chebyshev's
inequality, 
$$\Pr \left [|q(z)-{\bf E}[q(z)]| \geq \gamma {\bf E}[q(z)] \right ] \leq 
\frac{{\bf Var}[q(z)]}{\gamma^2 \cdot {\bf E}^2[q(z)]} \leq \frac{1}{3}.$$ 
Since ${\bf E}[q(z)] = 2^{-\ell}|C_h|$, and $r = \Theta(\log n)$ is large enough, the lemma follows
by a Chernoff bound. 
%
\end{proof}

\subsubsection{Combining Individual Frequency Estimation and Class Size Estimation}
We define the set $T$ to be the set of times in the input stream for which the $F_p$-value
of the union of the $k$ streams first exceeds $(1+\gamma)^i$ for an $i$ satisfying
$$0 \leq i \leq \log_{(1+\gamma)} 2^p \cdot \tau.$$
\begin{lemma}\label{lem:set}
With probability $1-O(\gamma)$, for all times in $T$ and all $h$, 
\begin{enumerate}
\item $\tilde{c}_h \leq 3|C_h| + 3\gamma(2+\gamma)(|C_{h-1}| + |C_{h+1}|)$, and
\item if at this time $C_h$ contributes and $\|v\|_p^p \geq \frac{\tau}{5}$, then
$$(1-4\gamma)|C_h| \leq \tilde{c}_h \leq (1 + \gamma) |C_h| + 3\gamma(2+\gamma)(|C_{h-1}| + |C_{h+1}|).$$
\end{enumerate}
\end{lemma}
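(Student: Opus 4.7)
The plan is to condition on the high-probability conclusions of Lemmas \ref{lem:concentration} and \ref{lem:set2}, fix a pair $(t,h)$ with $t\in T$, and decompose
$|F_{z,h}| = \sum_{h'} |C_{h'}\cap S_{\ell(h)}^z\cap F_{z,h}|$.
Lemma \ref{lem:concentration}(2) applies to every $j\in C_{h'}$ with $|h-h'|$ a small constant once $B$ is a sufficiently large polynomial in $\eps^{-1}\log n$ (the threshold $\frac{C(\log^5 n)\tau_{\ell(h)}^{1/p}}{B\gamma^{10}}$ then falls below $\eta(1+\gamma)^{h'}$), giving $f_{z,\ell(h),j}=(1\pm\gamma^5/\log^2 n)\,v_j$. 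Since $\gamma^5/\log^2 n\ll\gamma$, only $h'\in\{h-1,h,h+1\}$ can contribute, and a coordinate $j\in C_{h\pm 1}$ lands in $F_{z,h}$ only if $v_j$ sits in a multiplicative boundary strip of width $\Theta(\gamma^5/\log^2 n)$ at the $C_h$-interface; write $\beta_{h-1}^{+},\beta_{h+1}^{-}$ for these sub-populations.

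For the self-contribution, Lemma \ref{lem:set2}(1) gives $2^{\ell(h)}|C_h\cap S_{\ell(h)}^z|\le 3|C_h|$ for a $3/5$-fraction of $z$. Repeating its Markov/Chernoff scheme with $\beta_{h\pm 1}^{\mp}$ in place of $C_h$ and the same sampling rate $2^{-\ell(h)}$ yields $2^{\ell(h)}|\beta_{h\pm 1}^{\mp}\cap S_{\ell(h)}^z|\le 3|\beta_{h\pm 1}^{\mp}|$ for a further $3/5$-fraction of $z$. Intersecting the three $3/5$-majority events leaves a $1/5$-majority, which controls the median $\tilde{c}_h$ and yields the upper bound $\tilde{c}_h\le 3|C_h|+3(|\beta_{h-1}^{+}|+|\beta_{h+1}^{-}|)$. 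For the contributing case of claim 2, Lemma \ref{lem:set2}(2) sharpens the self-contribution to $(1\pm\gamma)|C_h|$, and Lemma \ref{lem:concentration}(2) ensures that every $j\in C_h$ outside a similar $\Theta(\gamma^5/\log^2 n)$ boundary sub-strip of $C_h$ itself has $f_{z,\ell(h),j}\in[\eta(1+\gamma)^h,\eta(1+\gamma)^{h+1})$, so the boundary loss inside $C_h$ is absorbed by the $(1-4\gamma)$ slack.

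The main obstacle is translating the cardinalities $|\beta_{h\pm 1}^{\mp}|$ into the bound $\gamma(2+\gamma)|C_{h\pm 1}|$ promised by the lemma statement. This is where the random scalar $\eta\in[0,1]$ chosen in Algorithm \ref{algo:coord-init} plays its role: for any fixed $v$-vector and any fixed coordinate $j$, the Lebesgue probability (over $\eta$) that $v_j$ lies in the width-$\Theta(\gamma^5/\log^2 n)$ multiplicative boundary of any class divider $\eta(1+\gamma)^h$ is $\Theta(\gamma^5/\log^2 n)$ per divider and $O(\gamma^4/\log^2 n)$ after summing geometrically in $(1+\gamma)^{-h}$. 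Summing over $j$ and applying Markov gives that with probability $1-O(\gamma)$ over $\eta$, simultaneously for every $t\in T$ and every relevant $h$, $|\beta_{h'}|\le\gamma(2+\gamma)|C_{h'}|$. A final union bound over the $|T|=O(\gamma^{-1}\log n)$ times and the $O(\gamma^{-1}\log n)$ nonzero classes combines the $n^{-\Omega(C)}$ failures of Lemmas \ref{lem:concentration} and \ref{lem:set2} with the $O(\gamma)$ failure from the $\eta$-Markov step to yield the overall $1-O(\gamma)$ success probability stated in the lemma.
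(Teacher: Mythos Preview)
Your overall plan mirrors the paper's: condition on Lemmas~\ref{lem:concentration} and~\ref{lem:set2}, argue that $F_{z,h}$ only sees $C_{h-1}\cup C_h\cup C_{h+1}$, and use the random shift $\eta$ to control the boundary strips. But three steps do not go through as written.

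The sentence ``intersecting the three $3/5$-majority events leaves a $1/5$-majority, which controls the median $\tilde{c}_h$'' is doubly wrong: three sets each of size $3r/5$ need not intersect at all (the union bound on complements gives $\ge 1-3\cdot\frac{2}{5}<0$), and even a genuine $r/5$-set cannot pin down a median, which requires more than $r/2$ of the $z$'s to satisfy the bound. The remedy is to restate Lemmas~\ref{lem:easy} and~\ref{lem:set2} with a larger Markov factor so that each event holds for, say, a $9/10$-fraction of the $z$'s and the triple intersection still exceeds $r/2$; the paper is also loose on this point but never asserts that a minority determines a median. Separately, your claim that ``only $h'\in\{h-1,h,h+1\}$ can contribute'' invokes only Lemma~\ref{lem:concentration}(2), whose hypothesis $v_{j'}\ge\frac{C(\log^5 n)\tau_{\ell(h)}^{1/p}}{B\gamma^{10}}$ fails for coordinates with $v_{j'}$ far below $\eta(1+\gamma)^h$; for those you need Lemma~\ref{lem:concentration}(1) together with the comparison $\eta(1+\gamma)^h\ge(\tau_{\ell(h)}/(2B))^{1/p}$, which is exactly the paper's ``Case $v_{j'}<\frac{C(\log^5 n)\tau_{\ell(h)}^{1/p}}{B\gamma^{10}}$.'' Finally, ``summing over $j$ and applying Markov'' bounds only the \emph{total} number of near-boundary coordinates at a given time, and that does not yield $|\beta_{h'}|\le\gamma(2+\gamma)|C_{h'}|$ class by class, since all boundary coordinates could pile into a single small $C_{h'}$. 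The paper instead runs the $\eta$-argument separately for each fixed pair $(t,h)$, obtaining failure probability $O(\gamma^3/\log^2 n)$ per pair, and then union-bounds over the $O(\gamma^{-2}\log^2 n)$ pairs $(t,h)$ to arrive at the stated $1-O(\gamma)$.
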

\begin{proof}
We assume the events of Lemma \ref{lem:concentration} and Lemma \ref{lem:set2} occur,
and we add $n^{-\Omega(C)}$ to the error probability. Let us fix a class $C_h$, 
a point in time in $T$, and a $z \in [r]$ which is among the at least 
$3r/5$ different $z$ that satisfy Lemma 
\ref{lem:set2} at this point in time.


By Lemma \ref{lem:concentration}, for any $j \in C_h \cap S_{\ell(h)}^z$ for which
$v_j \geq \frac{C (\log^5 n) \tau_{\ell(h)}^{1/p}}{B\gamma^{10}}$, if
\begin{eqnarray}\label{eqn:contain}
|\min(v_j - \eta (1+\gamma)^h, \eta (1+\gamma)^{h+1}-v_j)| \geq \frac{\gamma^5}{\log^2 n} \cdot v_j,
\end{eqnarray}
then $j \in F_{z,h}$. Let us first verify that for $j \in C_h$, we have 
$v_j  \geq \frac{C (\log^5 n) \tau_{\ell(h)}^{1/p}}{B\gamma^{10}}$. We have
\begin{eqnarray}\label{eqn:initialBound}
v_j^p \geq \eta^p (1+\gamma)^{ph} \geq \frac{\tau}{2^{\ell(h)+1}B}
\geq \frac{\tau_{\ell(h)}}{2B},
\end{eqnarray}
and so 
$$v_j \geq \left (\frac{\tau_{\ell(h)}}{2B} \right )^{1/p} \geq \frac{C (\log^5 n) \tau_{\ell(h)}^{1/p}}{B\gamma^{10}},$$
where the final inequality follows for  large enough $B = \poly(\eps^{-1}\log n)$ and $p > 1$. 

It remains to consider the case when (\ref{eqn:contain}) does not hold. 

Conditioned
on all other randomness,  $\eta \in [0, 1]$ is uniformly random subject
to $v_j \in C_h$, or equivalently,
$$\frac{v_j}{(1+\gamma)^{h+1}} < \eta \leq \frac{v_j}{(1+\gamma)^h}.$$ 
If (\ref{eqn:contain}) does not hold, then either
$$\frac{(1-\gamma^5/\log^2 n)v_j}{(1+\gamma)^h} \leq \eta, \textrm{   or   }
\eta \leq \frac{(1+\gamma^5/\log^2 n)v_j}{(1+\gamma)^{h+1}}.$$
Hence, the probability over $\eta$ that inequality (\ref{eqn:contain}) holds is at least
$$1 - \frac{\frac{\gamma^5 v_j}{(1+\gamma)^h \log^2 n} + \frac{\gamma^5 v_j}{(1+\gamma)^{h+1} \log^2 n}}
{\frac{v_j}{(1+\gamma)^h} - \frac{v_j}{(1+\gamma)^{h+1}}}
= 1 - \frac{\gamma^4 (2+\gamma)}{\log^2 n}.$$
It follows by a Markov bound that
\begin{eqnarray}\label{eqn:trueSet}
\Pr \left [|C_h \cap S_{\ell(h)}^z| \geq |C_h| \cdot \left (1-\gamma (2+\gamma) \right ) \right ]
\leq \frac{\gamma^3}{\log^2 n}.
\end{eqnarray}
Now we must consider the case that 
there is a $j' \in C_{h'} \cap S^z_{\ell(h)}$ for which $j' \in F_{z,h}$ for
an $h' \neq h$. There are two cases, namely, if 
$v_{j'} < \frac{C (\log^5 n) \tau_{\ell(h)}^{1/p}}{B\gamma^{10}}$ or if 
$v_{j'} \geq \frac{C (\log^5 n) \tau_{\ell(h)}^{1/p}}{B\gamma^{10}}$. 
We handle each case in turn.
\\\\
{\bf Case: $v_{j'} < \frac{C (\log^5 n) \tau_{\ell(h)}^{1/p}}{B\gamma^{10}}$.} Then
by Lemma \ref{lem:concentration}, 
$$f_{z, \ell(h), j'} \leq 2e \cdot v_{j'} + \frac{C \tau_{\ell(h)}^{1/p} \log n}{B}.$$
Therefore, it suffices to show that
$$ 2e \cdot \frac{C (\log^5 n) \tau_{\ell(h)}^{1/p}}{B\gamma^{10}} 
+ \frac{C \tau_{\ell(h)}^{1/p} \log n}{B} < \eta (1+\gamma)^h,$$
from which we can conclude that $j' \notin F_{z,h}$. But by (\ref{eqn:initialBound}),
$$\eta (1+\gamma)^h \geq \left (\frac{\tau_{\ell(h)}}{2B} \right )^{1/p}
>  2e \cdot \frac{C (\log^5 n) \tau_{\ell(h)}^{1/p}}{B\gamma^{10}} 
+ \frac{C \tau_{\ell(h)}^{1/p} \log n}{B},
$$
where the last inequality follows for large enough $B = \poly(\eps^{-1} \log n)$. 
Hence, $j' \notin F_{z,h}$. 
\\\\
{\bf Case: $v_{j'} \geq \frac{C (\log^5 n) \tau_{\ell(h)}^{1/p}}{B\gamma^{10}}$.}
We claim
that $h' \in \{h-1, h+1\}$. Indeed, by Lemma \ref{lem:concentration} we must have
$$\eta (1+\gamma)^h - \frac{\gamma^5}{\log^2 n} \cdot v_{j'} 
\leq v_{j'} \leq \eta (1+\gamma)^{h+1} + \frac{\gamma^5}{\log^2 n} \cdot v_{j'}.$$ 
This is equivalent to 
$$\frac{\eta(1+\gamma)^h}{1+\gamma^5/\log^2 n} \leq
v_{j'} \leq \frac{\eta (1+\gamma)^{h+1}}{1-\gamma^5/\log^2 n},$$
If $j' \in C_{h'}$ for $h' < h-1$, then
$$v_{j'} \leq \eta (1+\gamma)^{h-1} = \frac{\eta (1+\gamma)^h}{1+\gamma}
< \frac{\eta (1+\gamma)^h}{1+\gamma^5/\log^2 n},$$
which is impossible. Also, if $j' \in C_{h'}$ for $h' > h+1$, then
$$v_{j'} \geq \eta (1+\gamma)^{h+2} = \eta (1+\gamma)^{h+1} \cdot (1+\gamma)
> \frac{\eta (1+\gamma)^{h+1}}{1-\gamma^5/\log^2 n},$$
which is impossible. Hence, $h' \in \{h-1, h+1\}$. 

Let $N_{z,h} = F_{z,h} \setminus C_h$. Then 
\begin{eqnarray}\label{eqn:others}
{\bf E}[|N_{z,h}| \leq \frac{\gamma^4 (2+\gamma)}{\log^2 n} \cdot (|C_{h-1} \cap S_{\ell(h)}^z| 
+ |C_{h+1} \cap S_{\ell(h)}^z|).
\end{eqnarray}
By (\ref{eqn:trueSet}) and applying a Markov bound to (\ref{eqn:others}), together with 
a union bound, with probability $\geq 1-\frac{2\gamma^3}{\log^2 n}$,
\begin{eqnarray}\label{eqn:fzh}
(1-\gamma(2+\gamma)) \cdot |C_h \cap S_{\ell(h)}^z| \leq |F_{z,h}|
\end{eqnarray}
\begin{equation}
|F_{z,h}| \leq |C_h \cap S_{\ell(h)}^z|
+ \gamma (2+\gamma) \cdot (|C_{h-1} \cap S_{\ell(h)}^z| 
 + |C_{h+1} \cap S_{\ell(h)}^z|). \label{eqn:fzh2}
\end{equation}
By Lemma \ref{lem:easy},
\begin{equation}
2^{\ell(h)} |C_{h-1} \cap S_{\ell(h)}^z| \leq 3|C_{h-1}|  \textrm{    and    } \ 
2^{\ell(h)} |C_{h+1} \cap S_{\ell(h)}^z| \leq 3|C_{h+1}|. \label{eqn:abound}
\end{equation}
\\
{\bf First part of lemma.} 
At this point we can prove the first part of this lemma. 
By the first part of Lemma \ref{lem:set2}, 
\begin{eqnarray} \label{eqn:thish}
2^{\ell(h)} \cdot |C_h \cap S_{\ell(h)}^z| \leq 3|C_h|.
\end{eqnarray}
Combining (\ref{eqn:fzh2}), (\ref{eqn:abound}), and (\ref{eqn:thish}), 
we have with probability at least $1-\frac{2\gamma^{3}}{\log^2 n}-n^{-\Omega(C)}$,
$$2^{\ell(h)}|F_{z,h}| \leq 3|C_h| + 3\gamma(2+\gamma)(|C_{h-1}| + |C_{h+1}|).$$
Since this holds for at least $3r/5$ different $z$, it follows that
$$\tilde{c}_h \leq 3|C_h| + 3\gamma(2+\gamma)(|C_{h-1}| + |C_{h+1}|),$$
and the first part of the lemma follows by a union bound. Indeed, the number of 
$h$ is $O(\gamma^{-1}\log(n/\eta^p))$, which with probability $1-1/n$, say, is 
$O(\gamma^{-1} \log n)$ since with this probability $\eta^p \geq 1/n^p$.
Also, $|T| = O(\gamma^{-1} \log n)$. Hence, the probability this holds for all $h$ and all times
in $T$ is $1-O(\gamma)$. 
\\\\
{\bf Second part of the lemma.} Now we can prove the second part of the lemma. By the second
part of Lemma \ref{lem:set2}, if at this time $C_h$ contributes and $\|v\|_p^p \geq \frac{\tau}{5}$,
then 
\begin{eqnarray}\label{eqn:thish2}
2^{\ell(h)} \cdot |C_h \cap S_{\ell(h)}^z| = (1 \pm \gamma)|C_h|. 
\end{eqnarray}
Combining (\ref{eqn:fzh}), (\ref{eqn:fzh2}), (\ref{eqn:abound}), and (\ref{eqn:thish2}), 
we have with probability at least $1-\frac{2\gamma^3}{\log^2 n}-n^{-\Omega(C)}$,
\begin{equation*}
(1-\gamma(2+\gamma))(1-\gamma)|C_h| 
\leq 2^{\ell(h)}|F_{z,h}| \leq (1+\gamma)|C_h| + 3\gamma(2+\gamma)(|C_{h-1}| + |C_{h+1}|).
\end{equation*}
Since this holds for at least $3r/5$ different $z$, it follows that
\begin{equation*}
(1-\gamma(2+\gamma))(1-\gamma)|C_h| \leq \tilde{c}_h \leq 
(1+\gamma)|C_h| + 3\gamma(2+\gamma)(|C_{h-1}| + |C_{h+1}|).
\end{equation*}
and the second part of the lemma now follows by a union bound over all $h$ and all times
in $T$, exactly in the same way as the first part of the lemma. Note that
$1-4\gamma \leq (1-\gamma(2+\gamma))(1-\gamma)$ for small enough $\gamma = \Theta(\eps)$. 
\end{proof}
%

\subsubsection{Putting It All Together}

\begin{lemma}\label{lem:correct}
With probability at least $5/6$, at all times the coordinator's output is correct.
\end{lemma}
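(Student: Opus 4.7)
My plan is to condition on the joint success of Lemmas~\ref{lem:concentration}, \ref{lem:set2}, and \ref{lem:set}; the combined failure probability is $O(\gamma)+n^{-\Omega(C)}$, which is at most $1/6$ for appropriate choice of constants. Write $E(t):=\sum_{h\ge 0}\tilde{c}_h(t)\,\eta^p(1+\gamma)^{ph}$ for the coordinator's estimator and let $t_\tau$ be the first time with $\|v(t_\tau)\|_p^p\ge\tau$.

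The core technical claim is that at every $t\in T$ with $\|v(t)\|_p^p\ge\tau/5$, $E(t)=(1\pm c_0\gamma)\|v(t)\|_p^p$ for a universal constant $c_0$. I would prove this by splitting $\sum_h\tilde{c}_h\eta^p(1+\gamma)^{ph}$ according to whether $C_h$ contributes: on contributing $h$, Lemma~\ref{lem:set}(2) sandwiches $\tilde{c}_h$ between $(1-4\gamma)|C_h|$ and $(1+\gamma)|C_h|+3\gamma(2+\gamma)(|C_{h-1}|+|C_{h+1}|)$, and after reindexing the cross-class leakage becomes an $O(\gamma)$ fraction of $\sum_{h'}|C_{h'}|\eta^p(1+\gamma)^{ph'}=\Theta(\|v(t)\|_p^p)$; on non-contributing $h$, Lemma~\ref{lem:set}(1) together with (\ref{eqn:contribute}) bounds the total weighted mass by $O(\|v(t)\|_p^p/B^{1/2})$, which is $o(\gamma\|v(t)\|_p^p)$ for $B$ large. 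Completeness then follows quickly: by definition of $T$ there is $t'\in T$ with $t'\le t_\tau$ and $\|v(t')\|_p^p\ge\tau/(1+\gamma)>\tau/5$, so the claim gives $E(t')\ge(1-c_0\gamma)\tau/(1+\gamma)>(1-\eps)\tau$ once $\gamma=\Theta(\eps)$ is small enough, and the coordinator triggers at or before $t'$.

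For soundness I need to rule out triggering at any tuple-arrival time $t^*$ with $\|v(t^*)\|_p^p<\tau/(1+\eps)$. The main obstacle is that $t^*$ is in general not in $T$, whereas the core claim above only controls $E$ at $T$-times. My plan is to show that $E(\cdot)$ is non-decreasing under updates: each update raises some $f_{z,\ell,j}$, which within the $(z,\cdot)$-buckets whose index satisfies $\ell(h)=\ell$ (a contiguous range of $h$, since $\ell(h)$ is non-increasing in $h$) can only push $j$ into a strictly-higher-weight bucket. Given monotonicity, take the first $t''\in T$ with $t''\ge t^*$; in the main case $\|v(t^*)\|_p^p\ge\tau/5$ we have $\|v(t'')\|_p^p\le(1+\gamma)\|v(t^*)\|_p^p<(1+\gamma)\tau/(1+\eps)$, so the accuracy claim yields $E(t^*)\le E(t'')\le(1+c_0\gamma)(1+\gamma)\tau/(1+\eps)<(1-\eps)\tau$ for $\gamma=\Theta(\eps)$ sufficiently small, contradicting the trigger; the complementary case $\|v(t^*)\|_p^p<\tau/5$ is ruled out by the crude upper bound from Lemma~\ref{lem:set}(1) and (\ref{eqn:contribute}) alone. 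The most delicate bookkeeping will be justifying monotonicity rigorously, since an element crossing out of the highest bucket of one $\ell$-family does not automatically reappear in the lowest bucket of the next family (indexed by a different counter $f_{z,\ell(h+1),\cdot}$); a case analysis using Lemma~\ref{lem:concentration} shows that whenever such a ``vanishing'' happens, the element's contribution in the new $\ell$-family is already counted, so the net change to $E$ is non-negative.
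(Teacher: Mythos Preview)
Your overall strategy matches the paper's: condition on Lemma~\ref{lem:set}, show that the estimator $E(t)=\sum_h\tilde{c}_h\,\eta^p(1+\gamma)^{ph}$ is a $(1\pm O(\gamma))$-approximation of $\|v(t)\|_p^p$ at all $T$-times with $\|v(t)\|_p^p\ge\tau/5$ (and is crudely $O(\|v(t)\|_p^p)$ otherwise), and then pass from $T$-times to all times via monotonicity of $E$. The paper dispatches the last step in one line, asserting that ``the values $|F_{z,h}|$ are non-decreasing''; you are right to flag this as the delicate step, and your per-$z$ bucket-shift discussion together with the cross-$\ell$-family patch via Lemma~\ref{lem:concentration} already goes further than the paper does.

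There is, however, a genuine gap in your monotonicity argument that your sketch does not address. Your reasoning tracks a single update to $f_{z^*,\ell,j}$ and argues that, for that fixed $z^*$, the element $j$ can only migrate to a higher-weight bucket. But $E$ is a sum of \emph{per-$h$ medians} over $z$: $\tilde{c}_h=\mathrm{median}_z\,2^{\ell(h)}|F_{z,h}|$. When the update moves $j$ from $F_{z^*,h_0}$ to $F_{z^*,h_1}$ with $h_1>h_0$, it can happen that $\tilde{c}_{h_0}$ drops (because $z^*$ was the median sample at $h_0$) while $\tilde{c}_{h_1}$ stays put (because $z^*$ is far from the median at $h_1$), and then $E$ strictly decreases. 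So monotonicity of the per-$z$ weighted sum does not imply monotonicity of $E$; and the paper's one-line justification is literally false for the same reason, since elements do leave $F_{z,h}$ once $f_{z,\ell(h),j}$ crosses $\eta(1+\gamma)^{h+1}$. One clean way to close the gap is to replace the sum-of-medians by a median over $z$ of the entire weighted sum $\sum_h 2^{\ell(h)}|F_{z,h}|\,\eta^p(1+\gamma)^{ph}$; your per-$z$ argument (once the cross-family case is made rigorous) would then give monotonicity of each of the $r$ summands and hence of their median, and the accuracy analysis at $T$-times goes through essentially unchanged.
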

\begin{proof}
The coordinator outputs $0$ up until
the first point in time in the union of the $k$ streams for which 
$\sum_{h \geq 0} \tilde{c}_h \cdot \eta^p \cdot (1+\gamma)^{ph} > (1-\eps/2)\tau$.
It suffices to show that 
\begin{eqnarray}\label{eqn:mainThing}
\sum_{h \geq 0} \tilde{c}_h \eta^p (1+\gamma)^{ph} = (1 \pm \eps/2) \|v\|_p^p
\end{eqnarray} at
all times in the stream. We first show that with probability at least $5/6$, for all times in $T$, 
\begin{eqnarray}\label{eqn:intermediateThing}
\sum_{h \geq 0} \tilde{c}_h \eta^p (1+\gamma)^{ph} = (1 \pm \eps/4) \|v\|_p^p,
\end{eqnarray} 
and then use the structure
of $T$ and the protocol to argue that (\ref{eqn:mainThing}) holds at all times in the stream.

Fix a particular time in $T$. 
We condition on the event of Lemma \ref{lem:set}, which by setting $\gamma = \Theta(\eps)$ small enough,
can assume occurs with probability at least $5/6$. 

First, suppose at this point in time we have $\|v\|_p^p < \frac{\tau}{5}$. Then by Lemma \ref{lem:set},
for sufficiently small $\gamma = \Theta(\eps)$, we have
\begin{eqnarray*}
\sum_{h \geq 0} \tilde{c}_h \cdot \eta^p (1+\gamma)^{ph} 
& \leq & 
\sum_{h \geq 0} (3|C_h|+3\gamma(2+\gamma)(|C_{h-1}| + |C_{h+1}|)) \cdot \eta^p(1+\gamma)^{ph}\\
& \leq & \sum_{h \geq 0} \left (3 \sum_{j \in C_h} v_j^p 
+ 3 \gamma(2+\gamma) (1+\gamma)^2 \sum_{j \in C_{h-1} \cup C_{h+1}} v_j^p \right ) \\
& \leq & 4 \|v\|_p^p\\
& \leq & \frac{4\tau}{5}, 
\end{eqnarray*}
and so the coordinator will correctly output $0$, provided $\eps < \frac{1}{5}$. 

We now handle the case $\|v\|_p^p \geq \frac{\tau}{5}$. 
Then for all contributing $C_h$, we have
$$(1-4\gamma)|C_h| \leq \tilde{c}_h \leq (1 + \gamma) |C_h| + 3\gamma(2+\gamma)(|C_{h-1}| + |C_{h+1}|),$$
while for all $C_h$, we have
$$\tilde{c}_h \leq 3|C_h| + 3\gamma(2+\gamma)(|C_{h-1}| + |C_{h+1}|).$$
Hence, using (\ref{eqn:contribute}), 
\begin{eqnarray*}
\sum_{h \geq 0} \tilde{c}_h \cdot \eta^p (1+\gamma)^{ph}
& \geq & \sum_{\textrm{contributing } C_h} (1 - 4\gamma)|C_h| \eta^p (1+\gamma)^{ph}\\
& \geq & \frac{(1-4\gamma)}{(1+\gamma)^2} \sum_{\textrm{contributing } C_h} \sum_{j \in C_h} v_j^p\\
& \geq & (1-6\gamma) \cdot (1 - O(1/B^{1/2})) \cdot \|v\|_p^p.
\end{eqnarray*}
For the other direction,
\begin{eqnarray*}
&&\sum_{h \geq 0} \tilde{c}_h \cdot \eta^p(1+\gamma)^{ph} \\
& \leq & \sum_{\textrm{contributing } C_h} (1+\gamma)|C_h| \eta^p (1+\gamma)^{ph} + 
\sum_{\textrm{non-contributing } C_h} 3|C_h| \eta^p (1+\gamma)^{ph}\\
&& +  \sum_{h \geq 0} 3\gamma(2+\gamma)(|C_{h-1}| + |C_{h+1}|)\eta^p(1+\gamma)^{ph}\\
& \leq & (1+\gamma) \sum_{\textrm{contributing }C_h} \sum_{j \in C_h} v_j^p + O (1/B^{1/2}) \cdot \|v\|_p^p + O(\gamma) \cdot \|v\|_p^p\\
& \leq & (1+O(\gamma) + O(1/B^{1/2})) \|v\|_p^p.
\end{eqnarray*}
Hence, (\ref{eqn:intermediateThing}) follows for all times in $T$ provided that $\gamma = \Theta(\eps)$ is small enough and 
$B = \poly(\eps^{-1} \log n)$ is large enough. 

It remains to argue that (\ref{eqn:mainThing}) holds for all points in time in the union of the $k$ streams. 
Recall that each time in the union of the $k$ streams for which 
$\|v\|_p^p \geq (1+\gamma)^i$ for an integer $i$ is included in $T$, provided $\|v\|_p^p \leq 2^p\tau$.  

The key observation is that the quantity $\sum_{h \geq 0} \tilde{c}_h \eta^p (1+\gamma)^{ph}$ is non-decreasing, since
the values $|F_{z,h}|$ are non-decreasing. Now, the value of $\|v\|_p^p$ at a time $t$ not in $T$ is, by
definition of $T$, within a factor of $(1 \pm \gamma)$ of the value of $\|v\|_p^p$ for some time in $T$. Since
(\ref{eqn:intermediateThing}) holds for all times in $T$, it follows that the value of
 $\sum_{h \geq 0} \tilde{c}_h \eta^p (1+\gamma)^{ph}$ at time $t$ satisfies
$$(1-\gamma)(1-\eps/4)\|v\|_p^p \leq \sum_{h \geq 0} \tilde{c}_h \eta^p (1+\gamma)^{ph} \leq (1+\gamma)(1+\eps/4)\|v\|_p^p,$$
which implies for $\gamma = \Theta(\eps)$ small enough 
that (\ref{eqn:mainThing}) holds for all points in time in the union of the $k$ streams.
This completes the proof. 
\end{proof}
}

\begin{theorem}(MAIN)
With probability at least $2/3$, at all times the coordinator's output is correct
and the total communication is $k^{p-1} \cdot \poly(\eps^{-1} \log n)$ bits.
\end{theorem}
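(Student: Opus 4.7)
The plan is to combine Lemma \ref{lem:correct} (correctness of the estimator at every time step) with Lemma \ref{lem:comm} (expected communication under a boundedness assumption), using the termination rule of Algorithm \ref{algo:coord} to ensure that the boundedness hypothesis is satisfied throughout the execution.

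For correctness, I would invoke Lemma \ref{lem:correct} to conclude that with probability at least $5/6$ the running estimator $\sum_{h\ge 0}\tilde{c}_h\,\eta^p(1+\gamma)^{ph}$ is a $(1\pm\eps/2)$-approximation of $\|v\|_p^p$ at all times in the union of the $k$ streams. This immediately yields the required threshold behavior: when $\|v\|_p^p > \tau$ the estimator exceeds $(1-\eps/2)\tau > (1-\eps)\tau$ and the coordinator outputs $1$ and halts; when $\|v\|_p^p < \tau/(1+\eps)$ the estimator never crosses $(1-\eps)\tau$, so the coordinator keeps outputting $0$. In the intermediate regime either answer is acceptable by the threshold-problem specification.

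For the communication bound, let $T^\star$ denote the first time at which $\|v\|_p^p > 2\tau$; note that $T^\star$ is deterministic in the (adversarial) input stream. Under the correctness event the coordinator has certainly halted by $T^\star$, since the estimator at that time exceeds $(1-\eps/2)\cdot 2\tau > (1-\eps)\tau$, and no site transmits after halting. Consequently the total communication produced by the protocol on the full stream is bounded by the communication produced on the truncation of the stream at $T^\star$, to which Lemma \ref{lem:comm} (with $\tau$ replaced by $2\tau$, changing only constants absorbed in $\poly(\eps^{-1}\log n)$) applies and gives an unconditional expected total communication of $k^{p-1}\cdot\poly(\eps^{-1}\log n)$ bits. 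A Markov inequality then boosts this into a tail bound of $k^{p-1}\cdot\poly(\eps^{-1}\log n)$ bits holding with probability at least $11/12$.

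The final step is a union bound over the correctness failure event (probability at most $1/6$) and the communication tail event (probability at most $1/12$), giving the claimed success probability at least $2/3$. The main subtlety I would watch for is the order of operations for the Markov bound: it must be applied to the \emph{unconditional} expected communication on the $T^\star$-truncated stream, so that the correctness and communication bad events are controlled independently and can be cleanly union-bounded without conditioning artifacts. Nothing else in the argument requires more than the two already-proved lemmas, so the theorem follows.
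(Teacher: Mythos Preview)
Your proposal is correct and follows essentially the same approach as the paper: identify a deterministic truncation time at which the correctness event forces termination, apply Lemma~\ref{lem:comm} to bound expected communication on the truncated stream, convert to a tail bound via Markov, and union-bound with the correctness failure. The only cosmetic differences are that the paper truncates at the first time $\|v\|_p^p>\tau$ (noting a single update can inflate $\|v\|_p^p$ by at most a factor $2^p$, so the hypothesis of Lemma~\ref{lem:comm} is met exactly) and uses Markov to get probability $5/6$ rather than $11/12$; your choice of the threshold $2\tau$ works just as well after the same single-update inflation argument.
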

\iffull{
\begin{proof}
Consider the setting of $v^1, \ldots, v^k$ at the first time in the stream for which
$\|\sum_{i=1}^k v^i\|_p^p > \tau$. For any non-negative integer vector $w$ and any update $e_j$, we have
$\|w + e_j\|_p^p \leq (\|w\|_p+1)^p \leq 2^p \|w\|_p^p$. Since 
$\|\sum_{i=1}^k v^i\|_p^p$ is an integer and $\tau \geq 1$, we therefore
have $\|\sum_{i=1}^k v^i\|_p^p \leq 2^p \cdot \tau$. 
By Lemma \ref{lem:comm}, the expected communication for these $v^1, \ldots, v^k$
is $k^{p-1} \cdot \poly(\eps^{-1} \log n)$ bits, so with probability at least $5/6$ the communication
is $k^{p-1} \cdot \poly(\eps^{-1} \log n)$ bits. By Lemma \ref{lem:correct}, 
with probability at least $5/6$, the protocol terminates at or before the time
for which the inputs held by the players equal $v^1, \ldots, v^k$. The theorem 
follows by a union bound.
\end{proof}
}

%



\section{Related Problems}
\label{sec:app}
In this section we show that the techniques we have developed for distributed $F_0$ and $F_p\ (p > 1)$ can also be used to solve other fundamental problems. In particular, we consider the problems: all-quantile, heavy hitters, empirical entropy and $\ell_p$ for any $p >0$. For the first three problems, we are able to show that our lower bounds holds even if we allow some additive error $\eps$. From definitions below one can observe that lower bounds for additive $\eps$-approximations also hold for their multiplicative $(1+\eps)$-approximation counterparts. 

\subsection{The All-Quantile and Heavy Hitters}
\label{sec:quantile}
We first give the definitions of the problems. Given a multiset $A = \{a_1, a_2, \ldots, a_m\}$ where each $a_i$ is drawn from the universe $[N]$, let $f_i$ be the frequency of item $i$ in the set $A$. Thus $\sum_{i \in [N]}f_i = m$.
\begin{definition}{($\phi$-heavy hitters)}
For any $0 \le \phi \le 1$, the set
of $\phi$-heavy hitters of $A$ is $H_{\phi}(A) = \{x\ |\ f_x \ge \phi m \}$. If an $\eps$-approximation is allowed, then the returned set of heavy hitters must contain $H_{\phi}(A)$ and cannot include any $x$ such that $f_x < (\phi - \eps) m$. If $(\phi - \eps)m \le f_x < \phi m$, then $x$ may or may not be included in $H_{\phi}(A)$.
\end{definition}

\begin{definition}{($\phi$-quantile)}
For any $0 \le \phi \le 1$, the $\phi$-quantile of $A$ is some $x$ such that there are at most $\phi m$ items of $A$ that are smaller than $x$ and at most $(1 - \phi) m$ items of $A$ that are greater than $x$. If an $\eps$-approximation is allowed, then when asking for the $\phi$-quantile of $A$ we are allowed to return any $\phi'$-quantile of $A$ such that $\phi - \eps \le \phi' \le \phi + \eps$.
\end{definition}

\begin{definition}{(All-quantile)}
The $\eps$-approximate all-quantile (\quan) problem is defined in the coordinator model, where we have $k$ sites and a coordinator. Site $S_i\ (i \in [k])$ has a set $A_i$ of items. The $k$ sites want to communicate with the coordinator so that at the end of the process the coordinator can construct a data structure from which all $\eps$-approximate $\phi$-quantiles for any $0 \le \phi \le 1$ can be extracted. The cost is defined as the total number of bits exchanged between the coordinator and the $k$ sites.
\end{definition}

\begin{theorem}
\label{thm:hhquan}
Any randomized protocol that computes $\eps$-approximate \quan\ or  $\eps$-approximate $\min\{\frac{1}{2}, \frac{\eps\sqrt{k}}{2}\}$-heavy hitters with error probability $\delta$ for some sufficiently small constant $\delta$ has communication complexity $\Omega(\min\{\sqrt{k}/\eps, 1/\eps^2\})$ bits. 
\end{theorem}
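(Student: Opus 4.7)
The lower bound decomposes into two parts — $\Omega(\sqrt{k}/\eps)$ and $\Omega(1/\eps^2)$ — whose maximum (equivalently, whose minimum, by the form of the claim) is the stated bound. The $\Omega(1/\eps^2)$ part follows from a standard $2$-party reduction: set sites $S_3,\ldots,S_k$ to have empty inputs; any $k$-party protocol for $\eps$-approximate \quan or heavy hitters then gives a $2$-party protocol of no greater cost. Two-party $\eps$-approximate median (or detecting an $\eps$-approximate heavy hitter with constant threshold) is known to require $\Omega(1/\eps^2)$ communication, for instance via a reduction from \TWODISJ with universe size $\Theta(1/\eps^2)$ and Theorem~\ref{thm:DISJ}. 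This argument transfers without change to the blackboard model, since $2$-party blackboard and message-passing costs coincide.

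For the $\Omega(\sqrt{k}/\eps)$ part, which dominates when $k \le 1/\eps^2$, I would reduce from the $k$-GAP-MAJ problem \gm described in the footnote of Section~\ref{sec:F0}. In \gm each of the $k$ sites holds a single bit $b_i$, drawn from a hard distribution tuned so that the sum $\sum_i b_i$ concentrates near a fixed threshold with YES/NO cases separated by a gap of exactly the $\eps$-approximation slack $\eps m = \eps k$. The reduction maps $b_i \in \{0,1\}$ to an element from $\{\alpha,\beta\}$ deposited by site $S_i$, so that the frequency of $\alpha$ equals $\sum_i b_i$; an $\eps$-approximate $\phi$-heavy-hitter query with $\phi = \eps\sqrt{k}/2$, or equivalently an $\eps$-approximate quantile at the appropriate rank, then resolves the YES/NO gap. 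Hence any protocol for heavy hitters or \quan costs at least what \gm costs in the blackboard model.

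The main work is the $\Omega(\sqrt{k}/\eps)$ communication lower bound for \gm in the blackboard model, which I would prove in the spirit of Theorem~\ref{thm:bit}. First, Observation~\ref{obs:first} still applies in the blackboard model (the transcript remains a rectangle in the product of sites' input spaces), so conditional on $\Pi$ the $b_i$'s are independent. Next, the joker/normal/strong-transcript taxonomy (Definitions~\ref{def:normal} and~\ref{def:pi-strong}) together with the anti-concentration Fact~\ref{lem:feller} implies that any protocol distinguishing the $\eps k$-gap with constant advantage must leave low posterior variance, i.e.\ high posterior information, on $\Theta(\sqrt{k}/\eps)$ of the bits. Finally, one composes \gm with a small $2$-party hardness gadget (in the spirit of the \bit--\TWODISJ composition in Section~\ref{sec:F0}) sized so that each bit of posterior information costs $\Omega(1)$ communication. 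The main obstacle I anticipate is calibrating the three parameters simultaneously — the \gm gap, the gadget size, and the heavy-hitter threshold $\phi$ — so that the calculation closes exactly in the regime $k \le 1/\eps^2$, and verifying that the rectangle-based independence arguments of Section~\ref{sec:F0} go through verbatim when messages are broadcast on a blackboard rather than routed through the coordinator.
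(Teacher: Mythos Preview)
Your $\Omega(\sqrt{k}/\eps)$ argument has a genuine gap. You reduce a \emph{single} instance of \gm (each of the $k$ sites holding one bit) to quantiles/heavy hitters, and then claim the main work is to show \gm itself requires $\Omega(\sqrt{k}/\eps)$ communication. But \gm has only $k$ input bits in total, so its communication complexity is trivially $O(k)$; in the regime $k<1/\eps^2$ you have $\sqrt{k}/\eps>k$, so the bound you want for \gm is simply false. Your proposed fix of ``composing \gm with a $2$-party gadget'' does not help either: after such a composition the sites' inputs are no longer single elements of a totally ordered universe, and it is not clear how an $\eps$-approximate quantile or heavy-hitter protocol would solve the composed problem. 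The \bit--\TWODISJ composition in Section~\ref{sec:F0} worked because $F_0$ of a union of sets is exactly what \TWODISJ outputs contribute to; there is no analogous structural link between quantiles and a gadget you have not specified.

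The paper's approach avoids this entirely. It first proves (Theorem~\ref{thm:gm}) an \emph{information cost} lower bound $I(Z;\Pi)=\Omega(k)$ for a single \gm instance, using the normal/strong-transcript analysis you mention. Then, rather than composing with a gadget, it embeds $\ell=1/(\eps\sqrt{k})$ \emph{independent copies} of \gm into one quantile instance by placing the $i$-th copy on the two-element universe $\{2(i-1),2(i-1)+1\}$; site $j$ receives the set $\{Z^1_j,\,2+Z^2_j,\ldots,2(\ell-1)+Z^\ell_j\}$. An $\eps/2$-approximate $\frac{i-1/2}{\ell}$-quantile then solves the $i$-th copy of \gm, because the total number of elements is $m=\ell k$ and the \gm gap $\sqrt{k/2}$ exceeds $\eps m/2=\sqrt{k}/2$. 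The $\Omega(\sqrt{k}/\eps)$ bound now follows from a direct-sum argument: by independence and the chain rule, $I(Z^1,\ldots,Z^\ell;\Pi)\ge\sum_i I(Z^i;\Pi)=\Omega(\ell k)=\Omega(\sqrt{k}/\eps)$. The key idea you are missing is that the amplification from $\Omega(k)$ to $\Omega(\sqrt{k}/\eps)$ comes from packing many copies into one instance (exploiting that quantiles must be correct at \emph{every} rank), not from hardening a single copy.

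For the $\Omega(1/\eps^2)$ part, your $2$-party reduction is a valid alternative, though the paper handles this case uniformly via the same \gm machinery: it takes $k=1/\eps^2$ sites, one bit each, and observes that the $1/2$-quantile (respectively the $1/2$-heavy hitter) solves \gm directly, giving $\Omega(k)=\Omega(1/\eps^2)$ from Theorem~\ref{thm:gm}.
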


\subsubsection{The \gm\ Problem}
Before proving Theorem~\ref{thm:hhquan}, we introduce a problem we call \gm. 

In this section we fix $\beta = 1/2$. In the \gm\ problem we have $k$ sites $S_1, S_2, \ldots, S_k$, and each site has a bit $Z_i\ (1 \le i \le k)$ such that $\Pr[Z_i = 0] = \Pr[Z_i = 1] = 1/2$. Let $\eth$ be the distribution of $\{Z_1, \ldots, Z_k\}$. The sites 
want to compute the following function.
\begin{eqnarray*}
\begin{array}{l}
k\mbox{-GAP-MAJ}(Z_1, \ldots, Z_k)  =  \left\{
  \begin{array}{rl}
   0, & \text{if } \sum_{i \in [k]} Z_i \le \beta k -\sqrt{\beta k},\\
   1, & \text{if } \sum_{i \in [k]} Z_i \ge \beta k + \sqrt{\beta k},\\
   *, & \text{otherwise,}
  \end{array}
  \right.
\end{array}
\end{eqnarray*}
where $*$ means that the answer can be either $0$ or $1$.

Notice that \gm\ is very similar to \bit: We set $\beta = 1/2$ and directly assign $Z_i$'s to the $k$ sites. Also, instead of approximating the sum, we just want to decide whether the sum is large or small, up to a gap which is roughly equal to the standard deviation.

We will prove the following theorem for \gm.
\begin{theorem}
\label{thm:gm}
Let $\Pi$ be the transcript of any private randomness protocol for \gm\ on input distribution $\eth$ with error probability $\delta$ for some sufficiently small constant $\delta$, then $I(Z; \Pi) = \Omega(k)$.
\end{theorem}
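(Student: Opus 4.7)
My plan is to reduce the bound $I(Z;\Pi) = \Omega(k)$ to a second-moment estimate on the posterior marginals $q_i^\pi := \Pr[Z_i = 1 \mid \Pi = \pi]$, and then to rule out ``high-conditional-variance'' transcripts using Fact~\ref{lem:feller}, in the spirit of the proof of Lemma~\ref{lem:pi-strong}.

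First I would establish that $Z_1,\ldots,Z_k$ remain conditionally independent given $\Pi$, by the standard rectangle argument (identical in spirit to Observation~\ref{obs:first}): the $Z_i$ are independent Bernoulli$(1/2)$ a priori and the private randomness only enlarges the product structure, so the inverse image of any transcript is a combinatorial rectangle and the conditional distribution factorizes. Combined with the chain rule this gives $I(Z;\Pi) = \sum_{i\in[k]} I(Z_i;\Pi)$. Using the pointwise inequality $1 - H_b(q) \ge c(q-1/2)^2$ for a universal constant $c > 0$, the theorem reduces to showing
\begin{eqnarray*}
\sum_{i\in[k]} \E_\Pi\!\left[(q_i^\Pi - 1/2)^2\right] \;=\; \Omega(k).
\end{eqnarray*}

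Next I set $\mu_\pi := \sum_i q_i^\pi$ and $\sigma_\pi^2 := \sum_i q_i^\pi(1 - q_i^\pi)$; by conditional independence, $\sigma_\pi^2 = \var[\sum_i Z_i \mid \Pi = \pi]$. The elementary identity $q(1-q) = 1/4 - (q-1/2)^2$ yields $\sum_i (q_i^\pi - 1/2)^2 = k/4 - \sigma_\pi^2$, so the goal simplifies to $\E_\Pi[\sigma_\Pi^2] \le k/4 - \Omega(k)$. By the total variance decomposition $k/4 = \E[\sigma_\Pi^2] + \var(\mu_\Pi)$ (using $\var(\sum_i Z_i) = k/4$ and $\E[\mu_\Pi] = k/2$), this is equivalent to $\var(\mu_\Pi) = \Omega(k)$.

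The main step is a case analysis parameterized by a sufficiently small constant $\alpha$. Call $\pi$ \emph{centered} if $|\mu_\pi - k/2| \le \sqrt{\alpha k}$ and \emph{spread} if $\sigma_\pi^2 \ge k/4 - \alpha k$. If $\Pr[\Pi \text{ is not centered}] \ge 1/2$, then since $\E[\mu_\Pi] = k/2$ one immediately obtains $\var(\mu_\Pi) \ge \tfrac{1}{2}\alpha k = \Omega(k)$ and we are done. Otherwise at least half of the transcript mass is centered. For every centered-and-spread $\pi$ I would invoke Fact~\ref{lem:feller} twice: once to $\sum_i Z_i$ for the upper tail and once to $\sum_i (1 - Z_i) = k - \sum_i Z_i$ (which swaps the two tails) for the lower tail, each with $t = \Theta(\sqrt{k})$. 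Since $\sigma_\pi^2 = \Theta(k)$, the hypotheses $\sigma_\pi \ge 200$ and $t \le \sigma_\pi^2/100$ hold for $k$ large, and $t^2/\sigma_\pi^2 = O(1)$ gives conditional probability $\Omega(1)$ on each of the events $\{\sum_i Z_i \ge k/2 + \sqrt{k/2}\}$ and $\{\sum_i Z_i \le k/2 - \sqrt{k/2}\}$. Because the protocol's output on $\pi$ is a deterministic function of $\pi$, it must err with probability $\Omega(1)$ conditional on any centered-and-spread transcript, so $\Pr[\Pi \text{ is centered and spread}] = O(\delta)$. Taking $\delta$ small enough, $\Pr[\sigma_\Pi^2 < k/4 - \alpha k] \ge 1/4$; combined with $\sigma_\pi^2 \le k/4$ always, this yields $\E_\Pi[\sigma_\Pi^2] \le k/4 - \Omega(k)$, completing the proof.

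The one delicacy I anticipate is that Fact~\ref{lem:feller} is stated only as a one-sided anti-concentration bound; obtaining the matching lower tail by applying it to $1 - Z_i$ is routine but needs to be spelled out, and the constant $\alpha$ together with the constant hidden in $t = \Theta(\sqrt{k})$ has to be chosen carefully so that $\mu_\pi \pm t$ lands on the correct side of $k/2 \pm \sqrt{k/2}$ for every centered $\pi$ while simultaneously satisfying $t \le \sigma_\pi^2/100$.
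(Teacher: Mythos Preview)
Your proposal is correct. Both your argument and the paper's hinge on the same anti-concentration step (Fact~\ref{lem:feller}) to rule out transcripts with large conditional variance, but the surrounding scaffolding is different and worth noting.

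The paper proceeds in two stages: first it shows (Lemma~\ref{lem:pi-normal-2}, via Chernoff and Bayes) that $\Pi$ is \emph{normal} (your ``centered'') with high probability, and then (Lemma~\ref{lem:pi-strong-2}) that, conditioned on being normal, $\Pi$ is \emph{strong} (small $\sigma_\pi^2$) with probability $\ge 0.98$; it then bounds $\sum_i H_b(p_i^\pi)$ by hand for strong $\pi$ using the elementary inequality $H_b(p)\le p\log(1/p)+2p(1-p)$. Your route replaces this with the quadratic Pinsker-type bound $1-H_b(q)\ge c(q-1/2)^2$ together with the identity $\sum_i(q_i^\pi-1/2)^2=k/4-\sigma_\pi^2$ and the law of total variance $k/4=\E[\sigma_\Pi^2]+\var(\mu_\Pi)$. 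This lets you absorb the ``not centered'' case directly into $\var(\mu_\Pi)$ instead of proving a separate lemma ruling it out, and it avoids the explicit entropy calculus at the end. The price is that you only get $\Pr[\sigma_\Pi^2<k/4-\alpha k]\ge 1/4$ rather than the paper's $0.98$, but for the information bound $\Omega(k)$ that is immaterial. Two small points to make explicit when you write it out: the equality $I(Z;\Pi)=\sum_i I(Z_i;\Pi)$ uses conditional independence of the $Z_i$ given $\Pi$ (so $H(Z\mid\Pi)=\sum_i H(Z_i\mid\Pi)$), not just the chain rule; and you may as well assume the protocol's output is the last symbol of $\Pi$, which justifies treating it as a deterministic function of $\pi$.
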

\begin{remark}
The theorem holds for private randomness protocols, though for our applications, we only need it to hold for deterministic protocols. Allowing private randomness could be useful when the theorem is used for direct-sum types of arguments in other settings.
\end{remark}

The following definition is essentially the same as Definition~\ref{def:normal}, but with a different setting  of parameters. For convenience, we will still use the terms {\em rare} and {\em normal}. Let $\kappa_1$ be a constant chosen later. For a transcript $\pi$, we define $q_i^\pi = \Pr_{\eth'}[Z_i = 1\ |\ \Pi = \pi]$. Thus $\sum_{i \in [k]} q_i^\pi = \E_{\eth'}[\sum_{i \in [k]} Z_i\ |\ \Pi = \pi]$.
\begin{definition}
\label{def:normal}
We say a transcript $\pi$ is {\em rare$^+$} if $\sum_{i \in [k]} q_i^\pi \ge \beta k + \kappa_1 \sqrt{\beta k}$ and {\em rare$^-$} if $\sum_{i \in [k]} q_i^\pi  \le \beta k - \kappa_1 \sqrt{\beta k}$. In both cases we say $\pi$ is {\em rare}. Otherwise we say it is {\em normal}.
\end{definition}

Let $\eth'$ be the joint distribution of $\eth$ and the distribution of $\Pi$'s private randomness. The following lemma is essentially the same as Lemma~\ref{lem:pi-normal}. For completeness we still include a proof.

\begin{lemma}
\label{lem:pi-normal-2}
Under the assumption of Theorem~\ref{thm:gm},
$\Pr_{\eth'}[\Pi \textrm{ is normal}] \ge 1 - 8e^{-\kappa_1^2/4}$.
\end{lemma}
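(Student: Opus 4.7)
The plan is to follow the same Bayes-theorem strategy used in Lemma~\ref{lem:pi-normal}, but with the thresholds rescaled so that both the joker and rare gaps are of the order of a standard deviation $\sqrt{\beta k}$ instead of the mean $\beta k$. First I would introduce intermediate ``joker'' events at the half-gap: say $Z$ is a \emph{joker$^+$} if $\sum_{i\in[k]} Z_i \ge \beta k + (\kappa_1/2)\sqrt{\beta k}$, and a \emph{joker$^-$} if $\sum_{i\in[k]} Z_i \le \beta k - (\kappa_1/2)\sqrt{\beta k}$. These events interpolate between the prior distribution and the rare transcripts in a way that admits matching tail bounds on both sides.

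Next I would establish the two Hoeffding tail bounds that drive the argument. Since $Z_1,\ldots,Z_k$ are i.i.d.\ Bernoulli$(\beta)$ with $\beta = 1/2$, Hoeffding's inequality gives $\Pr_{\eth'}[Z \text{ is a joker}^+]\le \exp(-\kappa_1^2/4)$, and symmetrically for joker$^-$. For the posterior direction, I would invoke the natural analogue of Observation~\ref{obs:first}: conditional on any full transcript $\Pi = \pi$ (including the private random tapes in the conditioning, so the rectangle property still factorizes), $Z_1,\ldots,Z_k$ remain independent Bernoullis with means $q_i^\pi$. If $\pi$ is rare$^+$, then $\sum_i q_i^\pi \ge \beta k + \kappa_1\sqrt{\beta k}$, so missing joker$^+$ requires a downward deviation of at least $(\kappa_1/2)\sqrt{\beta k}$; another application of Hoeffding yields $\Pr_{\eth'}[Z \text{ is a joker}^+ \mid \Pi = \pi] \ge 1 - \exp(-\kappa_1^2/4)$, and averaging over rare$^+$ transcripts preserves the bound.

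Finally I would plug these into Bayes' theorem exactly as in Lemma~\ref{lem:pi-normal}:
\[
\Pr_{\eth'}[\Pi \text{ is rare}^+] \;=\; \frac{\Pr_{\eth'}[Z \text{ is a joker}^+]\cdot \Pr_{\eth'}[\Pi \text{ is rare}^+ \mid Z \text{ is a joker}^+]}{\Pr_{\eth'}[Z \text{ is a joker}^+ \mid \Pi \text{ is rare}^+]} \;\le\; \frac{e^{-\kappa_1^2/4}}{1 - e^{-\kappa_1^2/4}} \;\le\; 4 e^{-\kappa_1^2/4},
\]
assuming $\kappa_1$ is chosen large enough that $e^{-\kappa_1^2/4} \le 1/2$. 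The same argument with the roles of $+$ and $-$ swapped gives $\Pr_{\eth'}[\Pi \text{ is rare}^-] \le 4 e^{-\kappa_1^2/4}$, and adding the two yields $\Pr_{\eth'}[\Pi \text{ is rare}]\le 8 e^{-\kappa_1^2/4}$, as claimed.

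The one point that requires slightly more care than in the deterministic-protocol setting of Lemma~\ref{lem:pi-normal} is justifying the conditional independence of $Z_1,\ldots,Z_k$ given $\Pi = \pi$ when sites use private randomness — this is the remark attached to the theorem statement. I would handle it by observing that under $\eth'$ the sites' inputs and private random tapes $(Z_i, R_i)$ form a product distribution, and a fixed transcript of a private-coin protocol carves out a product set in this joint space by the usual rectangle argument; marginalizing out the tapes then leaves $(Z_1,\ldots,Z_k) \mid \Pi = \pi$ a product. Beyond that observation, the proof is a direct parameter substitution into the Lemma~\ref{lem:pi-normal} template, so I do not anticipate any serious obstacle.
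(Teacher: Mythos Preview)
Your proposal is correct and follows essentially the same template as the paper's own proof: introduce an intermediate joker threshold, bound $\Pr[Z\text{ is joker}^+]$ from above and $\Pr[Z\text{ is joker}^+\mid \Pi\text{ is rare}^+]$ from below via concentration (using conditional independence of the $Z_i$ given $\pi$), combine by Bayes, and use the $\beta=1/2$ symmetry for the rare$^-$ side. The only cosmetic differences are that the paper places the joker threshold at $\kappa_2=\kappa_1-1$ rather than your $\kappa_1/2$ and invokes the multiplicative Chernoff bound rather than Hoeffding; your extension of Observation~\ref{obs:first} to private-coin protocols is exactly what the paper provides in its appendix.
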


\begin{proof}
Set $\kappa_2 = \kappa_1 - 1$. We will redefine the term {\em joker} which was defined in Definition~\ref{def:joker}, with a different setting of parameters. We say $Z = \{Z_1, Z_2, \ldots, Z_k\}$ is a {\em joker$^+$} if $\sum_{i \in [k]}Z_i\ge \beta k + \kappa_2 \sqrt{\beta k}$, and a {\em joker$^-$} if $\sum_{i \in [k]}Z_i\le \beta k - \kappa_2 \sqrt{\beta k}$. In both cases we say $Z$ is a {\em joker}. 

First, we can apply a Chernoff bound on random variables $Z_1, \ldots, Z_k$, and obtain 
$$\textstyle \Pr_\eth[Z \textrm{ is a joker}^+] = \Pr_\eth\left[\sum_{i \in [k]}Z_i\ge \beta k + \kappa_2\sqrt{\beta k} \right] \le e^{-\kappa_2^2/3}.$$ 

Second, by Observation \ref{obs:first}, we can 
apply a Chernoff bound on random variables $Z_1, \ldots, Z_k$ conditioned on $\Pi$ being bad$^+$,
\begin{eqnarray*} 
&& \Pr_{\eth'}[Z \textrm{ is a joker}^+\ |\ \Pi \textrm{ is bad}^+]  \\
&\ge& \sum_{\pi}\Pr_{\eth'}\left[\Pi = \pi\ |\ \Pi \textrm{ is bad}^+\right] \Pr_{\eth'} \left[Z \textrm{ is a joker}^+\ |\ \Pi = \pi, \Pi \textrm{ is bad}^+\right]\\
&=&   \sum_{\pi}\Pr_{\eth'}\left[\Pi = \pi\ |\ \Pi \textrm{ is bad}^+\right] \textstyle \Pr_{\eth'}\left[ \left. \sum_{i \in [k]}Z_i\ge \beta k + \kappa_2\sqrt{\beta k}\ \right|  \sum_{i \in [k]} q_i^\pi \ge \beta k + \kappa_1\sqrt{\beta k}, \Pi = \pi \right] \\
&\ge&  \sum_{\pi}\Pr_{\eth'}\left[\Pi = \pi\ |\ \Pi \textrm{ is bad}^+\right] \left(1 - e^{-(\kappa_1 - \kappa_2)^2/3}\right) \\
&=&  \left(1 - e^{-(\kappa_1 - \kappa_2)^2/3}\right).
\end{eqnarray*}

Finally by Bayes' theorem, we have that 
\begin{eqnarray*} 
 \Pr_{\eth'}[\Pi \textrm{ is bad}^+]
& = & \frac{\Pr_{\eth}[Z \textrm{ is a joker}^+] \cdot \Pr_{\eth'}[\Pi \textrm{ is bad}^+\ |\ Z \textrm{ is a joker}^+]}{\Pr_{\eth'}[Z \textrm{ is a joker}^+\ |\ \Pi \textrm{ is bad}^+]} \\
&\le& \frac{e^{-\kappa_2^2/3}}{1 - e^{-(\kappa_1 - \kappa_2)^2/3}}.
\end{eqnarray*} 
By symmetry (since we have set $\beta = 1/2$), we can also show that $$\Pr_{\eth'}[\Pi \textrm{ is bad}^-] \le {e^{-\kappa_2^2/3}}/{(1 - e^{-(\kappa_1 - \kappa_2)^2/3})}.$$ Therefore $\Pr_{\eth'}[\Pi \textrm{ is bad}] \le 2{e^{-(\kappa_1-1)^2/3}}/{(1 - e^{-1/3})} \le 8e^{-\kappa_1^2/4}$ (recall that we have set $\kappa_2 = \kappa_1 - 1$).
\end{proof}

The following definition is essentially the same as Definition~\ref{def:pi-strong}, but is for private randomness protocols. 
\begin{definition}
\label{def:pi-strong-2}
We say a transcript $\pi$ is {\em weak} if $\sum_{i \in [k]}q_i^\pi(1 - q_i^\pi) \ge  \beta k / (40 c_0)$ (for a sufficiently large constant $c_0$), and {\em strong} otherwise.
\end{definition}

The following lemma is similar to Lemma~\ref{lem:pi-strong}, but with the new definition of a {\em normal} $\pi$.

\begin{lemma}
\label{lem:pi-strong-2}
Under the assumption of Theorem~\ref{thm:gm}, $\Pr_{\eth'}[\Pi \textrm{ is normal and strong}] \ge 0.98$.
\end{lemma}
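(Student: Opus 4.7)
The plan is to mirror the proof of Lemma~\ref{lem:pi-strong}, adapted to the \gm\ setting. First I would observe that Observation~\ref{obs:first} continues to hold for private-randomness protocols: the rectangle property guarantees that, conditioned on a transcript $\pi$, the pairs $(Z_i, r_i)$ of inputs and private coins are independent across sites, and marginalizing out the private coins preserves independence of the $Z_i$. Hence $\var[\sum_{i \in [k]} Z_i \mid \Pi = \pi] = \sum_{i \in [k]} q_i^\pi(1 - q_i^\pi)$, which is at least $\beta k/(40 c_0)$ for a weak $\pi$ by Definition~\ref{def:pi-strong-2}.

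Next I would invoke Fact~\ref{lem:feller} in both directions. Applied directly to $\sum_{i} Z_i$ with deviation $t = (\kappa_1 + 1)\sqrt{\beta k}$, using that the hypotheses $\sigma \ge 200$ and $t \in [0,\sigma^2/100]$ hold once $\beta k$ exceeds a sufficiently large constant (which we may assume, since the theorem's $\Omega(k)$ bound is vacuous otherwise), the fact yields
\[
\Pr_{\eth'}\!\left[\textstyle \sum_{i} Z_i \ge \sum_{i} q_i^\pi + (\kappa_1 + 1)\sqrt{\beta k} \;\middle|\; \Pi = \pi\right] \ge \delta_\ell
\]
for some constant $\delta_\ell = \delta_\ell(c_0, \kappa_1) > 0$. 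Applying the same fact to the complementary sum $\sum_{i}(1 - Z_i)$, which has the same variance as $\sum_i Z_i$, yields the symmetric lower-tail bound
\[
\Pr_{\eth'}\!\left[\textstyle \sum_{i} Z_i \le \sum_{i} q_i^\pi - (\kappa_1 + 1)\sqrt{\beta k} \;\middle|\; \Pi = \pi\right] \ge \delta_\ell.
\]
Since $\pi$ is normal, $|\sum_i q_i^\pi - \beta k| \le \kappa_1 \sqrt{\beta k}$, so conditioned on any normal and weak $\pi$ both events $\{\sum_i Z_i \ge \beta k + \sqrt{\beta k}\}$ (forcing the correct output to be $1$) and $\{\sum_i Z_i \le \beta k - \sqrt{\beta k}\}$ (forcing it to be $0$) each occur with probability at least $\delta_\ell$.

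The concluding step is standard: the output $g(\pi)$ is a single bit, so it must err on at least one of these two incompatible-answer events, making the conditional error probability given a normal and weak $\pi$ at least $\delta_\ell$. If $\Pr_{\eth'}[\Pi \text{ is normal and weak}]$ exceeded $0.01$, the total error would exceed $0.01\,\delta_\ell$, contradicting the hypothesis that $\delta$ is sufficiently small; hence $\Pr_{\eth'}[\Pi \text{ is normal and weak}] \le 0.01$. Choosing $\kappa_1$ to be a sufficiently large constant so that Lemma~\ref{lem:pi-normal-2} yields $\Pr_{\eth'}[\Pi \text{ is normal}] \ge 0.99$, we conclude
\[
\Pr_{\eth'}[\Pi \text{ is normal and strong}] \ge \Pr_{\eth'}[\Pi \text{ is normal}] - \Pr_{\eth'}[\Pi \text{ is normal and weak}] \ge 0.99 - 0.01 = 0.98.
\]
The main point to verify carefully is the symmetric application of Feller's inequality --- Fact~\ref{lem:feller} is stated only as an upper-tail bound, so the lower tail requires the transformation $Z_i \mapsto 1-Z_i$, together with checking both hypotheses on $\sigma$ and $t$ in the two cases.
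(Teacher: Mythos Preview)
Your proposal is correct and follows essentially the same approach as the paper. The paper proves the two anti-concentration inequalities by applying Fact~\ref{lem:feller} with deviation $\kappa_3\sqrt{\beta k}$ for $\kappa_3=2\kappa_1$ (reaching the stricter thresholds $\beta k \pm \kappa_1\sqrt{\beta k}$) and appeals to the $\beta=1/2$ symmetry for the lower tail, then sets $\kappa_1=\sqrt{4\ln 800}$ so that Lemma~\ref{lem:pi-normal-2} gives $\Pr[\text{normal}]\ge 0.99$ and concludes exactly as you do; your choice $t=(\kappa_1+1)\sqrt{\beta k}$ hitting the \gm\ thresholds $\beta k\pm\sqrt{\beta k}$, and your explicit $Z_i\mapsto 1-Z_i$ trick for the lower tail, are cosmetic variants of the same argument.
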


\begin{proof}
We first show that for a normal and weak transcript $\pi$, there exists a universal constant $c$ such that
\begin{eqnarray*}
\textstyle \Pr_{\eth'}\left[ \left. \sum_{i \in [k]}Z_i\le \beta k - \kappa_1 \sqrt{\beta k} \ \right|\ \Pi = \pi \right]  
&\ge&  c \cdot e^{- 60c_0 \kappa_1^2},\\ 
  \text{and \quad}  \textstyle \Pr_{\eth'}\left[\left. \sum_{i \in [k]}Z_i\ge \beta k + \kappa_1 \sqrt{\beta k} \ \right|\ \Pi = \pi \right] 
&\ge& c \cdot e^{- 60c_0 \kappa_1^2}.
\end{eqnarray*}

We only need to prove the first inequality. The second will follow by symmetry, since we set $\beta = 1/2$.
For a good and weak $\Pi = \pi$, we have 
\begin{eqnarray*}
\var_{\eth'}\left(\sum_{i \in [k]} Z_i\ |\ \Pi = \pi\right) = \sum_{i \in [k]} \var_{\eth'}(Z_i\ |\ \Pi = \pi) 
\ge \beta k / (40c_0).
\end{eqnarray*}
Set $\kappa_3 = 2\kappa_1$. By Fact~\ref{lem:feller} we have for a universal constant $c$,
\begin{eqnarray*}
\label{eq:bit-2}
&& \textstyle \Pr_{\eth'}\left[\left. \sum_{i \in [k]} Z_i \ge \sum_{i \in [k]} q_i^\pi + \kappa_3 \sqrt{\beta k}\ \right|\ \Pi = \pi \right] \\
&\ge& c \cdot e^{-\frac{(\kappa_3  \sqrt{\beta k})^2}{3 \cdot \beta k / (40c_0)}} 
\ge c \cdot e^{- 60c_0 \kappa_1^2}.
\end{eqnarray*}
Together with the fact that $\pi$ is good, we obtain
\begin{eqnarray*}
&&\textstyle \Pr_{\eth'}\left[\left. \sum_{i \in [k]}Z_i\ge \beta k + \kappa_1 \sqrt{\beta k} \ \right|\ \Pi = \pi \right] \\
& = & \textstyle \Pr_{\eth'}\left[\left. \sum_{i \in [k]}Z_i\ge \beta k + (\kappa_3-\kappa_1) \sqrt{\beta k} \ \right|\ \Pi = \pi \right] \\
& \ge & \textstyle \Pr_{\eth'}\left[\left. \sum_{i \in [k]} Z_i - \sum_{i \in [k]} q_i^\pi
\ge \kappa_3 \sqrt{\beta k}\ \right|\ \Pi = \pi \right] \\
&\ge& c \cdot e^{- 60c_0 \kappa_1^2}.
\end{eqnarray*}

Now set $\kappa_1 = \sqrt{4\ln 800}$. Suppose conditioned on $\Pi$ being normal, it is weak with probability more than $0.01$. Then the error probability of the protocol (taken over the distribution $\eth'$) is at least 
$$(1 - 8e^{-\kappa_1^2/4}) \cdot 0.01 \cdot c \cdot e^{- 60c_0 \kappa_1^2} \ge \delta,$$
for a sufficiently small constant $\delta$, violating the success guarantee of Theorem~\ref{thm:gm}. Therefore with probability at least 
$$(1 - 8e^{-\kappa_1^2/4}) \cdot (1 - 0.99) \ge 0.98,$$
$\Pi$ is both normal and strong.
\end{proof}


\begin{proof} (for Theorem~\ref{thm:gm})
Recall that for a transcript $\pi$, we have defined $q_i^\pi = \Pr_{\eth'}[Z_i = 1\ |\ \Pi = \pi]$. Let $p_i^\pi = \min\{q_i^\pi, 1 - q_i^\pi\}$, thus $p_i^\pi \in [0, 1/2]$. We will omit the superscript $\pi$ in $p_i^\pi$ when it is clear from the context. 

For a strong $\pi$, we have 
$\sum_{i \in [k]} p_i \cdot 1/2 \le \sum_{i \in [k]} p_i (1 - p_i) < \frac{\beta k}{40c_0}$. Thus $\sum_{i \in [k]} p_i < \frac{\beta k}{20c_0}$.
For each $p_i$, if $p_i < \frac{\beta}{16c_0} < 1/2$ (for a sufficiently large constant $c_0$), then $H_b(p_i)  < H_b\left(\frac{\beta}{16c_0}\right)$ (recall that $H_b(\cdot)$ is the binary entropy function). Otherwise, it holds that $H_b(p_i) = p_i \log\frac{1}{p_i} + (1 - p_i) \log\frac{1}{1 - p_i} \le p_i \log\frac{16c_0}{\beta} + 2(1 - p_i) p_i$, since $\log\frac{1}{1 - p_i} = p_i + p_i^2/2 + p_i^3/3 + \ldots \le 2p_i$ if $p_i \le 1/2$. Thus we have
\begin{eqnarray*}
&&\sum_{i \in [k]} H(Z_i\ |\ \Pi = \pi) 
 =  \sum_{i \in [k]} H_b(p_i) \label{eq:z-1}\\
& \le & \sum_{i \in [k]}\max \left\{H_b\left(\frac{\beta}{16c_0}\right),\ p_i \log\frac{16c_0}{\beta} + 2(1 - p_i) p_i  \right\}\\
& \le & \max \left\{k \cdot H_b\left(\frac{\beta}{16c_0}\right),\ \sum_{i \in [k]} p_i \cdot \log\frac{16c_0}{\beta} + 2\sum_{i \in [k]} p_i (1 - p_i) \right\} \\
& < & \max \left\{k \cdot H_b\left(\frac{\beta}{16c_0}\right),\ \frac{\beta k}{20c_0} \cdot \log\frac{16c_0}{\beta} + \frac{\beta k}{20c_0} \right\}  \\
&=& k \cdot H_b\left(\frac{\beta}{16c_0}\right).
\end{eqnarray*}
Therefore, if $\Pr_{\eth'}[\Pi \textrm{ is strong}] \ge 0.98$, then
\begin{eqnarray*}
&& \sum_{i \in [k]} H(Z_i\ |\ \Pi) \\
&=& \sum_{\pi : \pi \textrm{ is strong}} \left( \Pr_{\eth'}[\Pi = \pi] \sum_{i \in [k]} H(Z_i\ |\ \Pi = \pi) \right) + \sum_{\pi : \pi \textrm{ is weak}}  \left( \Pr_{\eth'}[\Pi = \pi] \sum_{i \in [k]} H(Z_i\ |\ \Pi = \pi) \right) \\
&\le& 0.98 \cdot k  \cdot H_b\left(\frac{\beta}{16c_0}\right) + 0.02 \cdot k \cdot 1\\
&\le& k \cdot H_b\left(\frac{\beta}{16c_0}\right)  + 0.02k.
\end{eqnarray*}
Now, for a large enough constant $c_0$ and $\beta = 1/2$,
$$I(Z; \Pi) \ge \sum_{i \in [k]}I(Z_i; \Pi) \ge \sum_{i \in [k]}(H(Z_i) - H(Z_i\ |\ \Pi)) \ge k \cdot H_b(\beta) - k \cdot H_b\left(\frac{\beta}{16c_0}\right) - 0.02k \ge \Omega(k). $$
\end{proof} 

\subsubsection{Proof of Theorem~\ref{thm:hhquan}}
\begin{proof}
We first prove the theorem for \quan.
In the case that $k \ge 1/\eps^2$, we prove an $\Omega(1/\eps^2)$ information complexity lower bound. We prove this by a simple reduction from \gm. We can assume $k = 1/\eps^2$ since if $k > 1/\eps^2$ then we can just give inputs to the first $1/\eps^2$ sites. Set $\beta = 1/2$. Given a random input $Z_1, Z_2, \ldots, Z_k$ of \gm\ chosen from distribution $\eth$, we simply give $Z_i$ to site $S_i$. It is easy to observe that a protocol that computes $\eps/2$-approximate \quan\ on $A = \{Z_1, Z_2, \ldots, Z_k\}$ with error probability $\delta$ also computes \gm\ on input distribution $\eth$ with error probability $\delta$, since the answer to \gm\ is simply the answer to $\frac{1}{2}$-quantile. The $\Omega(1/\eps^2)$ lower bound follows from Theorem~\ref{thm:gm}.

In the case that $k < 1/\eps^2$, we prove an $\Omega(\sqrt{k}/\eps)$ information complexity lower bound. We again perform a reduction from \gm. Set $\beta = 1/2$. The reduction works as follows. We are given $\ell = 1/(\eps\sqrt{k})$ independent copies of \gm\ with $Z^1, Z^2, \ldots, Z^\ell$ being the inputs, where $Z^i = \{Z^i_1, Z^i_2, \ldots, Z^i_k\} \in \{0,1\}^k$ is chosen from distribution $\eth$. We construct an input for \quan\ by giving the $j$-th site the item set $A_j = \{Z^1_j, 2 + Z^2_j, 4 + Z^3_j, \ldots, 2(\ell - 1) + Z^\ell_j\}$. It is not difficult to observe that a protocol that computes $\eps/2$-approximate \quan\ on the set $A = \{A_1, A_2, \ldots, A_j\}$ with error probability $\delta$ also computes the answer to each copy of \gm\ on distribution $\eth$  with error probability $\delta$, simply by returning $(X_i - 2(i - 1))$ for the $i$-th copy of \gm, where $X_i$ is the $\eps/2$-approximate $\frac{i - 1/2}{\ell}$-quantile.

On the other hand, any protocol that computes each of the $\ell$ independent copies of \gm\ correctly with error probability $\delta$ for a sufficiently small constant $\delta$ has information complexity $\Omega(\sqrt{k}/\eps)$. This is simply because for any transcript $\Pi$, by Theorem~\ref{thm:gm}, independence and the chain rule we have that
\begin{equation}
I(Z^1, Z^2, \ldots, Z^\ell; \Pi) \ge \sum_{i \in [\ell]} I(Z^i; \Pi) \ge \Omega(\ell k) \ge \Omega(\sqrt{k}/\eps).
\end{equation}

The proof for heavy hitters is done by essentially the same reduction as that for \quan. In the case that $k = 1/\eps^2$ (or $k \ge 1/\eps^2$ in general), a protocol that computes $\eps/2$-approximate $\frac{1}{2}$-heavy hitters on $A = \{Z_1, Z_2, \ldots, Z_k\}$ with error probability $\delta$ also computes \gm\ on input distribution $\eth$ with error probability $\delta$. In the case that $k < 1/\eps^2$, it also holds that a protocol that computes $\eps/2$-approximate $\frac{\eps\sqrt{k}}{2}$-heavy hitters on the set $A = \{A_1, A_2, \ldots, A_j\}$ where $A_j = \{Z^1_j, 2 + Z^2_j, 4 + Z^3_j, \ldots, 2(\ell - 1) + Z^\ell_j\}$ with error probability $\delta$ also computes the answer to each copy of \gm\ on distribution $\eth$  with error probability $\delta$.
\end{proof}

\subsection{Entropy Estimation}
We are given a set $A = \{(e_1, a_1), (e_2, a_2), \ldots, (e_m, a_m)\}$ where each $e_k\ (k \in [m])$ is drawn from the universe $[N]$, and $a_k \in \{+1, -1\}$ denotes an insertion or a deletion of item $e_k$. The entropy estimation problem (\entropy) asks for the value
$H(A) = \sum_{j \in [N]}(\abs{f_j}/L) \log(L/\abs{f_j})$ where $f_j = \sum_{k : e_k = j} a_k$ and $L = \sum_{j \in [N]} \abs{f_j}$. In the $\eps$-approximate \entropy\ problem, the items in the set $A$ are distributed among $k$ sites who want to compute a value $\tilde{H}(A)$ for which $\abs{\tilde{H}(A) - H(A)} \le \eps$. In this section we prove the following theorem.

\begin{theorem}
\label{thm:entropy}
Any randomized protocol that computes $\eps$-approximate \entropy\  with error probability at most $\delta$ for some sufficiently small constant $\delta$ has communication complexity ${\Omega}(k/\eps^2)$.
\end{theorem}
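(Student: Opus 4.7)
The plan is to reduce from \thresh\ to $\eps$-approximate \entropy. Taking block size $n=k^2$ in Corollary~\ref{cor:BTX} yields an $\Omega(k/\eps^2)$ blackboard communication lower bound for \thresh, so it suffices to give a public-coin-only preprocessing of a \thresh\ input $(b^j_{i,\ell})\sim\nu$ that produces a distributed update stream whose entropy encodes the \thresh\ answer up to additive $\Theta(\eps)$. Unlike the $F_p$ reduction of Section~\ref{sec:F2-reduction}, which reads the \thresh\ answer out of a polynomial in the special-coordinate frequencies, here we have to work with the much gentler function $H=\sum(f_j/L)\log(L/f_j)$ and so must place the frequency vector at an operating point where $H$ is sufficiently sensitive.

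Under the conditional information cost framework of Theorem~\ref{thm:BTX}, the special coordinate $M^j$ of each block and the special-site vector $D^j$ are treated as public-coin variables, so every site knows $M^j$ for each $j$ without communication. The preprocessing is then very simple: for each block $j$ and each site $i$, insert the universe element $(j,M^j)$ into the stream if and only if $b^j_{i,M^j}=1$; non-special coordinates are ignored. The resulting frequency vector has $T_{11}$ items of frequency $k$, $T_{01}+T_{10}$ items of frequency $k/2$, and $T_{00}$ empty slots, where $T_{st}$ is the number of blocks $j$ with type $S^j=st$. Setting $A=T_{01}+T_{10}$ and $r=A+2T_{11}$, so that $L=(k/2)r$, a short direct calculation gives the closed form $H=\log r-2T_{11}/r$. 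Hence $\partial H/\partial A=(A+4T_{11})/r^2=\Theta(\eps^2)$ around the typical values $A\approx 1/(2\eps^2)$ and $T_{11}\approx 1/(4\eps^2)$, while the outer \gap\ inside \thresh\ forces a gap of $\Theta(1/\eps)$ in $A$. Thus the entropy gap between the $\thresh=0$ and $\thresh=1$ cases is $\Theta(\eps)$, and an additive $\Theta(\eps)$-approximation to $H$ decides \thresh, giving the theorem by Yao's principle (with an inconsequential constant rescaling of $\eps$).

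The hardest step is controlling the random variable $T_{11}$: its fluctuations under $\nu$ are of size $\Theta(1/\eps)$ and $\partial H/\partial T_{11}=4T_{11}/r^2=\Theta(\eps^2)$ has the same order as $\partial H/\partial A$, so without extra care these fluctuations would contribute $\Theta(\eps)$ of uncorrelated entropy noise and swamp the signal coming from $A$. The remedy is to condition additionally on $T_{11}$ (which is a function of $(X,Y)$ only and independent of the inner \XOR\ bits) inside the information-cost argument, so that after conditioning the entropy becomes a function of $A$ alone. One then has to verify that this extra conditioning only costs an $O(\log(1/\eps))=H(T_{11})$ additive term in the information-cost bound of Theorem~\ref{thm:preBTX}, which is dwarfed by the $\Omega(1/\eps^2)$ lower bound there; combined with the $\Omega(n/k)$ per-block bound of Theorem~\ref{thm:XOR} and the chain-rule argument of Theorem~\ref{thm:BTX}, this preserves the $\Omega(k/\eps^2)$ conditional information-cost lower bound and hence the claimed communication lower bound for \entropy.
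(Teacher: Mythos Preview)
Your reduction has a fatal gap at its very first step: you assume the sites can treat the special coordinates $M^j$ as public coin. In the conditional information cost $I(B;\Pi\mid M,D)$ of Theorem~\ref{thm:BTX}, the variables $M,D$ appear only in the \emph{measurement} of information; the protocol $\Pi$ still receives only $B_1,\ldots,B_k$. Indeed, if the sites were told $M$, then \thresh\ would collapse: site $1$ can read off $X^j=b^j_{1,M^j}$ for every $j$ and broadcast the entire vector $X\in\{0,1\}^{1/\eps^2}$, and site $k$ can do the same for $Y$, solving \thresh\ with $O(1/\eps^2)$ bits. In particular Theorem~\ref{thm:XOR} fails outright once the sites know the special coordinate (two bits decide the instance type), so the $\Omega(n/k)$ per-block information bound, and hence the $\Omega(k/\eps^2)$ bound you want to invoke, simply does not hold for protocols that see $M$. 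Your preprocessing ``insert $(j,M^j)$ iff $b^j_{i,M^j}=1$'' is therefore not something the sites can carry out in the model where the lower bound applies.

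The paper's proof handles exactly this obstacle. It keeps the sites oblivious to $M$ by having them act on \emph{all} $n=\gamma^2k^2$ coordinates of each block (with $\gamma=\log^{-\Theta(1)}(k/\eps)$), attaching an independent public $\pm1$ sign $R^j_\ell$ to each coordinate so that the non-special columns contribute a random-walk noise of order $\gamma k=o(k)$ to $f_j$, while the special column contributes $0$, $k/2$, or $k$ depending on the instance type. This gives three frequency scales without anyone knowing which column is special. The second idea, which replaces your proposed conditioning on $T_{11}$, is to evaluate entropy three times, $H_0$ on all $k$ sites and $H_1,H_2$ on the first and second halves, and show that $(H_0,H_1+H_2)$ form a linearly independent system in $(U',Q')$ with $\Theta(\eps^2)$ coefficients; solving it recovers $U'$ to additive $O(1/\eps)$ from additive-$\Theta(\eps)$ entropy estimates. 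No extra conditioning on $(X,Y)$-measurable quantities is needed, so Theorems~\ref{thm:XOR} and~\ref{thm:BTX} apply verbatim.
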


\begin{proof}
As with $F_2$, we prove the lower bound for the \entropy\ problem by a reduction from \thresh. Given a random input $B$ for \thresh\ according to distribution $\nu$ with $n = \gamma^2 k^2$ for some parameter $\gamma = \log^{-d} (k/\eps)$ for large enough constant $d$, we construct an input for \entropy\ as follows. Each block $j \in [1/\eps^2]$ in \thresh\ corresponds to one coordinate item $e_j$ in the vector for \entropy; so we have in total $1/\eps^2$ items in the entropy vector. The $k$ sites first use shared randomness to sample $\gamma^2 k^2/\eps^2$ random $\pm 1$ values for each coordinate across all blocks in $B$~\footnote{By Newman's theorem (cf. ~\cite{eyal97:_commun_}, Chapter 3) we can get rid of the public randomness by increasing the total communication complexity by no more than an additive $O(\log(\gamma k/\eps))$ factor which is negligible in our proof.}. Let $\{R_{1}^1, R_{1}^2, \ldots, R_{\gamma^2 k^2}^ {1/\eps^2}\}$ be these random $\pm 1$ values. Each site looks at each of its bits $B_{i,\ell}^j\ (i \in [k], \ell \in \gamma^2 k^2, j \in [1/\eps^2])$, and generates an item $(e_j, R_{\ell}^j)$ (recall that $R_{\ell}^j$ denotes insertion or deletion of the item $e_j$) if $B_{i,\ell}^j = 1$. Call the resulting input distribution $\nu'$.

We call an item in group $G_P$ if the \XOR\ instance in the corresponding block is a $00$-instance; and  in group $G_Q$ if it is a $11$-instance; in group $G_U$ if it is a $01$-instance or a $10$-instance. Group $G_U$ is further divided to two subgroups $G_{U_1}$ and $G_{U_2}$, containing all $10$-instance and all $01$-instance, respectively. Let $P, Q, U, U_1, U_2)$ be the cardinalities of these groups. Now we consider the frequency of each item type.
\begin{enumerate}
\item For an item $e_j \in G_P$, its frequency $f_j$ is distributed as follows: we choose a value $i$ from the binomial distribution on $n$ values each with probability $1/2$,
then we take the sum $\kappa_j$ of $i$ i.i.d. $\pm 1$ random variables. We can thus write $|f_j| = |\kappa_j|$. 
%
%

\item For an item $e_j \in G_Q$, its frequency $f_j$ is distributed as follows: we choose a value $i$ from the binomial distribution on $n$ values each with probability $1/2$, then we take the sum $\kappa_j$ of $i$ i.i.d. $\pm 1$ random variables. Then we add the value $R_{\ell^*}^j \cdot k$, where $\ell^*$ is the index of the special column in block $j$. We can thus
write $|f_j|$ as $|k + R_{\ell^{*}}^j \cdot \kappa_j|$. 
%
By a Chernoff-Hoeffding bound, with probability $1 - 2e^{-\lambda^2/2}$, we have  $\abs{\kappa_j} \le \lambda \gamma k$. We choose $\lambda = \log (k/\eps)$, and thus $\lambda \gamma = o(1)$. Therefore $\kappa_j$ will not affect the sign of $f_j$ for any $j$ (by a union bound) and we can write $\abs{f_j} = k + R_{\ell^*}^j \cdot \kappa_j$. Since $\kappa_j$ is symmetric about $0$ and $R_{\ell^*}^j$ is a random $\pm 1$ variable, we can simply drop $R_{\ell^*}^j$ and write $\abs{f_j} = k + \kappa_j$.

\item For an item $e_j \in G_U$, its frequency $f_j$ is distributed as follows: we choose a value $i$ from the binomial distribution on $n$ values each with probability $1/2$,
then we take the sum $\kappa_j$ of $i$ i.i.d. $\pm 1$ random variables. Then we add the value $R_{\ell^*}^j \cdot k/2$, where $\ell^*$ is the index of the special column in block $j$. We can thus 
write $|f_j|$ as $|k/2 + R_{\ell^*}^j \cdot \kappa_j$. 
As in the previous case, with probability $1 - 2e^{-\lambda^2/2}$, $\kappa_j$ will not affect the sign of $f_j$ and we can write $\abs{f_j} = k/2 + \kappa_j$.
\end{enumerate}
By a union bound, with error probability at most $\delta_1 = 1/\eps^2 \cdot 2e^{-\lambda^2/2}  = 1 - o(1)$, each $\kappa_j\ (e_j \in G_Q \cup G_U)$ will not affect the sign  of the corresponding $f_j$. Moreover, by another Chernoff bound we have that with error probability $\delta_2 = 10e^{-c_0^2/3}$, $P, Q, U_1, U_2$ are equal to $1/4\eps^2 \pm c_0/\eps$, and $U = 1/(2\eps^2) \pm c_0/\eps$. Here $\delta_2$ can be sufficiently small if we set constant $c_0$ sufficiently large. Thus we have that with arbitrary small constant error $\delta_0 = \delta_1 + \delta_2$, all the concentration results claimed above hold. For simplicity we neglect this part of error since it can be arbitrarily small and will not affect any of the analysis. In the rest of this section we will ignore arbitrarily small errors and drop some lower order terms as long as such operations will not affect any the analysis.

The analysis of the next part is similar to that for our $F_2$ lower bound, where we end up computing $F_2$ on three different vectors. Let us calculate $H_0, H_1$ and $H_2$, which stand for the entropies of all $k$-sites, the first $k/2$ sites and the second $k/2$ sites, respectively. Then we show that using $H_0, H_1$ and $H_2$ we can estimate $U$ well, and thus compute \thresh\ correctly with an arbitrarily small constant error. Thus if there is a protocol for \entropy\ on distribution $\nu'$ then we obtain a protocol for \thresh\ on distribution $\nu$ with the same communication complexity, completing the reduction and consequently proving Theorem~\ref{thm:entropy}.

Before computing $H_0, H_1$ and $H_2$, we first compute the total number $L$ of items. We can write 
\begin{eqnarray}
L  & = &  \sum_{e_j \in G_P} \abs{f_j} + \sum_{e_j \in G_Q} \abs{f_j} + \sum_{e_j \in G_U} \abs{f_j} \nonumber \\
\label{eq:entropy-L-1} & = & Q \cdot k + U \cdot k/2 + \sum_{e_j \in G_p} \abs{\kappa_j} +   \sum_{e_j \in G_Q \cup G_U} \kappa_j.
\end{eqnarray}
The absolute value of the fourth term in (\ref{eq:entropy-L-1}) can be bounded by $O(\gamma k / \eps)$ with arbitrarily large constant probability, using a Chernoff-Hoeffding bound, which will be $o(\eps L)$ and thus can be dropped. For the third term, by Chebyshev's inequality we can assume
(by increasing the constant in the big-Oh) that with arbitrarily large constant probability, 
$\sum_{e_j \in G_p} |\kappa_j| = (1 \pm \eps) \cdot 1/(4\eps^2) \cdot {\bf E}[|\kappa_j|]$, where ${\bf E}[|\kappa_j|] = \Theta(\gamma k)$ follows by approximating the binomial
distribution by a normal distribution (or, e.g., Khintchine's inequality). Let $z_1 = {\bf E}[|\kappa_j|]$ be a value which can be computed exactly. Then, 
$\sum_{e_j \in G_p} |\kappa_j| = 1/(4\eps^2) \cdot z_1 \pm O(\gamma k/\eps) = z_1/(4\eps^2) \pm o(\eps L)$, and so we can drop the additive $o(\eps L)$ term. 

Finally, we get,
\begin{eqnarray}
L  & = &  Q \cdot k + U \cdot k/2 + r_1
\label{eq:entropy-L}
\end{eqnarray}
where $r_1 = z_1/(4\eps^2) = \Theta(\gamma k /\eps^2)$ is a value that can be computed by any site without any comunication. 

Let $p_j = \abs{f_j} / L\ (j \in [1/\eps^2])$. We can write $H$ as follows.
\begin{eqnarray}
H_0 & = & \sum_{e_j \in P} p_j \log (1/p_j) + \sum_{e_j \in Q} p_j \log (1/p_j) + \sum_{e_j \in U} p_j \log (1/p_j).  \nonumber \\
& = & \log L - S / L \label{eq:entropy-H-0}
\end{eqnarray}
where 
\begin{eqnarray}
S & = & \sum_{e_j \in P} \abs{f_j} \log \abs{f_j} + \sum_{e_j \in Q} \abs{f_j} \log \abs{f_j} + \sum_{e_j \in U} \abs{f_j} \log \abs{f_j} \label{eq:entropy-H-1}
\end{eqnarray}
We consider the three summands in (\ref{eq:entropy-H-1}) one by one. For the second term in (\ref{eq:entropy-H-1}), we have
\begin{eqnarray}
\sum_{e_j \in Q} \abs{f_j} \log \abs{f_j}  & = & \sum_{e_j \in Q} (k + \kappa_j) \log  (k + \kappa_j) \nonumber \\
& = & Q \cdot k \log k + k \sum_{e_j \in Q} \log(1 + \kappa_j / k) \nonumber \\
& = & Q \cdot k \log k \pm O(\sum_{e_j \in Q} \kappa_j). \label{eq:entropy-Q-1}
\end{eqnarray}
The second term in (\ref{eq:entropy-Q-1}) is at most $o(\eps Q \cdot \gamma k) = o(k/eps)$, and can be dropped. By a similar analysis we can obtain that the third term in (\ref{eq:entropy-H-1}) is (up to an $o(k/\eps)$ term)
\begin{eqnarray}
\sum_{e_j \in U} \abs{f_j} \log \abs{f_j} = U \cdot (k/2) \log(k/2). \label{eq:entropy-U-1}
\end{eqnarray}
Now consider the first term. We have 
\begin{eqnarray}
\sum_{e_j \in P} \abs{f_j} \log \abs{f_j}  & = &  \sum_{e_j \in P} \abs{\kappa_j} \log \abs{\kappa_j} \nonumber \\
& = & (1/4\eps^2 \pm O(1/\eps)) \cdot \E[\abs{\kappa_j} \log \abs{\kappa_j}] \nonumber \\
& = & (1/4\eps^2) \cdot \E[\abs{\kappa_j} \log \abs{\kappa_j}]   \pm O(1/\eps) \cdot \E[\abs{\kappa_j} \log \abs{\kappa_j}] \label{eq:entropy-P-1}
\end{eqnarray}
where $z_2 =  \E[\abs{\kappa_j} \log \abs{\kappa_j}] = O(\gamma k \log k)$ can be computed exactly. Then the second term in (\ref{eq:entropy-P-1}) is at most $O(\gamma k \log k/\eps) = o(k/\eps)$, and thus can be dropped. Let $r_2 = (1/4\eps^2) \cdot z_2 = O(\gamma k \log k / \eps^2)$. By Equations (\ref{eq:entropy-H-0}), (\ref{eq:entropy-L}), (\ref{eq:entropy-H-1}), (\ref{eq:entropy-Q-1}), (\ref{eq:entropy-U-1}), (\ref{eq:entropy-P-1}) we can write
\begin{eqnarray}
H_0 & = & \log(Q \cdot k + U \cdot k/2 + r_1) + \frac{Q \cdot k\log k + U \cdot (k/2) \log (k/2) + r_2}{Q \cdot k + U \cdot k/2 + r_1}.  \label{eq:entropy-H-2}
\end{eqnarray}
Let $U = 1/(2\eps^2) + U'$ and $Q = 1/4\eps^2 + Q'$, and thus $U' = O(1/\eps)$ and $Q' = O(1/\eps)$. Next we convert the RHS of (\ref{eq:entropy-H-2}) to a linear function of $U'$ and $Q'$.
\begin{eqnarray}
H_0 & = & \log(k/(2\eps^2) + Q' k + U' k/2 + r_1) + \frac{\left(\frac{1}{4\eps^2} + Q'\right) \cdot k\log k + \left(\frac{1}{(2\eps^2)} + U'\right) \cdot \frac{k}{2}\log\frac{k}{2} + r_2}{k/(2\eps^2) + Q' k + U' k/2 + r_1}  \label{eq:entropy-H-4} \\
& = & \log(k/(2\eps^2) + r_1) + \log\left(1 + (U'  + 2 Q')\frac{\eps^2}{1 + 2\eps^2 r_1/k}\right) \nonumber \\
&& +\ \left(\left(\frac{1}{4\eps^2} + Q'\right) \cdot k\log k + \left(\frac{1}{(2\eps^2)} + U'\right) \cdot \frac{k}{2}\log\frac{k}{2} + r_2 \right) \cdot  \frac{2\eps^2}{k + 2 \eps^2 r_1} \cdot  \nonumber \\ 
&&\left(1 -  (U'  + 2 Q')\frac{\eps^2}{1 + 2\eps^2 r_1/k}\right)  \nonumber \\ 
&& \pm\ O(\eps^2)  \label{eq:entropy-H-5} \\
& = & \log(k/(2\eps^2) + r_1) + (U'  + 2 Q')\frac{\eps^2}{1 + 2\eps^2 r_1/k} \nonumber \\
&& +\ \left(\left(\frac{k (2\log{k} - 1)}{4\eps^2} + r_2 \right) + Q' \cdot k\log k + U' \cdot \frac{k}{2}\log\frac{k}{2} \right) \cdot  \frac{2\eps^2}{k + 2 \eps^2 r_1} \cdot  \nonumber \\ 
&&\quad \left(1 -  U' \cdot \frac{\eps^2}{1 + 2\eps^2 r_1/k} -  Q' \cdot \frac{2\eps^2}{1 + 2\eps^2 r_1/k}\right)  \nonumber \\ 
&& \pm\ O(\eps^2)  \label{eq:entropy-H-51} \\
& = & \alpha_1 + \alpha_2 U' + \alpha_3 Q', \label{eq:entropy-H-6}
\end{eqnarray} 
up to $o(\eps)$ factors (see below for discussion), 
where 
\begin{eqnarray} 
\alpha_1 & = & \log(k/(2\eps^2) + r_1) + \left(\frac{k (2\log{k} - 1)}{4\eps^2} + r_2 \right) \cdot \frac{2\eps^2}{k + 2 \eps^2 r_1},  \nonumber \\ 
\alpha_2 & = & \frac{\eps^2}{1 + 2\eps^2 r_1/k} + \left(\frac{k}{2}\log\frac{k}{2} -  \left(\frac{k (2\log{k} - 1)}{4\eps^2} + r_2 \right) \cdot \frac{\eps^2}{1 + 2\eps^2 r_1/k} \right) \cdot \frac{2\eps^2}{k + 2 \eps^2 r_1}, \nonumber \\ 
\alpha_3 & = & \frac{2\eps^2}{1 + 2\eps^2 r_1/k} + \left(k \log k -  \left(\frac{k (2\log{k} - 1)}{4\eps^2} + r_2 \right) \cdot \frac{2\eps^2}{1 + 2\eps^2 r_1/k} \right) \cdot \frac{2\eps^2}{k + 2 \eps^2 r_1}, \label{eq:entropy-H-a}
\end{eqnarray}



From (\ref{eq:entropy-H-4}) to (\ref{eq:entropy-H-5}) we use the fact that $1/(1+\eps) = 1 - \eps + O(\eps^2)$.  From (\ref{eq:entropy-H-5}) to (\ref{eq:entropy-H-51}) we use the fact that $\log(1 + \eps) = \eps + O(\eps^2)$. From (\ref{eq:entropy-H-51}) to (\ref{eq:entropy-H-6}) we use the fact that all terms in the form of $U' Q', {U'}^2, {Q'}^2$ are at most $\pm o(\eps)$ (we are assuming $O(\eps^2 \log k) = o(\eps)$ which is fine since we are neglecting polylog$(N)$ factors), therefore we can drop all of them together with the other $\pm O(\eps^2)$ terms, and consequently obtain a linear function on $U'$ and $Q'$.

Next we calculate $H_1$, and the calculation of $H_2$ will be exactly the same. The  values $t_1, t_2$ used in the following expressions are essentially the same as $r_1, r_2$ used for calculating $H_0$, with $t_1 = \Theta(\gamma k)$ and $t_2 = O(\gamma k \log k / \eps^2)$. Set $U'_1 = U_1 - 1/4\eps^2$ and $U'_2 = U_2 - 1/4\eps^2$.
\begin{eqnarray}
H_1 & = & \log((Q + U_1) k/2 + t_1) + \frac{Q \cdot \frac{k}{2} \log \frac{k}{2} + U_1 \cdot \frac{k}{2} \log \frac{k}{2} + t_2}{(Q + U_1) k/2 + t_1} \nonumber \\
& = & \log(k/(4\eps^2) + (Q' + U'_1) k/2 + t_1) + \frac{Q' \cdot \frac{k}{2} \log \frac{k}{2} + U'_1 \cdot \frac{k}{2} \log \frac{k}{2} + \frac{k\log(k/2)}{4\eps^2} + t_2}{k/(4\eps^2) + (Q' + U'_1) k/2 + t_1} \nonumber \\
& = & \log(k/(4\eps^2) + t_1) + (Q' + U'_1) \frac{2\eps^2}{1 + 4\eps^2 t_1/k} \nonumber \\
&& + \left( Q' \cdot \frac{k}{2} \log \frac{k}{2} + U'_1 \cdot \frac{k}{2} \log \frac{k}{2} + \frac{k\log(k/2)}{4\eps^2} + t_2 \right) \cdot \frac{4\eps^2}{k + 4\eps^2 t_1}  \cdot \left( 1 -   \frac{Q' \cdot 2\eps^2+ U'_1 \cdot 2\eps^2}{1 + 4\eps^2 t_1/k}\right)  \nonumber \\
& = & \beta_1 + \beta_2 U'_1 + \beta_3 Q'.   \label{eq:entropy-H-65}
\end{eqnarray}
where
\begin{eqnarray} 
\beta_1 & = &  \log(k/(4\eps^2) + t_1) + \left(\frac{k\log(k/2)}{4\eps^2} + t_2 \right) \cdot \frac{4\eps^2}{k + 4\eps^2 t_1}, \nonumber \\ 
\beta_2 & = &  \frac{2\eps^2}{1 + 4\eps^2 t_1/k} + \left( \frac{k}{2} \log \frac{k}{2} - \left(\frac{k\log(k/2)}{4\eps^2} + t_2\right) \frac{2\eps^2}{1 + 4\eps^2 t_1/k} \right) \frac{4\eps^2}{k + 4\eps^2 t_1}, \nonumber \\ 
\beta_3 & = & \frac{2\eps^2}{1 + 4\eps^2 t_1/k} + \left( \frac{k}{2} \log \frac{k}{2} - \left(\frac{k\log(k/2)}{4\eps^2} + t_2\right) \frac{2\eps^2}{1 + 4\eps^2 t_1/k} \right) \frac{4\eps^2}{k + 4\eps^2 t_1}.  \label{eq:entropy-H-b}
\end{eqnarray}
By the same calculation we can obtain the following equation for $H_2$.
\begin{eqnarray}
H_2 & = & \beta_1 + \beta_2 U'_2 + \beta_3 Q'.  \label{eq:entropy-H-7}
\end{eqnarray}
Note that $U' = U'_1 + U'_2$. Combining (\ref{eq:entropy-H-65}) and (\ref{eq:entropy-H-7}) we have 
\begin{eqnarray}
H_1 + H_2 & = & 2\beta_1 + \beta_2 U' + 2\beta_3 Q'.  \label{eq:entropy-H-9}
\end{eqnarray}
It is easy to verify that Equations (\ref{eq:entropy-H-6}) and (\ref{eq:entropy-H-9}) are linearly independent: by direct calculation (notice that $r_1, r_2$ are lower order terms) we obtain $\alpha_2 = \frac{\eps^2}{2} (1 \pm o(1))$ and $\alpha_3 = 3\eps^2 (1 \pm o(1)$. Therefore $\alpha_2 / \alpha_3 = (1 \pm o(1))/6$. Similarly we can obtain $\beta_2 / 2\beta_3 = (1 \pm o(1))/2$. Therefore the two equations are linearly independent. Furthermore, we can compute all the coefficients $\alpha_1, \alpha_2, \alpha_3, \beta_1, \beta_2, \beta_3$ up to a $(1 \pm o(\eps))$ factor. Thus if we have $\sigma \eps$ additive approximations of $H_0, H_1, H_2$ for a sufficient small constant $\sigma$, then we can estimate $U'$ (and thus $U$) up to an additive error of $\sigma' /\eps$ for a sufficiently small constant $\sigma'$ by Equation (\ref{eq:entropy-H-6}) and (\ref{eq:entropy-H-9}), and therefore \thresh. This completes the proof. 
\end{proof}

\subsection{$\ell_p$ for any constant $p \geq 1$}
Consider an $n$-dimensional vector $x$ with integer entries. 
It is well-known that for a vector $v$ of $n$ i.i.d. $N(0,1)$ random variables 
that $\langle v, x \rangle \sim N(0, \|x\|_2^2)$. Hence, for any real $p > 0$, 
${\bf E}[|\langle v, x \rangle|^p] = \|x\|_2^p G_p$, where $G_p > 0$ is the $p$-th moment of the standard half-normal 
distribution (see \cite{gaussian} 
for a formula for these moments in terms of confluent hypergeometric functions). 
Let $r = O(\eps^{-2})$, and $v^1, \ldots, v^r$ be independent $n$-dimensional 
vectors of i.i.d. $N(0,1)$ random variables. Let $y_j = \langle v^j, x \rangle / G_p^{1/p}$, so that 
$y = (y_1, \ldots, y_r)$. By Chebyshev's inequality for $r = O(\eps^{-2})$ sufficiently large, 
$\|y\|_p^p = (1 \pm \eps/3)\|x\|_2^p$ with probability at least $1-c$ for an arbitrarily small constant $c > 0$. 

We thus have the following reduction which shows that estimating $\ell_p$ up to a $(1+\eps)$-factor 
requires communication complexity ${\Omega}(k/\eps^2)$ for any $p > 0$. Let the $k$ 
parties have respective inputs $x^1, \ldots, x^k$, and let $x = \sum_{i=1}^k x^i$. The parties 
use the shared randomness to choose shared vectors $v^1, \ldots, v^r$ as described above. 
For $i = 1, \ldots, k$ and $j = 1, \ldots, r$, let $y^i_j = \langle v^j, x^i \rangle/G_p^{1/p}$, so that 
$y^i = (y^i_1, \ldots, y^i_r)$. Let $y = \sum_{i=1}^k y^i$. By the above, $\|y\|_p^p = (1 \pm \eps/3)\|x\|_2^p$ 
with probability at least $1-c$ for an arbitrarily small constant $c > 0$. We note that the entries
of the $v^i$ can be discretized to $O(\log n)$ bits, changing the $p$-norm of $y$ by only a $(1 \pm O(1/n))$ factor,
which we ignore. 

Hence, given a randomized protocol for estimating $\|y\|_p^p$ up to a $(1+\eps/3)$ factor with probability 
$1-\delta$, and given that the 
parties have respective inputs $y^1, \ldots, y^k$, this implies a randomized protocol for estimating $\|x\|_2^p$ 
up to a $(1 \pm \eps/3) \cdot (1 \pm \eps/3) = (1 \pm \eps)$ factor with probability at least $1-\delta-c$, 
and hence a protocol for estimating $\ell_2$ up to a $(1 \pm \eps)$ factor with this probability. The communication 
complexity of the protocol for $\ell_2$ is the same as that for $\ell_p$. By our communication 
lower bound for estimating $\ell_2$ (in fact, for estimating $F_2$ in which all coordinates of $x$ 
are non-negative), 
this implies the following theorem. 

\begin{theorem} 
The randomized communication complexity of approximating the $\ell_p$-norm, $p \geq 1$, up to a factor of 
$1+\eps$ with constant probability, is ${\Omega}(k/\eps^2)$. 
\end{theorem}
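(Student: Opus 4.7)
The plan is to reduce from the $F_2$ lower bound established earlier to $\ell_p$ for arbitrary $p \ge 1$, using a Gaussian sketch. The key observation is that for any vector $x$, the inner product $\langle v, x\rangle$ with a vector $v$ of i.i.d.\ $N(0,1)$ entries is distributed as $N(0,\|x\|_2^2)$, so its $p$-th absolute moment equals $G_p \|x\|_2^p$ for a known universal constant $G_p > 0$. This allows a linear sketch with $r = \Theta(1/\eps^2)$ Gaussian vectors that converts an $\ell_2$ estimation problem into an $\ell_p$ estimation problem while preserving the distributed structure.

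First, I would have the $k$ sites use the public coin to sample $r = \Theta(1/\eps^2)$ shared Gaussian vectors $v^1, \ldots, v^r \in \mathbb{R}^n$. Given respective inputs $x^1, \ldots, x^k$, each site $i$ locally computes the $r$-dimensional sketch $y^i$ with coordinates $y^i_j = \langle v^j, x^i\rangle / G_p^{1/p}$. By linearity of the inner product, $y := \sum_{i=1}^k y^i$ has coordinates $y_j = \langle v^j, x\rangle / G_p^{1/p}$ where $x = \sum_i x^i$. Each $|y_j|^p$ is an independent random variable with mean exactly $\|x\|_2^p$, so $\E[\|y\|_p^p] = r \cdot \|x\|_2^p$. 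A Chebyshev argument (using that $|y_j|^p$ has finite second moment given that $p$ is a constant) shows that for $r$ large enough, $\|y\|_p^p = (1 \pm \eps/3) \cdot r \cdot \|x\|_2^p$ with probability at least $1-c$ for an arbitrarily small constant $c$. Hence a $(1+\eps/3)$-approximation to $\|y\|_p^p$ yields a $(1+\eps)$-approximation to $\|x\|_2^p$, and therefore to $\|x\|_2$ since $p$ is constant.

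Running the hypothetical $\ell_p$ protocol on $(y^1, \ldots, y^k)$ produces a $(1+\eps)$-approximation to the $\ell_2$-norm of $x = \sum_i x^i$ with success probability at least $1-\delta - c$. Specializing to the hard distribution used for the $F_2$ lower bound (where all $x^i$ are non-negative, so $\|x\|_2 = F_2^{1/2}$), this gives a $(1+O(\eps))$-approximation to $F_2$, and invoking Theorem~\ref{thm:F2} with $p=2$ immediately yields the claimed $\Omega(k/\eps^2)$ communication lower bound for $\ell_p$.

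The main technical subtleties are minor: one needs to verify the variance bound for $|y_j|^p$ (which is standard since the $p$-th moment of a half-normal is finite for any constant $p$), and to discretize each Gaussian entry to $O(\log n)$ bits so that the sketch entries are finite-precision; this only perturbs $\|y\|_p$ by a $(1 \pm 1/\mathrm{poly}(n))$ factor, which is absorbed into the $\eps/3$ slack. A last point to justify is that the public-coin Gaussians cost nothing in communication, which is standard in this model. No new ideas beyond Gaussian $2$-stability and the established $F_2$ lower bound are required.
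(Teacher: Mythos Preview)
Your proposal is correct and follows essentially the same approach as the paper: a Gaussian sketch with $r=\Theta(1/\eps^2)$ shared vectors, linearity to distribute the sketch across sites, Chebyshev concentration of $\|y\|_p^p$ around $\|x\|_2^p$, discretization of the Gaussians, and then invoking the $F_2$ lower bound. Your write-up is in fact slightly more careful about the normalization (noting $\E[\|y\|_p^p]=r\cdot\|x\|_2^p$), but the argument is otherwise identical.
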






\subsection*{Acknowledgements}
We would like to thank Elad Verbin for many helpful discussions, in particular, for helping us with the $F_0$ lower bound, which was discovered in joint conversations with him. We also thank Amit Chakrabarti and Oded Regev for helpful discussions, as well as the anonymous referees for useful comments. Finally, we thank the organizers of the Synergies in Lower Bounds workshop that took place in Aarhus for bringing the authors together. 

\bibliographystyle{abbrv}
\bibliography{paper}

\begin{thebibliography}{10}

\bibitem{gaussian}
\url{http://en.wikipedia.org/wiki/Normal_distribution}.

\bibitem{iitk}
Open problems in data streams and related topics.
\newblock \url{http://www.cse.iitk.ac.in/users/sganguly/data-stream-probs.pdf},
  2006.

\bibitem{alon96:_space}
N.~Alon, Y.~Matias, and M.~Szegedy.
\newblock The space complexity of approximating the frequency moments.
\newblock In {\em Proc. ACM Symposium on Theory of Computing}, 1996.

\bibitem{ABC09}
C.~Arackaparambil, J.~Brody, and A.~Chakrabarti.
\newblock Functional monitoring without monotonicity.
\newblock In {\em Proc. International Colloquium on Automata, Languages, and
  Programming}, 2009.

\bibitem{BO03}
B.~Babcock and C.~Olston.
\newblock Distributed top-k monitoring.
\newblock In {\em Proc. ACM SIGMOD International Conference on Management of
  Data}, 2003.

\bibitem{Bar-Yossef:02}
Z.~Bar-Yossef.
\newblock {\em The complexity of massive data set computations}.
\newblock PhD thesis, University of California at Berkeley, 2002.

\bibitem{BYJKS02}
Z.~Bar-Yossef, T.~S. Jayram, R.~Kumar, and D.~Sivakumar.
\newblock An information statistics approach to data stream and communication
  complexity.
\newblock {\em J. Comput. Syst. Sci.}, 68:702--732, June 2004.

\bibitem{Barak10}
B.~Barak, M.~Braverman, X.~Chen, and A.~Rao.
\newblock How to compress interactive communication.
\newblock In {\em Proc. ACM Symposium on Theory of Computing}, pages 67--76,
  2010.

\bibitem{bc09}
J.~Brody and A.~Chakrabarti.
\newblock A multi-round communication lower bound for gap hamming and some
  consequences.
\newblock In {\em IEEE Conference on Computational Complexity}, pages 358--368,
  2009.

\bibitem{bcrvw10}
J.~Brody, A.~Chakrabarti, O.~Regev, T.~Vidick, and R.~de~Wolf.
\newblock Better gap-hamming lower bounds via better round elimination.
\newblock In {\em APPROX-RANDOM}, pages 476--489, 2010.

\bibitem{CCKM10}
A.~Chakrabarti, G.~Cormode, R.~Kondapally, and A.~McGregor.
\newblock Information cost tradeoffs for augmented index and streaming language
  recognition.
\newblock In {\em Proc. IEEE Symposium on Foundations of Computer Science},
  pages 387--396, 2010.

\bibitem{CCM08}
A.~Chakrabarti, G.~Cormode, and A.~McGregor.
\newblock Robust lower bounds for communication and stream computation.
\newblock In {\em Proc. ACM Symposium on Theory of Computing}, pages 641--650,
  2008.

\bibitem{CJP08}
A.~Chakrabarti, T.~S. Jayram, and M.~Patrascu.
\newblock Tight lower bounds for selection in randomly ordered streams.
\newblock In {\em Proc. ACM-SIAM Symposium on Discrete Algorithms}, pages
  720--729, 2008.

\bibitem{Chakrabarti03}
A.~Chakrabarti, S.~Khot, and X.~Sun.
\newblock Near-optimal lower bounds on the multi-party communication complexity
  of set disjointness.
\newblock In {\em Proc. IEEE Conference on Computational Complexity}, pages
  107--117, 2003.

\bibitem{CKW12}
A.~Chakrabarti, R.~Kondapally, and Z.~Wang.
\newblock Information complexity versus corruption and applications to
  orthogonality and gap-hamming.
\newblock In {\em APPROX-RANDOM}, pages 483--494, 2012.

\bibitem{CR:11}
A.~Chakrabarti and O.~Regev.
\newblock An optimal lower bound on the communication complexity of
  gap-hamming-distance.
\newblock In {\em Proc. ACM Symposium on Theory of Computing}, 2011.

\bibitem{Chakrabarti01}
A.~Chakrabarti, Y.~Shi, A.~Wirth, and A.~Yao.
\newblock Informational complexity and the direct sum problem for simultaneous
  message complexity.
\newblock In {\em Proc. IEEE Symposium on Foundations of Computer Science},
  pages 270--278, 2001.

\bibitem{CG05}
G.~Cormode and M.~Garofalakis.
\newblock Sketching streams through the net: {D}istributed approximate query
  tracking.
\newblock In {\em Proc. International Conference on Very Large Data Bases},
  2005.

\bibitem{CGMR05}
G.~Cormode, M.~Garofalakis, S.~Muthukrishnan, and R.~Rastogi.
\newblock Holistic aggregates in a networked world: Distributed tracking of
  approximate quantiles.
\newblock In {\em Proc. ACM SIGMOD International Conference on Management of
  Data}, 2005.

\bibitem{CMY11}
G.~Cormode, S.~Muthukrishnan, and K.~Yi.
\newblock Algorithms for distributed functional monitoring.
\newblock {\em ACM Transactions on Algorithms}, 7(2):21, 2011.

\bibitem{CMYZ10}
G.~Cormode, S.~Muthukrishnan, K.~Yi, and Q.~Zhang.
\newblock Optimal sampling from distributed streams.
\newblock In {\em Proc. ACM Symposium on Principles of Database Systems}, 2010.
\newblock Invited to Journal of the ACM.

\bibitem{Cover:Thomas:91}
T.~Cover and J.~Thomas.
\newblock {\em Elements of Information Theory}.
\newblock John Wiley and Sons, Inc., 1991.

\bibitem{DR98}
P.~Duris and J.~D.~P. Rolim.
\newblock Lower bounds on the multiparty communication complexity.
\newblock {\em J. Comput. Syst. Sci.}, 56(1):90--95, 1998.

\bibitem{EJ08}
F.~Erg{\"u}n and H.~Jowhari.
\newblock On distance to monotonicity and longest increasing subsequence of a
  data stream.
\newblock In {\em Proc. ACM-SIAM Symposium on Discrete Algorithms}, pages
  730--736, 2008.

\bibitem{eghk99}
D.~Estrin, R.~Govindan, J.~S. Heidemann, and S.~Kumar.
\newblock Next century challenges: Scalable coordination in sensor networks.
\newblock In {\em MOBICOM}, pages 263--270, 1999.

\bibitem{feller:43}
W.~Feller.
\newblock Generalization of a probability limit theorem of cramer.
\newblock {\em Trans. Amer. Math. Soc}, 54(3):361--372, 1943.

\bibitem{GG07}
A.~G\'{a}l and P.~Gopalan.
\newblock Lower bounds on streaming algorithms for approximating the length of
  the longest increasing subsequence.
\newblock In {\em Proc. IEEE Symposium on Foundations of Computer Science},
  2007.

\bibitem{Ganguly11}
S.~Ganguly.
\newblock Polynomial estimators for high frequency moments.
\newblock {\em CoRR}, abs/1104.4552, 2011.

\bibitem{Ganguly12}
S.~Ganguly.
\newblock A lower bound for estimating high moments of a data stream.
\newblock {\em CoRR}, abs/1201.0253, 2012.

\bibitem{GMRZ11}
P.~Gopalan, R.~Meka, O.~Reingold, and D.~Zuckerman.
\newblock Pseudorandom generators for combinatorial shapes.
\newblock In {\em Proceedings of the 43rd annual ACM symposium on Theory of
  computing}, STOC '11, pages 253--262, New York, NY, USA, 2011. ACM.

\bibitem{Gronemeier09}
A.~Gronemeier.
\newblock Asymptotically optimal lower bounds on the nih-multi-party
  information complexity of the and-function and disjointness.
\newblock In {\em Symposium on Theoretical Aspects of Computer Science}, pages
  505--516, 2009.

\bibitem{GH09}
S.~Guha and Z.~Huang.
\newblock Revisiting the direct sum theorem and space lower bounds in random
  order streams.
\newblock In {\em Proc. International Colloquium on Automata, Languages, and
  Programming}, 2009.

\bibitem{HNO08}
N.~J.~A. Harvey, J.~Nelson, and K.~Onak.
\newblock Sketching and streaming entropy via approximation theory.
\newblock In {\em Proc. IEEE Symposium on Foundations of Computer Science},
  pages 489--498, 2008.

\bibitem{huang-communication}
Z.~Huang and K.~Yi.
\newblock Personal communication, 2012.

\bibitem{HYZ11}
Z.~Huang, K.~Yi, and Q.~Zhang.
\newblock Randomized algorithms for tracking distributed count, frequencies,
  and ranks.
\newblock {\em CoRR}, abs/1108.3413, 2011.

\bibitem{HYZ12}
Z.~Huang, K.~Yi, and Q.~Zhang.
\newblock Randomized algorithms for tracking distributed count, frequencies,
  and ranks.
\newblock In {\em Proceedings of the 31st symposium on Principles of Database
  Systems}, PODS '12, pages 295--306, New York, NY, USA, 2012. ACM.

\bibitem{indyk05:_optim_}
P.~Indyk and D.~Woodruff.
\newblock Optimal approximations of the frequency moments of data streams.
\newblock In {\em Proc. ACM Symposium on Theory of Computing}, 2005.

\bibitem{iw03}
P.~Indyk and D.~P. Woodruff.
\newblock Tight lower bounds for the distinct elements problem.
\newblock In {\em FOCS}, pages 283--288, 2003.

\bibitem{Jayram09}
T.~S. Jayram.
\newblock Hellinger strikes back: A note on the multi-party information
  complexity of and.
\newblock In {\em APPROX-RANDOM}, pages 562--573, 2009.

\bibitem{knpw11}
D.~M. Kane, J.~Nelson, E.~Porat, and D.~P. Woodruff.
\newblock Fast moment estimation in data streams in optimal space.
\newblock In {\em STOC}, pages 745--754, 2011.

\bibitem{Kane10}
D.~M. Kane, J.~Nelson, and D.~P. Woodruff.
\newblock An optimal algorithm for the distinct elements problem.
\newblock In {\em Proc. ACM Symposium on Principles of Database Systems}, pages
  41--52, 2010.

\bibitem{KCR06}
R.~Keralapura, G.~Cormode, and J.~Ramamirtham.
\newblock Communication-efficient distributed monitoring of thresholded counts.
\newblock In {\em Proc. ACM SIGMOD International Conference on Management of
  Data}, 2006.

\bibitem{eyal97:_commun_}
E.~Kushilevitz and N.~Nisan.
\newblock {\em Communication Complexity}.
\newblock Cambridge University Press, 1997.

\bibitem{MMN10}
F.~Magniez, C.~Mathieu, and A.~Nayak.
\newblock Recognizing well-parenthesized expressions in the streaming model.
\newblock In {\em Proc. ACM Symposium on Theory of Computing}, pages 261--270,
  2010.

\bibitem{MSDO05}
A.~Manjhi, V.~Shkapenyuk, K.~Dhamdhere, and C.~Olston.
\newblock Finding (recently) frequent items in distributed data streams.
\newblock In {\em Proc. IEEE International Conference on Data Engineering},
  2005.

\bibitem{Matousek:08}
J.~Matousek and J.~Vondr{\'a}k.
\newblock {\em The probabilistic method.}
\newblock Lecture Notes, 2008.

\bibitem{PVZ12}
J.~M. Phillips, E.~Verbin, and Q.~Zhang.
\newblock Lower bounds for number-in-hand multiparty communication complexity,
  made easy.
\newblock In {\em Proc. ACM-SIAM Symposium on Discrete Algorithms}, 2012.

\bibitem{Raz90}
A.~A. Razborov.
\newblock On the distributional complexity of disjointness.
\newblock In {\em Proc. International Colloquium on Automata, Languages, and
  Programming}, 1990.

\bibitem{Sherstov11}
A.~A. Sherstov.
\newblock The communication complexity of gap hamming distance.
\newblock {\em Electronic Colloquium on Computational Complexity (ECCC)},
  18:63, 2011.

\bibitem{TW11}
S.~Tirthapura and D.~P. Woodruff.
\newblock Optimal random sampling from distributed streams revisited.
\newblock In {\em The International Symposium on Distributed Computing}, pages
  283--297, 2011.

\bibitem{v11}
T.~Vidick.
\newblock A concentration inequality for the overlap of a vector on a large
  set, with application to the communication complexity of the
  gap-hamming-distance problem.
\newblock {\em Electronic Colloquium on Computational Complexity (ECCC)},
  18:51, 2011.

\bibitem{woodruff04:_optim_}
D.~Woodruff.
\newblock Optimal space lower bounds for all frequency moments.
\newblock In {\em Proc. ACM-SIAM Symposium on Discrete Algorithms}, 2004.

\bibitem{Yao77}
A.~C. Yao.
\newblock Probabilistic computations: Towards a unified measure of complexity.
\newblock In {\em Proc. IEEE Symposium on Foundations of Computer Science},
  1977.

\bibitem{YZ09}
K.~Yi and Q.~Zhang.
\newblock Optimal tracking of distributed heavy hitters and quantiles.
\newblock In {\em Proc. ACM Symposium on Principles of Database Systems}, 2009.

\end{thebibliography}

\iffull{
\appendix
\section{Proof for Observation~\ref{obs:first}}
\label{sec:independent}
We first show the rectangle property of private randomness protocols in the message passing model.
The proof is just a syntactic change to that in \cite{Bar-Yossef:02}, Section 6.4.1, which was designed for the blackboard model. The only difference between the blackboard model and the message-passing model is that in the blackboard model, if one speaks, everyone else can hear.

\begin{property}
\label{prop:rectangle}
Given a $k$-party private randomness protocol $\Pi$ on inputs in $\mathcal{Z}_1 \times \cdots \times \mathcal{Z}_k$ in the message passing model, for all $\{z_1, \ldots, z_k\} \in \mathcal{Z}_1 \times \cdots \times \mathcal{Z}_k$, and for all possible transcripts $\pi \in \{0,1\}^*$, we have
\begin{equation}
\label{eq:x-1}
\Pr_R [\Pi = \pi\ |\ Z_1 = z_1, \ldots, Z_k = z_k] = \prod_{i=1}^k \Pr_{R_i}[\Pi_i = \pi_i\ |\ Z_i = z_i],
\end{equation}
where $\pi_i$ is the part of transcript $\pi$ that player $i$ sees (that is, the concatenation of all messages sent from or received by $S_i$), and $R = \{R_1, \ldots, R_k\} \in \{\mathcal{R}_1 \times \ldots \times \mathcal{R}_k\}$ is the players' private randomness. Furthermore, we have
\begin{equation}
\label{eq:x-2}
\Pr_{\mu, R}[\Pi = \pi] = \prod_{i \in [k]} \Pr_{\mu, R_i}[\Pi_i = \pi_i].
\end{equation}
\end{property}

\begin{proof}
We can view the input of $P_i\ (i \in [k])$ as a pair $(z_i, r_i)$, where $z_i \in \mathcal{Z}_i$ and $r_i \in \mathcal{R}_i$. Let $\C_1(\pi_1) \times \cdots \times \C_k(\pi_k)$ be the combinatorial rectangle containing all tuples $\{(z_1, r_1), \ldots, (z_k, r_k)\}$ such that $\Pi(z_1, \ldots, z_k, r_1, \ldots, r_k) = \pi$. 

For each $i \in [k]$, for each $z_i \in \mathcal{Z}_i$, let $\C_i(z_i, \pi_i)$ be the projection of $\C_i(\pi_i)$ on pairs of the form $(z_i, *)$, and $\C_i(z_i)$ be the collection of all pairs of the form $(z_i, *)$. Note that $\abs{\C_i(z_i, \pi_i)}/\abs{\C_i(z_i)} = \Pr_{R_i}[\Pi_i = \pi_i\ |\ Z_i = z_i]$. If each player $P_i$ chooses $r_i$ uniformly at random from $\mathcal{R}_i$, then the transcript will be $\pi$ if and only if $(z_i, r_i) \in \C_i(z_i, \pi_i)$. Since the choices of $R_1, \ldots, R_k$ are all independent, it follows that $\Pr_R[\Pi = \pi\ |\ Z_1 = z_1, \ldots, Z_k = z_k] = \prod_{i \in [k]} \Pr_{R_i}[\Pi_i = \pi_i\ |\ Z_i = z_i]$.

To show (\ref{eq:x-2}), we sum over all possible values of $Z = \{Z_1, \ldots, Z_k\}$. Let $Z_{-i} = \{Z_1, \ldots, Z_{i-1}, Z_{i+1}, Z_k\}$.
\begin{eqnarray*}
&&\Pr_{\mu, R}[\Pi = \pi] \\
& = & \sum_{z} (\Pr_R[\Pi = \pi\ |\ Z = z] \Pr_\mu [Z = z]) \\
& = & \sum_{z} \left( \prod_{i \in [k]} \Pr_{R_i}[\Pi_i = \pi_i \ |\ Z_i = z_i] \prod_{i \in [k]}\Pr_\mu[Z_i = z_i] \right) \quad (\text{by (\ref{eq:x-1}) and $Z_i$ are independent})\\
& = & \sum_{z} \prod_{i \in [k]}  \Pr_{\mu, R_i}[\Pi_i = \pi_i \wedge Z_i = z_i]\\
& = & \sum_{z_{-i}} \left(\left(\sum_{z_i} \Pr_{\mu, R_i}[\Pi_i = \pi_i \wedge Z_i = z_i] \right) \prod_{j \neq i} \Pr_{\mu, R_j}[\Pi_j = \pi_j \wedge Z_j = z_j] \right) \\
& = &\Pr_{\mu, R_i}[\Pi_i = \pi_i] \cdot \sum_{z_{-i}} \prod_{j \neq i} \Pr_{\mu, R_j}[\Pi_j = \pi_j \wedge Z_j = z_j] \\
& = & \prod_{i \in [k]} \Pr_{\mu, R_i} [\Pi_i = \pi_i].
\end{eqnarray*}
\end{proof}

Now we prove Observation~\ref{obs:first}. 
That is, to show
\begin{equation*}
\Pr_{\mu, R} [Z_1 = z_1, \ldots, Z_k = z_k\ |\ \Pi = \pi] = \prod_{i \in [k]} \Pr_{\mu, R_i}[Z_i = z_i\ |\ \Pi_i = \pi_i],
\end{equation*}

We show this using the rectangle property.  
\begin{eqnarray*}
&&\Pr_{\mu, R} [Z_1 = z_1, \ldots, Z_k = z_k\ |\ \Pi = \pi] \\
&=& \frac{\Pr_R[\Pi = \pi\ |\ Z_1 = z_1, \ldots, Z_k = z_k] \cdot \Pr_\mu [Z_1 = z_1, \ldots, Z_k = z_k]}{\Pr_{\mu, R}[\Pi = \pi]} \quad (\text{Bayes' theorem}) \\
&=& \frac{\prod_{i \in [k]} \Pr_{R_i}[\Pi_i = \pi_i\ |\ Z_i = z_i] \cdot \prod_{i \in [k]} \Pr_\mu[Z_i = z_i]}{\prod_{i \in [k]} \Pr_{\mu, R_i}[\Pi_i = \pi_i]} \quad (\text{by } (\ref{eq:x-1}), (\ref{eq:x-2}) \text{ and } Z_i \text{ are independent}) \\
&=& \prod_{i \in [k]} \frac{\Pr_{R_i}[\Pi_i = \pi_i\ |\ Z_i = z_i] \cdot \Pr_{\mu}[Z_i = z_i]}{\Pr_{\mu, R_i}[\Pi_i = \pi_i]} \\
&=& \prod_{i \in [k]} \Pr_{\mu, R_i}[Z_i = z_i\ |\ \Pi_i = \pi_i]  \quad (\text{Bayes' theorem})
\end{eqnarray*}

}

\end{document}